\documentclass[12pt  en  cite=authoryear]{elegantpaper}

\title{Non-finite Axiomatizability and Undecidability of $\mathsf{Cheq}$}
\author{Han Xiao 
 \thanks{None of the mathematical arguments, proofs presented in this paper were generated or developed using any AI technology. All such arguments, proofs were independently conceived and produced by the author. It should be emphasized that the non-finite axiomatizability result was established before the rise of AI, and its proof is self-contained. Regarding Theorem~\protect\ref{cheqund}, the undecidability, however, does use the AI-generated proof that Medvedev logic is undecidable, see \protect\cite{Ak26, Paw26}.
}}
\institute{\footnotesize The Tsinghua-UvA JRC for Logic, Tsinghua University\\
\texttt{sheropen@163.com}}

\date{}

\usepackage{array}
\usepackage{tikz}
\usepackage{pgfplots}
\usepackage{etoolbox}
\usetikzlibrary{calc}
\usepackage{xcolor}

\renewenvironment{thebibliography}[1]{%
  \begin{oldthebibliography}{#1}%
  \small
}%
{%
  \end{oldthebibliography}%
}

\usepackage{iftex}

\usepackage{tikz-cd}
\usepackage{tikz}
\usepackage{amsmath,amssymb,xcolor,amsthm}
\usetikzlibrary{decorations.pathmorphing, decorations.markings}
\usepackage[all]{xy}
\usepackage{ifthen,xspace}
\usepackage{graphicx,color}
\usepackage{MnSymbol}
\usepackage{float}
\usepackage{hyperref}

\ifpdf
  \usepackage{underscore}         
  \usepackage[T1]{fontenc}        
\else
  \usepackage{breakurl}           
\fi

\begin{document}

\maketitle
\begin{abstract}
We prove that $\mathsf{Cheq}$ is not finitely axiomatizable, resolving a longstanding open problem in intermediate and modal logics. We further prove that the undecidability of Medvedev logic implies the undecidability of $\mathsf{Cheq}$.
\end{abstract}

\section{Introduction}
\label{sec:intro}
In 2003, the \emph{modal logic of chequered subsets} $\mathsf{L}_{\infty}$ was introduced by van Benthem, Guram Bezhanishvili, and Gehrke \cite[in particular, p.~343]{vBBG03}.   
Whether $\mathsf{L}_{\infty}$ is finitely axiomatizable is a longstanding open problem. 
This logic plays an important role in the study of modal logics of 
various topological spaces; see, for example, \cite{vBB07, AvBB03}.

Litak \cite{Lit04} later denoted its 
intermediate analogue by $\mathsf{Cheq}$. He proved that $\mathsf{Cheq}$ has disjunction property and validates the Scott axiom, while it does not validate the Kreisel-Putnam axiom and is not structurally complete. Most importantly, $\mathsf{Cheq}$ is contained in the Medvedev logic $\mathsf{Med}$, another well-known intermediate logic. In 2006, Fontaine improved the result by Maksimova, Skvortsov, and Shehtman and proved that the Medvedev logic is not even finitely axiomatizable over $\mathsf{Cheq}$. Analogues of this results for the logic of spiked Boolean algebras and the generalized Medvedev logics are given in \cite{LX24} and \cite{Xiao26}, respectively. In 2007, Shatrov claimed to prove the non-finitely axiomatizability of $\mathsf{Cheq}$ at the conference Algebraic and Topological Methods in Non-Classical Logics (TANCL 07) in Oxford, but it has never been published. Using Offner's result \cite{Off08}, Kuznetsov in \cite{Kuz19} examined Fontaine's and Shatrov's strategies and proposed an alternative solution as an open problem for proving the non-finitely axiomatizability of $\mathsf{Cheq}$. 
 
Beyond its role in the non-classical logics, $\mathsf{L}_{\infty}$ also arises in the study of \emph{modal logics of forcing}. This line of research, initiated by Hamkins and L{\"o}we \cite{HL08}, investigates the modal logic of the class of models of set theory with the forcing relation. In particular, determining the \emph{modal logics of c.c.c.\ forcing} has become a central problem in this area. Building on ideas of Inamdar, we proved in \cite{Xiao24} that, under an additional set theoretical assumption, the modal logic of c.c.c.\ forcing is contained in $\mathsf{S4.FPFA}$, the least modal companion of $\mathsf{Cheq}$.

\subsection{Outline of the paper}
The Section~\ref{sec:def} introduces the definitions and tools for the results of this paper. We introduce Medvedev logic $\mathsf{Med}$, Inamdar logic $\mathsf{Inam}$, and generalized Medvedev logic in Section~\ref{sec:relatedlogics}, and give a comparison of the modal companions of those logics in Theorem~\ref{thm:fpfatba} and Theorem~\ref{thm:fpfasba}. We then discuss a geometric perspective for $\mathsf{Cheq}$ in Section~\ref{sec:geo}. Following Kuznetsov's idea of applying Offner's edge-coloring result \cite{Off08} to $\mathsf{Cheq}$ from his unpublished note, we discuss the two strategies of Fontaine and Shatrov, as well as Kuznetsov's alternative solution. Finally, we give the self contained proof for the non-finite axiomatizability of $\mathsf{Cheq}$ in Theorem~\ref{thm:cheq}. We also prove that the undecidability of Medvedev logic implies the undecidability of $\mathsf{Cheq}$ in the Corollary of Theorem~\ref{thm:cheqmed}.

\section{Definitions}
\label{sec:def}

Throughout this paper, we assume familiarity with the basic theory of intermediate and modal logics. 
We write $\mathsf{IPC}$ for \emph{intuitionistic propositional calculus}. 
By an (intuitionistic Kripke) frame we mean a poset. We use the terms `frame' and `poset' interchangeably and all posets considered are finite.

Given a frame $\mathbf{F}$, we write $\mathsf{Log}(\mathbf{F})$ for the propositional logic consisting of all formulas valid on the frame $\mathbf{F}$; if
$\mathsf{L} \subseteq \mathsf{Log}(\mathbf{F})$,
we call $\mathbf{F}$ an \emph{$\mathsf{L}$-frame}. If $\mathcal{C}$ is a class of frames, we write $\mathsf{Log}(\mathcal{C})$ for the intermediate logic consisting of all formulas valid on every frame of $\mathcal{C}$; in this case, we say that $\mathcal{C}$ \emph{characterizes} the logic.
\subsection{Posets as Kripke frames} 

Let $\mathbf{F} = \langle F,\leq \rangle$ be a finite poset; we write $x<y$ if $x\leq y$ and $x\neq y$. An element $x \in F$ is called \emph{maximal} if there is no $y \in F$ such that $x < y$; it is called a \emph{top} (or \emph{greatest element}) if everything is below it. A top is necessarily unique.
We say that $\mathbf{F}$ is a frame \emph{with a top} if it contains a top.
If $x<y$, we call $y$ an \emph{immediate successor} of $x$ if there is no $z$ such that $x<z<y$. The \emph{branching degree} of $x$, denoted by $b(x)$, is defined as the number of immediate \mbox{successors of $x$}. 

The notation $x{\uparrow}$ (resp.\ $x{\downarrow}$) denotes the \emph{upset} $\{y\in F : x\leq y\}$ (resp.\ the \emph{downset} $\{y\in F : x\geq y\}$). A subframe $\mathbf{S}=\langle S, \leq \rangle$ of $\mathbf{F} = \langle F,\leq \rangle$ is called a \emph{generated subframe} if $S$ is upward closed in $F$; we say that $\mathbf{S}$ is \emph{generated by $x$} if $S = x{\uparrow}$. If $\mathbf{F}$ is generated by $x$, then $\mathbf{F}$ is called \emph{rooted} and $x$ the \emph{root} of $\mathbf{F}$. 
The \emph{depth} of a frame $\mathbf{F}$ is the length of the longest 
chain in $\mathbf{F}$. 
For $x \in F$, the \emph{depth of $x$}, denoted by $d(x)$, is the depth of the subframe 
generated by $x$.

Let $\mathbf{F}=\langle F,\le_0\rangle$ and 
$\mathbf{G}=\langle G,\le_1\rangle$ be posets. 
Their \emph{product} is the poset 
$\mathbf{F}\times\mathbf{G}
:= \langle F\times G,\le\rangle$, 
where the order $\le$ is defined coordinatewise by $(x,y)\le (x',y')$ if and only if $x\le_0 x'$ and $y\le_1 y'$.
For a poset $\mathbf{F}$, we write 
$\mathbf{F}^2 := \mathbf{F}\times\mathbf{F}$ and define inductively
$\mathbf{F}^n := \mathbf{F}^{\,n-1}\times\mathbf{F}$. Now let $\mathbf{F}$ be a finite frame with a top. Its \emph{topless frame} $\mathbf{F}^{-}$ is the frame obtained from $\mathbf{F}$ by removing its top element.
For $n\ge1$, the \emph{$n$-fold topless product} of $\mathbf{F}$ is $\mathbf{F}^{n-} := (\mathbf{F}^n)^{-}$, that is, the topless frame of the product $\mathbf{F}^n$.

We write $\mathbf{F}\oplus\mathbf{G}$ for their \emph{linear sum}, 
obtained by placing all elements of $\mathbf{G}$ strictly above all 
elements of $\mathbf{F}$. 
We define the iterated finite sum by recursion via $\mathbf{F}\times 1 := \mathbf{F}$ and $\mathbf{F}\times (n+1) := (\mathbf{F}\times n)\oplus \mathbf{F}$. 
The \emph{vertical sum} of $\mathbf{F}$ and $\mathbf{G}$ is obtained from the linear sum of $\mathbf{F}$ and $\mathbf{G}$ by identifying the top of $\mathbf{F}$ with the root of $\mathbf{G}$ (provided they exist).

\subsection{Morphisms} 
\begin{definition}
If $\mathbf{F}$ and $\mathbf{G}$ are frames and $f:F\to G$ is a map, we call $f$ a \emph{p-morphism} if
\begin{enumerate}
	\item  for all $x,y\in F$, if $x\leq y$, then $f(x)\leq f(y)$,
	\item  if $f(x)\leq u$, then there is some $y\geq x$ such that $f(y) = u$.
\end{enumerate}	

If $f$ is onto, then we call $\mathbf{G}$ a \emph{p-morphic image} of $\mathbf{F}$. 
\end{definition}
\begin{definition}
We call $f$ is a \emph{p-morphism} from a model $(\mathbf{F}, V_0)$ on a model $(\mathbf{G},V_1)$ if $f$ is a \emph{p}-morphism from $\mathbf{F}$ to $\mathbf{G}$ such that $x\in V_0(p)$ iff $f(x)\in V_1(p)$ for all $x\in \mathbf{F}$ and all variables $p$.	
\end{definition} 
 
\begin{theorem}[Folklore]
If $\mathbf{G}$ is a $p$-morphic image of $\mathbf{F}$, then $\mathbf{F}\vDash \varphi \text{ implies } \mathbf{G}\vDash \varphi$, for any formula $\varphi$.\\
If $f$ is a $p$-morphism from a model $\mathcal{M}_0$ to a model $\mathcal{M}_1$, then $x\vDash_{\mathcal{M}_0} \varphi$ iff $f(x)\vDash_{\mathcal{M}_1} \varphi$, for any $x$ and formula $\varphi$.
\end{theorem}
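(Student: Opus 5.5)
The model-level statement comes first, because the frame-level statement follows from it. Let $f:(\mathbf{F},V_0)\to(\mathbf{G},V_1)$ be a p-morphism of models. I will show, by induction on the complexity of $\varphi$, that $x\vDash_{\mathcal{M}_0}\varphi$ iff $f(x)\vDash_{\mathcal{M}_1}\varphi$ for every $x\in F$. I use the standard intuitionistic Kripke semantics on posets, with valuations assigning upsets.

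\textbf{Base cases.} For a propositional variable $p$ the claim is exactly the valuation condition on a model p-morphism. For $\bot$ it is trivial.

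\textbf{Conjunction and disjunction.} These clauses are evaluated pointwise, so they follow immediately from the induction hypothesis.

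\textbf{Implication.} This is the only case that uses the order conditions.
\begin{itemize}
\item Suppose $x\vDash\varphi\to\psi$, and let $u\geq f(x)$ with $u\vDash\varphi$. By the back condition (2) there is $y\geq x$ with $f(y)=u$. The induction hypothesis gives $y\vDash\varphi$. Since $x\leq y$, we get $y\vDash\psi$, and the induction hypothesis again gives $u\vDash\psi$.
\item Conversely, suppose $f(x)\vDash\varphi\to\psi$, and let $y\geq x$ with $y\vDash\varphi$. By monotonicity (1), $f(y)\geq f(x)$. The induction hypothesis gives $f(y)\vDash\varphi$, hence $f(y)\vDash\psi$, hence $y\vDash\psi$.
\end{itemize}
Negation is $\varphi\to\bot$, so it is covered by this case.

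\textbf{Frame-level statement.} Let $f:\mathbf{F}\to\mathbf{G}$ be a surjective p-morphism with $\mathbf{F}\vDash\varphi$. I argue by contraposition: suppose some valuation $V_1$ on $\mathbf{G}$ and some $u\in G$ have $u\nvDash\varphi$. Pull the valuation back by setting $V_0(p):=f^{-1}(V_1(p))$.
\begin{itemize}
\item $V_0(p)$ is an upset: if $x\in V_0(p)$ and $x\leq y$, then $f(x)\leq f(y)$ by monotonicity, and $V_1(p)$ is an upset, so $y\in V_0(p)$.
\item By construction, $f$ is then a p-morphism of models $(\mathbf{F},V_0)\to(\mathbf{G},V_1)$.
\end{itemize}
Surjectivity gives some $x$ with $f(x)=u$, and the model-level claim gives $x\nvDash\varphi$. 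This contradicts $\mathbf{F}\vDash\varphi$.

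The only points needing care are the implication step, where both the forth and back conditions are used, and the check that the pulled-back valuation consists of upsets. Neither is a real obstacle.
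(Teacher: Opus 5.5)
Your proof is correct. It is the standard argument: induction on $\varphi$ for the model-level claim, using the back condition in one direction of the implication case and monotonicity in the other, and then pulling the valuation back along the surjection for the frame-level claim. The paper gives no proof of its own and simply cites the result as folklore.
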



\subsection{Modal companions}
In the 1930s, G\"odel suggested a translation $\mathrm{T}$ which can embed $\mathsf{IPC}$ into the modal logic $\mathsf{S4}$ in \cite{God33}.

\begin{definition}
The \emph{G\"odel translation} $\mathrm{T}$ is the map defined by
\[
\begin{aligned}
\mathrm{T}(\bot)&=\bot,
&\qquad
\mathrm{T}(p)&=\Box p \quad (p\in\mathbf{Prop}),\\
\mathrm{T}(\varphi_0\wedge\varphi_1)
&=\mathrm{T}(\varphi_0)\wedge\mathrm{T}(\varphi_1),
&
\mathrm{T}(\varphi_0\vee\varphi_1)
&=\mathrm{T}(\varphi_0)\vee\mathrm{T}(\varphi_1),\\
\mathrm{T}(\varphi_0\to\varphi_1)
&=\Box\bigl(\mathrm{T}(\varphi_0)\to\mathrm{T}(\varphi_1)\bigr).
\end{aligned}
\]
\end{definition}
In the 1940s, McKinsey and Tarski proved that the G\"odel translation embeds $\mathsf{IPC}$ to $\mathsf{S4}$ in \cite[\S~5]{MT48}. In the 1950s, Dummett and Lemmon extend this result to intermediate logics and extensions of $\mathsf{S4}$ in \cite[Theorem 1]{DL59}. In the 1970s, Esakia \cite{Esa79-1,Esa79-2} developed the theory of Heyting algebras; this was independently done by Maksimova and Rybakov \cite{MR74} and also by Blok \cite{Blo76}.
\begin{definition}
A modal logic $\mathsf{M}\supseteq\mathsf{S4}$ is called a \emph{modal companion} of an intermediate logic $\mathsf{L}$, if the following holds: $\varphi \in \mathsf{L}$ iff  $\mathrm{T}(\varphi) \in \mathsf{M}$, for each intuitionistic formula $\varphi$. 	
\end{definition}

\begin{theorem}[Folklore; cf.\ {\cite[Proposition 7]{She90}}]
 For any intermediate logic \( \mathsf{L} \), both the least and the greatest modal companions exist. Let $\tau(\mathsf{L})$ be the least modal companion and 
\[\tau(\mathsf{L})=\mathsf{S4} + \{\mathrm{T}(\varphi) : \varphi \in \mathsf{L}\},
\] 
let $\sigma(\mathsf{L})$ be the greatest modal companion and 
\[\sigma(\mathsf{L}) =\mathsf{Grz} + \{\mathrm{T}(\varphi) : \varphi \in \mathsf{L}\}.
\]	
\end{theorem}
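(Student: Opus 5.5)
We need to show three things: that $\tau(\mathsf{L}) := \mathsf{S4} + \{\mathrm{T}(\varphi) : \varphi \in \mathsf{L}\}$ is a modal companion of $\mathsf{L}$; that $\sigma(\mathsf{L}) := \mathsf{Grz} + \{\mathrm{T}(\varphi) : \varphi \in \mathsf{L}\}$ is one too; and that every modal companion $\mathsf{M}$ satisfies $\tau(\mathsf{L}) \subseteq \mathsf{M} \subseteq \sigma(\mathsf{L})$.

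\textbf{Least companion.} Minimality is immediate. If $\mathsf{M}$ is any modal companion and $\varphi\in\mathsf{L}$, then $\mathrm{T}(\varphi)\in\mathsf{M}$. Since $\mathsf{M}\supseteq\mathsf{S4}$ is a normal logic, $\tau(\mathsf{L})\subseteq\mathsf{M}$.

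\textbf{Bracketing every companion.} First I use the Blok--Esakia correspondence between modal algebras and their Heyting algebras of open elements. For a modal logic $\mathsf{M}\supseteq \mathsf{S4}$, let $\rho(\mathsf{M}) := \{\varphi : \mathrm{T}(\varphi)\in\mathsf{M}\}$. I check that $\rho(\mathsf{M})$ is an intermediate logic, using the McKinsey--Tarski and Dummett--Lemmon results quoted above. Algebraically, $\rho(\mathsf{M})$ is the logic of the Heyting algebras $\mathcal{O}(\mathbf{B})$ of open elements of the interior algebras $\mathbf{B}$ validating $\mathsf{M}$. In these terms, $\mathsf{M}$ is a companion of $\mathsf{L}$ exactly when $\rho(\mathsf{M})=\mathsf{L}$.

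To get the upper bound $\mathsf{M}\subseteq\sigma(\mathsf{L})$, I invoke Esakia's theorem: $\rho(\mathsf{M}) = \rho(\mathsf{M}\oplus\mathsf{Grz})$, and the Grz-algebras are generated by their open elements. Now suppose $\psi\in\mathsf{M}$ but $\psi\notin\sigma(\mathsf{L})$. Take a Grz-algebra $\mathbf{B}$ validating $\sigma(\mathsf{L})$ and refuting $\psi$. Its open-element algebra $\mathcal{O}(\mathbf{B})$ validates $\mathsf{L}$. Esakia's theorem ensures that the free Boolean extension of $\mathcal{O}(\mathbf{B})$, namely $\mathbf{B}$ itself, is the \emph{largest} interior algebra with that Heyting skeleton in the relevant variety. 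Hence $\mathsf{Log}(\mathbf{B})$ contains no formulas beyond those forced by $\mathcal{O}(\mathbf{B})$ together with $\mathsf{Grz}$. Since $\mathsf{M}$ is a companion, every interior algebra whose skeleton validates $\mathsf{L}$ and which is generated by its opens validates $\mathsf{M}\oplus \mathsf{Grz}$. That is the step I expect to be the main obstacle: it needs the nontrivial Blok--Esakia fact that the Grz-variety is generated by algebras $B(H)$, the free Boolean extensions of Heyting algebras $H$, and that $B(H)\models \mathsf{M}$ whenever $H\models\rho(\mathsf{M})$. I would cite Blok \cite{Blo76} and Esakia \cite{Esa79-1} for this rather than reprove it. With that fact in hand, $\mathbf{B}=B(\mathcal{O}(\mathbf{B}))$ validates $\mathsf{M}$, so $\mathbf{B}\models\psi$, a contradiction.

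\textbf{Both logics are companions.} Since $\tau(\mathsf{L})\subseteq\sigma(\mathsf{L})$, it suffices to prove
\[
\rho(\sigma(\mathsf{L}))\subseteq\mathsf{L}\subseteq\rho(\tau(\mathsf{L})).
\]
The inclusion $\mathsf{L}\subseteq\rho(\tau(\mathsf{L}))$ is trivial. For the other inclusion, take $\varphi\notin\mathsf{L}$. Choose a Heyting algebra $H\models\mathsf{L}$ refuting $\varphi$, for instance the Lindenbaum algebra of $\mathsf{L}$. The free Boolean extension $B(H)$, equipped with the interior operator sending $b$ to the largest element of $H$ below it, is a Grz-algebra with $\mathcal{O}(B(H))\cong H$. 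The translation then gives $B(H)\models\mathrm{T}(\chi)$ iff $H\models\chi$. Consequently $B(H)\models\sigma(\mathsf{L})$ while $B(H)\not\models\mathrm{T}(\varphi)$. This shows $\mathrm{T}(\varphi)\notin\sigma(\mathsf{L})$, which completes the proof.
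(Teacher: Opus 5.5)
The gap is in your upper bound $\mathsf{M}\subseteq\sigma(\mathsf{L})$. Your other two parts are fine: the minimality of $\tau(\mathsf{L})$, and the inclusions $\rho(\sigma(\mathsf{L}))\subseteq\mathsf{L}\subseteq\rho(\tau(\mathsf{L}))$ via $B(H)$ for the Lindenbaum algebra $H$ of $\mathsf{L}$, give the standard argument. The paper itself offers no proof of this statement and only cites Shehtman.

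The final step of the upper bound rests on $\mathbf{B}=B(\mathcal{O}(\mathbf{B}))$, that is, on every $\mathsf{Grz}$-algebra being generated by its open elements. That is false. Let $\mathbf{B}$ be the powerset algebra of the Kripke frame $(\omega,\geq)$, where $x$ sees $y$ iff $x\geq y$. This frame is a partial order with no infinite strictly ascending chains, so $\mathbf{B}\models\mathsf{Grz}$. Its open elements are $\varnothing$, $\omega$ and the initial segments $\{0,\dots,n\}$. These generate only the finite and cofinite sets, so the set of even numbers lies outside $B(\mathcal{O}(\mathbf{B}))$.

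You also have the extremal property backwards. Any interior algebra $\mathbf{A}$ with $\mathcal{O}(\mathbf{A})\cong H$ contains $B(H)$ as the interior subalgebra generated by $H$. So $B(H)$ is the \emph{smallest} algebra with skeleton $H$, not the largest. As written, nothing transfers the refutation of $\psi$ from $\mathbf{B}$ to an algebra known to validate $\mathsf{M}$.

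The missing ingredient is Blok's lemma in relativized form: every $\mathsf{Grz}$-algebra $\mathbf{B}$ lies in the variety generated by its subalgebra $B(\mathcal{O}(\mathbf{B}))$. Equivalently, every variety of $\mathsf{Grz}$-algebras is generated by its members of the form $B(H)$. The unrelativized statement you quote, that ``the $\mathsf{Grz}$-variety is generated by the $B(H)$'', is not enough, because you need the refuting $B(H)$ to satisfy $H\models\mathsf{L}$. With the relativized lemma, $\mathbf{B}\not\models\psi$ gives $B(\mathcal{O}(\mathbf{B}))\not\models\psi$ with $\mathcal{O}(\mathbf{B})\models\mathsf{L}=\rho(\mathsf{M})$. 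Your fact that $B(H)\models\mathsf{M}$ whenever $H\models\rho(\mathsf{M})$ then finishes the argument.

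That fact is the easy half, not the deep one. For every $\mathbf{A}\models\mathsf{M}$, $B(\mathcal{O}(\mathbf{A}))$ is an interior subalgebra of $\mathbf{A}$. Moreover, $H\mapsto B(H)$ commutes with homomorphic images, subalgebras and products, because the interior satisfies $I\bigl(\bigwedge_i(\neg a_i\vee c_i)\bigr)=\bigwedge_i(a_i\to c_i)$. So the genuinely deep content sits exactly in the step you got wrong.

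An alternative repair uses only what the paper already states:
\begin{enumerate}
\item The identity $\rho(\mathsf{M}\oplus\mathsf{Grz})=\rho(\mathsf{M})$, which you cite, is easy: $B(\mathcal{O}(\mathbf{A}))$ is a $\mathsf{Grz}$-subalgebra of $\mathbf{A}$ with the same opens. Hence $\mathsf{M}\oplus\mathsf{Grz}$ is again a companion of $\mathsf{L}$.
\item The surjectivity half of the Blok--Esakia theorem then gives $\mathsf{M}\oplus\mathsf{Grz}=\sigma(\mathsf{L})$.
\item Therefore $\mathsf{M}\subseteq\sigma(\mathsf{L})$.
\end{enumerate}
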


\begin{theorem}
	\begin{enumerate}
		\item The map $\tau$ is an isomorphism of the lattice of intermediate logics into the lattice of \mbox{extensions of $\mathsf{S4}$}.
		\item(The Blok-Esakia theorem) The map $\sigma$ is an isomorphism from the lattice of intermediate logics onto the lattice of extensions of $\mathsf{Grz}$.
	\end{enumerate}
\end{theorem}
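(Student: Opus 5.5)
The plan is to obtain both parts from two standard facts: $\tau(\mathsf{L})$ and $\sigma(\mathsf{L})$ are modal companions of $\mathsf{L}$ (the Folklore theorem above), and the \emph{Esakia/Blok inverse} $\rho(\mathsf{M}) := \{\varphi : \mathrm{T}(\varphi)\in\mathsf{M}\}$ is an intermediate logic for every $\mathsf{M}\supseteq\mathsf{S4}$. The second fact comes from the McKinsey--Tarski/Dummett--Lemmon result: $\mathrm{T}$ commutes with substitution up to replacing $p$ by $\mathrm{T}(\psi)$, and $\rho(\mathsf{M})$ is closed under modus ponens because $\mathsf{S4}\vdash \mathrm{T}(\varphi)\wedge\Box(\mathrm{T}(\varphi)\to\mathrm{T}(\psi))\to\mathrm{T}(\psi)$. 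Being a companion of $\mathsf{L}$ means exactly $\rho(\mathsf{M})=\mathsf{L}$.

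\textbf{Part 1.} First I check that $\tau$ is order-preserving, which is immediate from the axiomatization $\tau(\mathsf{L})=\mathsf{S4}+\mathrm{T}[\mathsf{L}]$. Next I note that $\tau$ preserves joins, since
\[
\tau(\mathsf{L}_1\vee\mathsf{L}_2)=\mathsf{S4}+\mathrm{T}[\mathsf{L}_1\cup\mathsf{L}_2]=\tau(\mathsf{L}_1)\vee\tau(\mathsf{L}_2),
\]
where the inclusion of $\mathrm{T}$ of the MP-consequences of $\mathsf{L}_1\cup\mathsf{L}_2$ follows from the modus ponens fact above. Then I show injectivity together with order-reflection: $\rho\tau(\mathsf{L})=\mathsf{L}$, because $\tau(\mathsf{L})$ is a companion. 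Hence $\tau(\mathsf{L}_1)\subseteq\tau(\mathsf{L}_2)$ gives $\mathsf{L}_1=\rho\tau(\mathsf{L}_1)\subseteq\rho\tau(\mathsf{L}_2)=\mathsf{L}_2$. Finally, meets are preserved, and this is the delicate point. The inclusion $\tau(\mathsf{L}_1\cap\mathsf{L}_2)\subseteq\tau(\mathsf{L}_1)\cap\tau(\mathsf{L}_2)$ is trivial. For the reverse I use the standard semantic route: $\tau(\mathsf{L})$ is complete with respect to the class of S4-frames whose skeleton (quotient by clusters) is a descriptive $\mathsf{L}$-frame (Esakia/Maksimova--Rybakov). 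A frame for both $\tau(\mathsf{L}_1)$ and $\tau(\mathsf{L}_2)$ then decomposes, via rooted generated subframes, into $\mathsf{L}_1$-skeleton or $\mathsf{L}_2$-skeleton pieces. Alternatively, I can argue syntactically with the disjunction trick $\Box\alpha\vee\Box\beta$ applied to $\mathrm{T}$-images of distinct-variable copies. I expect this meet preservation to be the main obstacle, and I would lean on the cited \cite{She90} machinery for it.

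\textbf{Part 2 (Blok--Esakia).} Here I must show that $\sigma$ is a lattice isomorphism onto $\mathrm{NExt}(\mathsf{Grz})$. Monotonicity and injectivity go exactly as in Part 1, using $\rho\sigma=\mathrm{id}$. Surjectivity is the crux: for each $\mathsf{M}\supseteq\mathsf{Grz}$ I must prove $\mathsf{M}=\sigma(\rho(\mathsf{M}))$. The inclusion $\supseteq$ is clear. For $\subseteq$ I take $\varphi\notin\sigma(\rho\mathsf{M})$ and refute it on a suitable $\mathsf{M}$-algebra, arguing algebraically. $\mathsf{M}$ is complete with respect to its Grzegorczyk algebras $B$; let $H$ be the Heyting algebra of open elements of $B$. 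I then use Blok's lemma: every $\mathsf{Grz}$-algebra $B$ is generated, as a modal algebra, by its open elements. It follows that the subalgebra of the free Boolean extension $B(H)$ generated by $H$ equals $B$. Hence $\mathsf{M}$ coincides with the logic of the algebras $B(H)$ with $H\models\rho\mathsf{M}$, and each such $B(H)$ validates $\sigma(\rho\mathsf{M})$, which yields the claim. The key difficult step is Blok's lemma. I would prove it via finite subalgebras: for a finitely generated subalgebra I use the Grz axiom to show that every element is a Boolean combination of opens, by induction on modal depth using $\Box(\Box(p\to\Box p)\to p)\to p$ in its algebraic form. This is the main obstacle of the whole proof. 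Preservation of meets and joins then follows, because the map is a bijective order-isomorphism.
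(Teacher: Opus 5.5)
The paper gives no proof of this theorem; it is quoted background (Maksimova--Rybakov, Blok, Esakia). Your outline therefore has to stand on its own, and Part~2 does not.

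\textbf{The lemma you use in Part~2 is false.} Your ``Blok's lemma'' says that every $\mathsf{Grz}$-algebra $A$ is generated by its open elements, so that $A\cong B(\rho A)$. This fails already for finitely generated algebras.

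Take the $\mathsf{Grz}$-frame $(\omega,\geq)$, in which $n$ sees exactly $0,\dots,n$. It is a partial order with no infinite strictly ascending chain. Take its complex algebra $\mathcal{P}(\omega)$ with $\Box X=\{n:\{0,\dots,n\}\subseteq X\}$. Its open elements are $\emptyset$, $\omega$ and the initial segments $\{0,\dots,n\}$.

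For open $u,v$ one has $\Box(\neg u\vee v)=u\to v$. Hence the Boolean subalgebra generated by the opens of any $\mathsf{S4}$-algebra is already closed under $\Box$, so ``generated as a modal algebra'' means ``generated as a Boolean algebra''. Here that subalgebra is the finite--cofinite algebra, which misses the set $E$ of even numbers.

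Worse, the subalgebra generated by $E$ alone is a $1$-generated $\mathsf{Grz}$-algebra with exactly the same opens. It contains $\Box E=\{0\}$, $\Box(\{0\}\cup\neg E)=\{0,1\}$, $\Box(\{0,1\}\cup E)=\{0,1,2\}$, and so on. Yet $E$ is not a Boolean combination of these opens. So no induction on modal depth inside finitely generated subalgebras can prove your lemma.

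What is true, and is the actual content of Blok's lemma, is the weaker variety-level statement: a $\mathsf{Grz}$-algebra $A$ lies in the variety generated by its subalgebra $B(\rho A)$. Equivalently, every formula refuted on $A$ is already refuted on $B(\rho A)$. Proving this requires building a new refuting valuation with values in $B(\rho A)$, since the given values need not lie there. That is where the $\mathsf{Grz}$ axiom does its real work, and it is the missing idea in your Part~2.

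\textbf{The end of Part~2 must be rearranged.} Even with the correct lemma, two things need fixing.
\begin{enumerate}
\item You claim $\mathsf{M}$ is the logic of all $B(H)$ with $H\models\rho\mathsf{M}$. This silently assumes that every such $B(H)$ validates $\mathsf{M}$. That holds, because $B(\rho A)$ is a subalgebra of $A$ and $H\mapsto B(H)$ commutes with homomorphic images, subalgebras and products, but it must be argued.
\item Your closing remark that each $B(H)$ validates $\sigma(\rho\mathsf{M})$ only reproves the easy inclusion $\sigma(\rho\mathsf{M})\subseteq\mathsf{M}$. For $\mathsf{M}\subseteq\sigma(\rho\mathsf{M})$, apply the lemma to a $\sigma(\rho\mathsf{M})$-algebra $A$ refuting $\varphi$. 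Then $\rho A\models\rho\sigma\rho\mathsf{M}=\rho\mathsf{M}$, so $B(\rho A)$ is an $\mathsf{M}$-algebra refuting $\varphi$.
\end{enumerate}

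\textbf{Part~1, meets.} The meet step is left open, and your sentence about ``a frame for both $\tau(\mathsf{L}_1)$ and $\tau(\mathsf{L}_2)$'' starts from the wrong hypothesis. You need a finitely subdirectly irreducible $\mathsf{S4}$-algebra $A$ validating $\tau(\mathsf{L}_1\cap\mathsf{L}_2)$. Then $\rho A$ is well-connected. If it refuted some $\alpha\in\mathsf{L}_1$ and some $\beta\in\mathsf{L}_2$, it would refute $\alpha\vee\beta'\in\mathsf{L}_1\cap\mathsf{L}_2$, where $\beta'$ is a variable-disjoint copy of $\beta$. Hence $A$ validates $\tau(\mathsf{L}_1)$ or $\tau(\mathsf{L}_2)$, which gives $\tau(\mathsf{L}_1)\cap\tau(\mathsf{L}_2)\subseteq\tau(\mathsf{L}_1\cap\mathsf{L}_2)$.
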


If $\mathbf{P}$ is a finite poset, we say that $\mathbf{P}^{\bullet}$ is a \emph{thickening of $\mathbf{P}$} if $\mathbf{P}^{\bullet}$ is a finite partial pre-order, $\sim$ is the induced equivalence relation  and $\mathbf{P} = \mathbf{P}^{\bullet}/{\sim}$. Given a class $\mathcal{C}$ of finite posets, let $\mathcal{C}^{\bullet}$ be the class of thickenings of elements of it. We write $\mathsf{ML}(\mathcal{C})$ and $\mathsf{ML}(\mathcal{C}^\bullet)$ for the modal logic consisting of all modal formulas valid on every element of $\mathcal{C}$ and $\mathcal{C}^\bullet$, respectively.

\begin{theorem}[Esakia; cf.\ {\cite[Proposition 9]{She90}}]
	For any class $\mathcal{C}$ of finite posets, the modal logic of $\mathcal{C}$ is the greatest modal companion of its logic, that is, $\mathsf{ML}(\mathcal{C})=\sigma(\mathsf{Log}(\mathcal{C}))$. 
\end{theorem}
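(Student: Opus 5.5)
To show $\mathsf{ML}(\mathcal{C})=\sigma(\mathsf{Log}(\mathcal{C}))$, I would prove the two inclusions separately. The first step is to check that $\mathsf{ML}(\mathcal{C})$ is a modal companion of $\mathsf{Log}(\mathcal{C})$. The basic fact is the frame-level G\"odel translation lemma: for any poset $\mathbf{P}$, viewed both as an intuitionistic frame and as a reflexive transitive Kripke frame, and any intuitionistic formula $\varphi$,
\[
\mathbf{P}\vDash\varphi \iff \mathbf{P}\vDash \mathrm{T}(\varphi).
\]
I would prove this by an induction on $\varphi$ at the level of models. An intuitionistic valuation $V$ (upsets) is also a modal valuation, and $x\vDash_V\varphi$ iff $x\vDash_V\mathrm{T}(\varphi)$. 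For a modal valuation $U$, I would set $V(p)=\{x: x\vDash_U \Box p\}$, which is an upset, and check the same equivalence. Intersecting over $\mathbf{P}\in\mathcal{C}$ then gives $\varphi\in\mathsf{Log}(\mathcal{C})$ iff $\mathrm{T}(\varphi)\in\mathsf{ML}(\mathcal{C})$. Every finite poset is a $\mathsf{Grz}$-frame, so $\mathsf{ML}(\mathcal{C})\supseteq \mathsf{Grz}$. Since $\sigma(\mathsf{Log}(\mathcal{C}))=\mathsf{Grz}+\{\mathrm{T}(\varphi):\varphi\in\mathsf{Log}(\mathcal{C})\}$, this gives the inclusion $\sigma(\mathsf{Log}(\mathcal{C}))\subseteq\mathsf{ML}(\mathcal{C})$.

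For the reverse inclusion, I would use the Blok--Esakia theorem. $\mathsf{ML}(\mathcal{C})$ is an extension of $\mathsf{Grz}$, so $\mathsf{ML}(\mathcal{C})=\sigma(\mathsf{L})$ for a unique intermediate logic $\mathsf{L}$. Every extension $\sigma(\mathsf{L})$ of $\mathsf{Grz}$ is a modal companion of $\mathsf{L}$, since this is how $\sigma$ is defined as the greatest companion. The first step shows that $\mathsf{ML}(\mathcal{C})$ is a companion of $\mathsf{Log}(\mathcal{C})$. Its intuitionistic fragment $\{\varphi:\mathrm{T}(\varphi)\in \mathsf{ML}(\mathcal{C})\}$ is therefore both $\mathsf{L}$ and $\mathsf{Log}(\mathcal{C})$, so $\mathsf{L}=\mathsf{Log}(\mathcal{C})$. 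Hence $\mathsf{ML}(\mathcal{C})=\sigma(\mathsf{Log}(\mathcal{C}))$.

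An alternative without invoking Blok--Esakia is to note that $\sigma(\mathsf{L})$ is the greatest companion, and $\mathsf{ML}(\mathcal{C})$ is a companion. So $\mathsf{ML}(\mathcal{C})\subseteq\sigma(\mathsf{Log}(\mathcal{C}))$ follows immediately from maximality as stated in the earlier folklore theorem.

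I expect the main obstacle to be the modal-to-intuitionistic direction of the translation lemma. When the valuation $U$ is not upward closed, one must check that replacing $U(p)$ by its interior $\{x:x\vDash_U\Box p\}$ does not change the truth of $\mathrm{T}(\varphi)$. This holds because every variable occurrence in $\mathrm{T}(\varphi)$ is boxed. The rest is bookkeeping with the lattice isomorphisms already stated.
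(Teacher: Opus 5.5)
Your argument is correct, but note that the paper gives no proof of this statement: it simply cites Shehtman and attributes the result to Esakia. You instead derive it from the two results the paper states just before it, namely the folklore theorem on $\tau$ and $\sigma$ and the Blok--Esakia theorem. This has the merit of isolating the only statement-specific ingredient: the frame-level G\"odel translation lemma together with $\mathsf{Grz}\subseteq\mathsf{ML}(\mathcal{C})$. That second fact is exactly where finiteness is used. For the class of all posets, $\mathsf{ML}$ is $\mathsf{S4}$, which is a companion of $\mathsf{IPC}$ different from $\sigma(\mathsf{IPC})=\mathsf{Grz}$.

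You have, however, misplaced the difficulty. The translation lemma you flag as the main obstacle is routine. The real content is the inclusion $\mathsf{ML}(\mathcal{C})\subseteq\sigma(\mathsf{Log}(\mathcal{C}))$, which is a completeness claim for $\mathsf{Grz}+\{\mathrm{T}(\varphi):\varphi\in\mathsf{Log}(\mathcal{C})\}$ with respect to $\mathcal{C}$. For $\mathcal{C}$ the class of all finite posets, it is precisely the finite model property of $\mathsf{Grz}$.

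Both of your routes import this wholesale from the cited theorems, one via surjectivity of $\sigma$ and the other via maximality of $\sigma(\mathsf{L})$. They are not genuinely independent, since maximality of $\sigma(\mathsf{L})$ is itself usually derived from the Blok--Esakia theorem. Your ``alternative'' is simply the shorter write-up and suffices on its own. So what you have is a complete reduction to results stated in the paper, not an independent proof of Esakia's theorem.
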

\begin{theorem}[Zakharyaschev; cf.\ {\cite[Proposition 10]{She90}}]
	For any class $\mathcal{C}$ of finite posets, the modal logic of $\mathcal{C}^{\bullet}$ is the least modal companion of its logic, that is, $\mathsf{ML}(\mathcal{C}^{\bullet})=\tau(\mathsf{Log}(\mathcal{C}))$. 
\end{theorem}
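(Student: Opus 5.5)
To prove that $\mathsf{ML}(\mathcal{C}^{\bullet})=\tau(\mathsf{Log}(\mathcal{C}))$, I would establish the two inclusions separately. The argument relies on the Blok--Esakia machinery already recalled, on the previous theorem $\mathsf{ML}(\mathcal{C})=\sigma(\mathsf{Log}(\mathcal{C}))$, and on the standard fact that a thickening $\mathbf{P}^{\bullet}$ validates $\mathrm{T}(\varphi)$ iff its skeleton $\mathbf{P}=\mathbf{P}^{\bullet}/{\sim}$ validates $\varphi$.

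\emph{First inclusion, $\tau(\mathsf{Log}(\mathcal{C}))\subseteq\mathsf{ML}(\mathcal{C}^{\bullet})$.} Each thickening is a finite preorder, so it validates $\mathsf{S4}$. If $\varphi\in\mathsf{Log}(\mathcal{C})$ and $\mathbf{P}^\bullet\in\mathcal{C}^\bullet$ has skeleton $\mathbf{P}\in\mathcal{C}$, then $\mathrm{T}(\varphi)$ is valid on $\mathbf{P}^\bullet$. The reason is that every valuation of the form $\Box p$ is an upset of the preorder, hence a union of $\sim$-classes. Such upsets correspond bijectively to intuitionistic valuations on $\mathbf{P}$, and an induction on $\varphi$ shows that $\mathrm{T}(\varphi)$ holds at $x$ iff $\varphi$ holds at $[x]$. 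Since $\mathsf{ML}(\mathcal{C}^\bullet)$ is a normal logic closed under the rules of $\mathsf{S4}$, it contains $\mathsf{S4}+\{\mathrm{T}(\varphi):\varphi\in\mathsf{Log}(\mathcal{C})\}=\tau(\mathsf{Log}(\mathcal{C}))$.

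\emph{Second inclusion, $\mathsf{ML}(\mathcal{C}^\bullet)\subseteq\tau(\mathsf{Log}(\mathcal{C}))$.} This is the main obstacle. I would argue contrapositively. Suppose $\psi\notin\tau(\mathsf{L})$, where $\mathsf{L}=\mathsf{Log}(\mathcal{C})$. The goal is a thickening of a member of $\mathcal{C}$ that refutes $\psi$.

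The plan is to use Zakharyaschev's canonical formulas for $\mathsf{S4}$. First, write $\psi$ as equivalent over $\mathsf{S4}$ to a conjunction of canonical formulas $\alpha(\mathbf{Q},\mathfrak{D})$ built from finite rooted preorders $\mathbf{Q}$. Since $\psi\notin\tau(\mathsf{L})$, some conjunct $\alpha(\mathbf{Q},\mathfrak{D})$ is not in $\tau(\mathsf{L})$.

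Next, use the known characterization that $\tau(\mathsf{L})$ is axiomatized over $\mathsf{S4}$ by canonical formulas whose frames are skeleton-reducible to non-$\mathsf{L}$ posets. From this, deduce that the skeleton $\mathbf{Q}/{\sim}$ is an $\mathsf{L}$-frame, i.e.\ it refutes no $\varphi\in\mathsf{L}$.

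Because $\mathbf{Q}/{\sim}$ is a finite $\mathsf{L}$-frame and $\mathsf{L}=\mathsf{Log}(\mathcal{C})$, a Jankov--de Jongh argument shows that $\mathbf{Q}/{\sim}$ is a p-morphic image of a generated subframe of some $\mathbf{P}\in\mathcal{C}$. Here I use that the Jankov formula of $\mathbf{Q}/{\sim}$ is refuted on some member of $\mathcal{C}$, since it is not in $\mathsf{L}$.

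Finally, lift this to thickenings. Choose a thickening $\mathbf{P}^\bullet$ of $\mathbf{P}$ in which each point $x$ is replaced by a cluster whose size is at least the size of the largest cluster of $\mathbf{Q}$ lying over its image. Then the p-morphism $\mathbf{P}\to\mathbf{Q}/{\sim}$ lifts to a cofinal p-morphism from a generated subframe of $\mathbf{P}^\bullet$ onto $\mathbf{Q}$, obtained by mapping each cluster onto the corresponding cluster. Such a map refutes $\alpha(\mathbf{Q},\mathfrak{D})$, and hence $\psi$, on $\mathbf{P}^\bullet\in\mathcal{C}^\bullet$.

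The delicate point is checking that the closed-domain condition $\mathfrak{D}$ is respected when blowing up clusters. I expect this to hold because cluster-wise surjections do not create new immediate-successor configurations in the skeleton.
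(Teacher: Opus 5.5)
Your first inclusion is fine, but the second breaks at the step ``deduce that the skeleton $\mathbf{Q}/{\sim}$ is an $\mathsf{L}$-frame.'' From $\alpha(\mathbf{Q},\mathfrak{D},\bot)\notin\tau(\mathsf{L})$ you only learn that some $\tau(\mathsf{L})$-frame admits a cofinal \emph{sub}reduction onto $\mathbf{Q}$ satisfying the closed domain condition. $\mathbf{Q}$ itself need not be a $\tau(\mathsf{L})$-frame.

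The paper's own setting gives a counterexample. Let $\mathbf{F}_3$ be the three-fork with root $r$ and maximal points $m_1,m_2,m_3$. Send $(w_0,w_0)\mapsto r$, $(w_1,w_1)\mapsto m_1$, $(w_2,w_2)\mapsto m_2$ and $(w_1,w_2),(w_2,w_1)\mapsto m_3$, leaving the four depth-two points of $\mathbf{V}_2$ outside the domain. This is a cofinal subreduction of $\mathbf{V}_2$ onto $\mathbf{F}_3$. So with $\mathcal{C}=\{\mathbf{V}_n : n\in\omega\}$ we get $\alpha(\mathbf{F}_3,\emptyset,\bot)\notin\mathsf{ML}(\mathcal{C}^{\bullet})\supseteq\tau(\mathsf{Cheq})$.

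Yet $\mathbf{F}_3$ is not a $\mathsf{Cheq}$-frame. In a p-morphism $\mathbf{V}_n\to\mathbf{F}_3$, a depth-two point has only two proper successors, so it cannot be sent to $r$. Hence both endpoints of every edge of the cube get the same image, and all maximal points of $\mathbf{V}_n$ go to one $m_i$. Surjectivity then fails, since a preimage of $m_j$ lies below a maximal preimage of $m_j$.

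So your step silently assumes that a finite $\tau(\mathsf{L})$-frame refutes $\psi$, i.e.\ that $\tau(\mathsf{L})$ has the finite model property. That is exactly the nontrivial content of the theorem. Your Jankov--de Jongh plus blow-up argument only proves the easy half: every finite rooted $\tau(\mathsf{L})$-frame validates $\mathsf{ML}(\mathcal{C}^{\bullet})$.

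The fix is to push subreductions, not p-morphisms, through the skeleton.
Let $\pi\colon\mathbf{Q}\to\mathbf{Q}/{\sim}$ be the quotient and $\pi\mathfrak{D}=\{\pi(\mathfrak{d}):\mathfrak{d}\in\mathfrak{D}\}$. Suppose $g$ is a cofinal subreduction of an $\mathsf{S4}$-frame onto $\mathbf{Q}$ satisfying the closed domain condition for $\mathfrak{D}$. Then $\pi\circ g$ is one onto $\mathbf{Q}/{\sim}$ for $\pi\mathfrak{D}$, because $g(x{\uparrow})$ and $\mathfrak{d}{\uparrow}$ are upsets, hence unions of clusters.

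Let $\beta$ denote the intuitionistic canonical formula. Since $\alpha(\mathbf{Q}/{\sim},\pi\mathfrak{D},\bot)$ and $\mathrm{T}\beta(\mathbf{Q}/{\sim},\pi\mathfrak{D},\bot)$ are refuted by the same $\mathsf{S4}$-frames, you get $\beta(\mathbf{Q}/{\sim},\pi\mathfrak{D},\bot)\notin\mathsf{L}$. Hence some $\mathbf{P}\in\mathcal{C}$ has a cofinal subreduction $h$ onto $\mathbf{Q}/{\sim}$ satisfying the closed domain condition.

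Now blow up each $x\in\mathrm{dom}(h)$ into a cluster mapped bijectively onto $\pi^{-1}(h(x))$, keeping all other points as singletons. In your version the map was total on a generated subframe, so the closed domain condition was vacuous; here it is the real check. It holds because for $x\notin\mathrm{dom}(h)$ the new image of $x{\uparrow}$ is $\pi^{-1}(h(x{\uparrow}))$, while $\mathfrak{d}{\uparrow}=\pi^{-1}(\pi(\mathfrak{d}){\uparrow})$. So this thickening of $\mathbf{P}$ refutes $\alpha(\mathbf{Q},\mathfrak{D},\bot)$, and hence $\psi$.
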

The modal logic $\mathsf{L}_{\infty}$ is the greatest modal companion of $\mathsf{Cheq}$, that is $\mathsf{L}_{\infty}=\sigma(\mathsf{Cheq})$. 
\begin{definition}
The modal logic $\mathsf{S4.FPFA}$ is the least modal companion of $\mathsf{Cheq}$, that is $\mathsf{S4.FPFA}=\tau(\mathsf{Cheq})$.
\end{definition}
\begin{lemma}[Folklore]
	Let $\mathsf{L}$ be an intermediate logic, then $\mathsf{L}$ is finitely axiomatizable iff $\sigma(\mathsf{L})$ is finitely axiomatizable.
\end{lemma}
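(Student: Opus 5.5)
We prove the two directions separately, using the explicit description $\sigma(\mathsf{L})=\mathsf{Grz}+\{\mathrm{T}(\varphi):\varphi\in\mathsf{L}\}$ from the Folklore theorem on least and greatest companions, together with the Blok--Esakia theorem.

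For the forward direction, suppose $\mathsf{L}=\mathsf{IPC}+\{\varphi_1,\dots,\varphi_n\}$. We claim that
\[
\sigma(\mathsf{L})=\mathsf{Grz}+\mathrm{T}(\varphi_1)+\dots+\mathrm{T}(\varphi_n),
\]
which is finitely axiomatizable because $\mathsf{Grz}$ is $\mathsf{S4}$ plus one axiom. Call the right-hand side $\mathsf{M}$. The inclusion $\mathsf{M}\subseteq\sigma(\mathsf{L})$ is immediate. For the converse we must show $\mathrm{T}(\varphi)\in\mathsf{M}$ for every $\varphi\in\mathsf{L}$. Here we use the standard fact that the G\"odel translation commutes with substitution up to $\mathsf{S4}$-provable equivalence: $\mathrm{T}(\varphi[\psi/p])$ is equivalent to $\mathrm{T}(\varphi)[\mathrm{T}(\psi)/p]$, because $\mathrm{T}(\varphi)$ contains $p$ only under $\Box$ and $\Box\mathrm{T}(\psi)\leftrightarrow\mathrm{T}(\psi)$ holds in $\mathsf{S4}$. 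We also use that modus ponens is simulated: if $\mathrm{T}(\alpha)$ and $\Box(\mathrm{T}(\alpha)\to\mathrm{T}(\beta))$ are in $\mathsf{M}$, then so is $\mathrm{T}(\beta)$. Finally, $\mathrm{T}$ of every $\mathsf{IPC}$ theorem lies in $\mathsf{S4}$ by McKinsey--Tarski. An induction on the length of an $\mathsf{L}$-derivation of $\varphi$ from the $\varphi_i$ then gives $\mathrm{T}(\varphi)\in\mathsf{M}$. The step we expect to need the most care is this substitution lemma, since it is where the $\Box\mathrm{T}(\psi)\leftrightarrow\mathrm{T}(\psi)$ equivalence must be checked inside every subformula; everything else is routine.

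For the converse, suppose $\sigma(\mathsf{L})=\mathsf{Grz}+\{\chi_1,\dots,\chi_m\}$ for finitely many modal formulas. By the description of $\sigma(\mathsf{L})$, each $\chi_j$ is derivable from $\mathsf{Grz}$ together with finitely many formulas $\mathrm{T}(\varphi)$ with $\varphi\in\mathsf{L}$. Collecting these finitely many $\varphi$ into a finite set $\Phi\subseteq\mathsf{L}$, we obtain
\[
\sigma(\mathsf{L})=\mathsf{Grz}+\{\mathrm{T}(\varphi):\varphi\in\Phi\}.
\]
Put $\mathsf{L}'=\mathsf{IPC}+\Phi\subseteq\mathsf{L}$. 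By the forward direction, $\sigma(\mathsf{L}')=\mathsf{Grz}+\{\mathrm{T}(\varphi):\varphi\in\Phi\}=\sigma(\mathsf{L})$. Since $\sigma$ is injective by the Blok--Esakia theorem, $\mathsf{L}=\mathsf{L}'$, which is finitely axiomatized.

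A remark on compactness: the reduction from the $\chi_j$ to finitely many $\mathrm{T}$-axioms is just the finiteness of derivations, so it needs no extra argument. The only genuine input beyond syntax is the injectivity of $\sigma$.
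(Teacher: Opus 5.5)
Your argument is correct and is the standard one behind the result that the paper cites from \cite[Corollary~8]{She90} without giving its own proof: the forward direction is the identity $\sigma(\mathsf{IPC}+\Gamma)=\mathsf{Grz}+\{\mathrm{T}(\varphi):\varphi\in\Gamma\}$, which your substitution lemma and simulation of modus ponens handle properly, and the converse is finiteness of derivations plus injectivity of $\sigma$. Injectivity only needs that $\sigma(\mathsf{L})$ is a modal companion of $\mathsf{L}$, i.e.\ $\varphi\in\mathsf{L}$ iff $\mathrm{T}(\varphi)\in\sigma(\mathsf{L})$, so the full Blok--Esakia theorem is more than you need.
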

This is \cite[Corollary 8]{She90}, where the result is attributed to Maksimova. Thus, if we already have $\tau(\mathsf{L})$ is finitely axiomatizable, then $\sigma (\mathsf{L}) = \mathsf{Grz} + \tau(\mathsf{L})$ is and so $\mathsf{L}$ is finitely axiomatizable. As a consequence, it is enough to show that the intermediate logic $\mathsf{L}$ is not finitely axiomatizable in order to have all three logics $\mathsf{L}$, $\sigma(\mathsf{L})$, and $\tau(\mathsf{L})$ are not finitely axiomatizable.

\subsection{Friedman translation}
\begin{definition}
Let $\rho$ be any formula. For a formula $\varphi$, the \emph{Friedman translation} of $\varphi$ by $\rho$, $\varphi^{\rho}$, is defined inductively as follows:
\begin{itemize}
	\item $\varphi^\rho: = \varphi \vee \rho$, if $\varphi$ is atomic, 
	\item $(\varphi_0 \circ \varphi_1)^{\rho}:= \varphi_0^{\rho} \circ \varphi_1^{\rho}$, for $\circ\in \{\wedge,\vee, \to\}$.
\end{itemize}
\end{definition}

We now introduce the First Pruning Lemma in \cite{vDMKV86}. 
\begin{lemma}
	For a give Kripke model $\mathcal{M}$, let $\mathcal{M}^{\rho}$ be the submodel obtained from $\mathcal{M}$ by removing all points that force $\rho$. Then for any $x\in \mathcal{M}^{\rho}$ and any formula $\varphi$, $x \vDash_{\mathcal{M}}  \varphi^{\rho}$ iff $x \vDash_{\mathcal{M}^{\rho}} \varphi$.
\label{lem:pruning} 
\end{lemma}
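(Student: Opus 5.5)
We argue by induction on the complexity of $\varphi$, proving the equivalence simultaneously for all points $x$ of $\mathcal{M}^{\rho}$. The fact that drives the whole argument is that the set of points forcing $\rho$ is an upset, by persistence of intuitionistic forcing. So its complement, the domain of $\mathcal{M}^{\rho}$, is a downset of $\mathcal{M}$. We give $\mathcal{M}^{\rho}$ the restricted order and the restricted valuation, which is still persistent. Then for $x\in\mathcal{M}^{\rho}$, the successors of $x$ inside $\mathcal{M}^{\rho}$ are exactly those successors $y\ge x$ in $\mathcal{M}$ with $y\nVdash\rho$.

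For the base case, let $\varphi=p$ be atomic and $x\in\mathcal{M}^{\rho}$. Since $x\nVdash_{\mathcal{M}}\rho$, we have $x\Vdash_{\mathcal{M}} p\vee\rho$ iff $x\Vdash_{\mathcal{M}} p$, and the latter holds iff $x\Vdash_{\mathcal{M}^{\rho}} p$ because the valuation is restricted. For $\varphi=\bot$ we may either treat it as atomic, so that $\bot^\rho=\bot\vee\rho$, which fails at $x$ exactly as $\bot$ does, or keep $\bot^\rho=\bot$; both conventions work. The conjunction and disjunction cases follow immediately from the induction hypothesis, since forcing of $\wedge$ and $\vee$ is evaluated pointwise at $x$.

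The only real step is the implication case $\varphi=\varphi_0\to\varphi_1$. First we show that every point $y$ of $\mathcal{M}$ with $y\Vdash\rho$ forces $\psi^{\rho}$ for every formula $\psi$. This is a separate easy induction. Atoms hold at $y$ because $\rho$ holds there. Conjunction and disjunction pass through directly. For an implication, every successor of $y$ also forces $\rho$, so it forces the consequent by the induction hypothesis. (If $\bot^\rho=\bot$, this auxiliary claim needs the extra case that $\rho$ holds only where $\bot$ does; the convention $\bot^\rho=\bot\vee\rho$ avoids this.)

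With this in hand, $x\Vdash_{\mathcal{M}}\varphi_0^\rho\to\varphi_1^\rho$ means that for all $y\ge x$ in $\mathcal{M}$, $y\Vdash\varphi_0^\rho$ implies $y\Vdash\varphi_1^\rho$. At points $y$ forcing $\rho$ this condition is automatic by the auxiliary claim. At the remaining points, which are exactly the successors of $x$ in $\mathcal{M}^{\rho}$, the main induction hypothesis converts the condition into $y\Vdash_{\mathcal{M}^{\rho}}\varphi_0\Rightarrow y\Vdash_{\mathcal{M}^{\rho}}\varphi_1$. Taken over all such $y$, this is precisely $x\Vdash_{\mathcal{M}^{\rho}}\varphi_0\to\varphi_1$.

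The main obstacle is this implication case: it is the only place where successors of $x$ leave $\mathcal{M}^{\rho}$, and the auxiliary claim is exactly what neutralizes them.
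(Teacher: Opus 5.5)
Your proof is correct. The paper gives no proof of its own and only cites \cite{vDMKV86}; your induction on $\varphi$, with the implication case handled by the observation that every point forcing $\rho$ forces every $\psi^{\rho}$ (by persistence of $\rho$), is the standard argument for this lemma.

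One side remark should be corrected. Your claim that ``both conventions work'' holds only for the base case. The convention $\bot^{\rho}=\bot$ does not merely complicate the auxiliary claim; it makes the lemma false. Take $x<y$ with $\rho=q$, $V(q)=\{y\}$ and $V(p)=\emptyset$. Then $x\vDash_{\mathcal{M}^{\rho}}\neg p$, but $x\nvDash_{\mathcal{M}}(p\vee q)\to\bot$. So $\bot$ must be treated as atomic, with $\bot^{\rho}=\bot\vee\rho$. That is the convention your proof ultimately uses.
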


\subsection{The logic $\mathsf{Cheq}$}

Let $\mathbf{V}$ be the \emph{two-fork}, that is, the poset consisting 
of three elements $\{w_0,w_1,w_2\}$ such that $w_0$ is the root and $w_1$ and $w_2$ 
are incomparable maximal elements. Define recursively 
$\mathbf{V}_1 := \mathbf{V}$ and 
$\mathbf{V}_{n+1} := \mathbf{V}_n \times \mathbf{V}$,
where $\times$ denotes the usual product of posets. 
Every point $x$ of $\mathbf{V}_n$ can be associated with an $n$-tuple $(x_0, x_1, \ldots, x_{n-1})$, where $x_i \in \{w_0, w_1, w_2\}$. 
The \emph{logic $\mathsf{Cheq}$} is the intermediate logic characterized by 
the class $\{\mathbf{V}_n : n \in \omega\}$.

\begin{definition}
A poset $\mathbf{A}_n$ is called the \emph{finite partial function algebra} on $n$ elements if it consists of partial functions from $[n]$ to $\{1, 2\}$. For $a, b \in \mathbf{A}_n$, we define $a \leq b \in \mathbf{A}_n$ if and only if $a = b{\upharpoonright}\mathrm{dom}(a)$. For any point $a\in \mathbf{A}_n$, we associate the pair $(1_a, 2_a)$ where $1_a=a^{-1}(1)$ and $2_a=a^{-1}(2)$. 
\end{definition}

\begin{figure}[htbp]
\centering
\begin{tikzpicture}[
scale=0.35,
every node/.style={
    circle,
    fill=black,
    minimum size=3.5pt,
    inner sep=0pt
},
every label/.style={
    circle=none,
    fill=none,
    inner sep=1pt
},
edge/.style={
    line width=0.5pt,
    shorten >=3pt,
    shorten <=3pt
}
]

\begin{scope}[xshift=0cm]
    \node (a0)  at (0,1) {};
    \node (a1)  at (-2.45,4) {};
    \node (a2)  at (2.45,4) {};

    \draw[edge] (a0) -- (a1);
    \draw[edge] (a0) -- (a2);
\end{scope}

\begin{scope}[xshift=10cm]
    \node (b00)  at (0,1) {};

    \node (b01)  at (-3,3.5) {};
    \node (b10)  at (-1,3.5) {};
    \node (b20)  at (1,3.5) {};
    \node (b02)  at (3,3.5) {};

    \node (b11)  at (-3,6) {};
    \node (b21)  at (-1,6) {};
    \node (b12)  at (1,6) {};
    \node (b22)  at (3,6) {};

    \draw[edge] (b00) -- (b01);
    \draw[edge] (b00) -- (b10);
    \draw[edge] (b00) -- (b20);
    \draw[edge] (b00) -- (b02);

    \draw[edge] (b01) -- (b11);
    \draw[edge] (b01) -- (b21);

    \draw[edge] (b10) -- (b11);
    \draw[edge] (b10) -- (b12);

    \draw[edge] (b20) -- (b21);
    \draw[edge] (b20) -- (b22);

    \draw[edge] (b02) -- (b12);
    \draw[edge] (b02) -- (b22);
\end{scope}

\end{tikzpicture}

\caption{$\mathbf{V}_1$ and $\mathbf{V}_2$.}
\label{fig:C1C2}
\end{figure}

\begin{lemma}
$\mathbf{A}_n \cong \mathbf{V}_n$	
\label{AcongV}
\end{lemma}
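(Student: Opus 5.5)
The plan is to write down an explicit bijection and check that it preserves and reflects the order coordinatewise. Identify $[n]$ with $\{0,\dots,n-1\}$, so that a point of $\mathbf{V}_n$ is a tuple $(x_0,\dots,x_{n-1})$ with $x_i\in\{w_0,w_1,w_2\}$. Define $\Phi:\mathbf{V}_n\to\mathbf{A}_n$ by letting $\Phi(x)$ be the partial function with domain $\{i : x_i\neq w_0\}$ and $\Phi(x)(i)=j$ whenever $x_i=w_j$ for $j\in\{1,2\}$. In the pair notation of the definition, $\Phi(x)$ corresponds to $(1_{\Phi(x)},2_{\Phi(x)})=(\{i: x_i=w_1\},\{i:x_i=w_2\})$.

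The first step is bijectivity. The inverse $\Psi$ sends a partial function $a$ to the tuple with $x_i=w_0$ for $i\notin\mathrm{dom}(a)$ and $x_i=w_{a(i)}$ for $i\in\mathrm{dom}(a)$. Both composites are visibly the identity, since each point of $\mathbf{V}_n$ and each element of $\mathbf{A}_n$ is determined by a choice, for each $i<n$, among three options: $w_0,w_1,w_2$, respectively "undefined, $1$, $2$".

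The second step, which is the only real content, is order preservation and reflection. By the definition of the product order, $x\le y$ in $\mathbf{V}_n$ iff $x_i\le y_i$ in $\mathbf{V}$ for every $i$. In $\mathbf{V}$ we have $x_i\le y_i$ iff either $x_i=w_0$ or $x_i=y_i$. On the other side, $\Phi(x)\le\Phi(y)$ means $\Phi(x)=\Phi(y){\upharpoonright}\mathrm{dom}(\Phi(x))$. Equivalently, $\mathrm{dom}(\Phi(x))\subseteq\mathrm{dom}(\Phi(y))$ and the two functions agree on $\mathrm{dom}(\Phi(x))$. Unpacking this index by index, for each $i$ either $i\notin\mathrm{dom}(\Phi(x))$, i.e.\ $x_i=w_0$, or $i$ lies in both domains with equal values, i.e.\ $x_i=y_i\neq w_0$. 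This is exactly the coordinatewise condition, so $x\le y$ iff $\Phi(x)\le\Phi(y)$, and $\Phi$ is an order isomorphism.

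There is no genuine obstacle; the argument is a direct unwinding of definitions. The only point needing care is the restriction clause. One must observe that $a=b{\upharpoonright}\mathrm{dom}(a)$ implicitly forces $\mathrm{dom}(a)\subseteq\mathrm{dom}(b)$, since otherwise the restriction would have a smaller domain than $a$. This inclusion is what matches the requirement that $x_i\neq w_0$ implies $y_i=x_i\neq w_0$.
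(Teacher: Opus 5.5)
Your proof is correct and is essentially the paper's argument: your $\Phi$ is the inverse of the paper's map $f\colon\mathbf{A}_n\to\mathbf{V}_n$, and the order check is the same coordinatewise unwinding of $a=b{\upharpoonright}\mathrm{dom}(a)$. You are slightly more careful than the paper, which asserts only injectivity before concluding isomorphism, whereas you exhibit the inverse explicitly and explain why the restriction clause forces $\mathrm{dom}(a)\subseteq\mathrm{dom}(b)$.
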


\begin{proof}
Definite a map $f$ from $\mathbf{A}_n$ to $\mathbf{V}_n$ as follows: 

\[f(a)=x \text{ and } x_i=\begin{cases}
w_0, &  i \notin  \mathrm{dom}(a)\\
w_1, &  i \in 1_a\\
w_2, &   i \in 2_a\\
\end{cases}, \text{ for } 1\leq i \leq n.\]

Assume that $f(a)=x=(x_1,x_2,\ldots,x_n)$, $f(b)=y=(y_1,y_2,\ldots,y_n)$. Then $a \leq b$ in $\mathbf{A}_n$  iff  $a=b{\upharpoonright} \mathrm{dom}(a)$ iff $x_i\leq y_i$ iff $f(a)\leq f(b)$ in $\mathbf{V}_n$. Because $f$ is one-to-one from $\mathbf{A}_n$ to $\mathbf{V}_n$, then it is an isomorphism from $\mathbf{A}_n$ to $\mathbf{V}_n$, thus $\mathbf{A}_n\cong\mathbf{V}_n$ for any $n\in \omega$.
\end{proof}

\section{Related logics}
\label{sec:relatedlogics}
\subsection{Medvedev logic}
We first recall the \emph{Medvedev logic}, which might be one of most famous intermediate logics.
In 1932, Kolmogorov \cite{Kol32} suggested a constructive interpretation of intuitionistic logic as a calculus of problems. To make precise the description proposed by Kolmogorov, Medvedev established the foundational framework for the logic of finite problems in \cite{Med62}. We recall the definition of the \emph{Medvedev frame} and the corresponding \emph{Medvedev logic}.
\begin{definition}
For $n \ge 1$, the \emph{Medvedev frame on $n$ atoms} $\mathbf{P}_0(n)$ is defined by
$\mathbf{P}_0(n) 
:= \bigl\langle 
\{ X \subseteq [n] : X \neq \varnothing \}, 
\supseteq 
\bigr\rangle$.
The \emph{Medvedev logic} $\mathsf{Med}$ is the intermediate logic of all Medvedev frames, that is, $\mathsf{Med}:= \mathsf{Log}\bigl(\{ \mathbf{P}_0(n) : n \in \omega \}\bigr)$.
\end{definition}
\begin{figure}[htbp]
\centering
\begin{tikzpicture}[
scale=0.35,
every node/.style={
    circle,
    fill=black,
    minimum size=3.5pt,
    inner sep=0pt
},
edge/.style={
    line width=0.5pt,
    shorten >=3pt,
    shorten <=3pt
}
]

\begin{scope}[xshift=0cm] 
    \node (a1) at (0,1) {};
    
\end{scope}

\begin{scope}[xshift=8cm] 
    \node (b1) at (0,1) {};
    \node (b2) at (-2.45,4) {};
    \node (b3) at (2.45,4) {};

    \draw[edge] (b1) -- (b2);
    \draw[edge] (b1) -- (b3);
\end{scope}

\begin{scope}[xshift=16cm] 
    \node (c1) at (0,1) {};
    \node (c2) at (-2.5,3.5) {};
    \node (c3) at (0,3.5) {};
    \node (c4) at (2.5,3.5) {};
    \node (c5) at (-2.5,6) {};
    \node (c6) at (0,6) {};
    \node (c7) at (2.5,6) {};

    \draw[edge] (c1) -- (c2);
    \draw[edge] (c1) -- (c3);
    \draw[edge] (c1) -- (c4);
    \draw[edge] (c2) -- (c5);
    \draw[edge] (c2) -- (c6);
    \draw[edge] (c3) -- (c5);
    \draw[edge] (c3) -- (c7);
    \draw[edge] (c4) -- (c6);
    \draw[edge] (c4) -- (c7);
\end{scope}

\end{tikzpicture}
\caption{Medvedev frames.}
\label{fig:med}
\end{figure}

\begin{lemma}[Maksimova, Skvortsov, \& Shehtman; cf.\ {\cite[Lemma 4]{MSS79}}]
If $\mathbf{F}$ is a finite rooted frame with a top, then $\mathbf{F}$ is a $\mathsf{Med}$-frame.		
\label{lem:topmedframe}
\end{lemma}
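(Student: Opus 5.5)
The plan is to show that every finite rooted frame $\mathbf{F}$ with a top $t$ is a $p$-morphic image of a generated subframe of some Medvedev frame $\mathbf{P}_0(n)$. Since $p$-morphic images and generated subframes preserve validity (the Folklore theorem above, plus the standard fact for generated subframes), this gives $\mathsf{Med}\subseteq\mathsf{Log}(\mathbf{F})$.

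\textbf{Step 1 (reduction to trees with a top).} I first reduce to a tree-like shape. The topless frame $\mathbf{F}^{-}$ is finite and rooted, so its unravelling $\mathbf{T}$ (finite paths from the root, ordered by extension) is a finite tree that maps onto $\mathbf{F}^{-}$ by a $p$-morphism sending each path to its last point. Putting a new top above $\mathbf{T}$ and sending it to $t$ extends this to a $p$-morphism from $\mathbf{T}\oplus\mathbf{1}$ onto $\mathbf{F}$; this is routine to check because $t$ lies above everything. Since $p$-morphisms compose, it then suffices to treat a finite tree $\mathbf{T}$ with a top added. By duplicating subtrees, which is again a $p$-morphic preimage, I may also assume that every non-leaf node of $\mathbf{T}$ has at least two children, and in fact as many copies of each child subtree as I like.

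\textbf{Step 2 (the map from a Medvedev frame).} Let the atoms be the leaves of $\mathbf{T}$, say $n$ of them. Define $f$ on $\mathbf{P}_0(n)$ by
\[
f(X)=\begin{cases} t, & |X|=1,\\ \text{the greatest common ancestor of } X \text{ in } \mathbf{T}, & |X|\ge 2.\end{cases}
\]
Monotonicity is immediate, because $X\supseteq Y$ implies $\mathrm{gca}(X)\le\mathrm{gca}(Y)$, and singletons are sent to the top. Surjectivity is also immediate once every internal node has at least two children.

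\textbf{Step 3 (the back condition, the main obstacle).} The back condition is the crux. Suppose $f(X)\le u$ with $u$ strictly above $\mathrm{gca}(X)$. I need some $Y\subseteq X$ with $f(Y)=u$. This requires $X$ to contain two leaves lying under different children of $u$, and a plain $\mathrm{gca}$ map does not guarantee this. I will not get the $p$-morphism from all of $\mathbf{P}_0(n)$ directly. Instead I restrict to the generated subframe of those $X$ that are \emph{saturated}: whenever $X$ has a leaf below a node $v\ge\mathrm{gca}(X)$, it has leaves below at least two distinct children of $v$, or $v$ is a leaf. After the duplication in Step 1, I will check three things: this family is upward closed in $\mathbf{P}_0(n)$ (that is, closed under taking nonempty subsets, as long as singletons are handled by the top clause); it is rooted at $[n]$; and on it the back condition holds by choosing $Y$ to consist of two leaves of $X$ below distinct children of $u$.

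If closure under subsets fails, which I suspect it might, the fallback is as follows. I would instead define $f(X)$ as the greatest node $u$ such that $X$ meets the leaf sets of at least two children of \emph{every} node on the path from the root to $u$, with $f(X)=t$ when no such branching survives. I would then redo the monotonicity and back checks for this modified map, with the back condition still the delicate point.
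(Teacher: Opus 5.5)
There is a genuine gap: the map of Step~2 has the wrong shape, and neither the saturation restriction nor the fallback repairs it.

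A $p$-morphism cannot increase depth: lift a chain $f(X)=u_1<\dots<u_k$ step by step with the back condition. In $\mathbf{P}_0(n)$ the depth of $X$ is $|X|$. So a two-element set must go to $t$ or to a point covered only by $t$, i.e.\ a leaf of $\mathbf{T}$.

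\begin{itemize}
\item Your map sends $\{\ell,\ell'\}$ to the internal node $w=\mathrm{gca}(\ell,\ell')$. Its only proper subsets are singletons, which go to $t$, so the leaf $\ell\geq w$ is never reached.
\item This also happens for saturated sets. Two sibling leaves form a saturated set, yet the back condition fails at $u=\ell$. That is exactly where your recipe ``take two leaves of $X$ below distinct children of $u$'' has nothing to pick.
\item For the same reason surjectivity fails: no leaf of $\mathbf{T}$ is the $\mathrm{gca}$ of two distinct leaves.
\item The restriction is vacuous anyway. A family closed under nonempty subsets that contains $[n]$ is all of $\mathbf{P}_0(n)$.
\item The fallback does not help. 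For $X=[n]$ every internal node meets your condition, so in general there is no greatest one. A pair $\{\ell,\ell'\}$ separated at the root is still sent to the root, again violating the depth bound.
\end{itemize}

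The missing idea is to let $|X|$ control the image. This is easiest with complements: $X\mapsto\overline{X}=[N]\setminus X$ identifies $\mathbf{P}_0(N)$ with the proper subsets of $[N]$ under $\subseteq$, rooted at $\emptyset$.

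Let $r,t$ be the root and top of $\mathbf{F}$ and $M=F\setminus\{r,t\}$. If $|F|\le 2$, then $\mathbf{F}$ is an image of $\mathbf{P}_0(1)$ or $\mathbf{P}_0(2)$, so assume $M\neq\emptyset$. Take $N=3|M|$ and $\lambda:[N]\to M$ with all fibres of size $3$. Set
\begin{itemize}
\item $f([N])=r$;
\item $f(X)=\max\lambda[\overline{X}]$ if $\lambda$ is injective on $\overline{X}$ and $\lambda[\overline{X}]$ is a chain;
\item $f(X)=t$ otherwise.
\end{itemize}

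Good complements are closed under subsets, so $f$ is monotone. For the back condition from $f(X)=v$ to $u>v$:
\begin{itemize}
\item if $u\in M$, add to $\overline{X}$ an atom labelled $u$; it is not already there, since all labels in $\overline{X}$ lie at or below $v<u$;
\item if $u=t$, add a second copy of a label already used, or two copies of any label if $X=[N]$.
\end{itemize}
The bound $|\overline{X}|\le|M|<N-1$ guarantees the new complement is still proper, i.e.\ $Y\neq\emptyset$.

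This gives a $p$-morphism from $\mathbf{P}_0(N)$ onto $\mathbf{F}$ directly, so the unravelling in Step~1 and the generated subframes are not needed. The paper itself gives no argument here; it only cites Maksimova, Skvortsov and Shehtman.
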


Inheriting such a tradition on the frame for logic of finite problems, the order in our \emph{finite Boolean algebra} here is reverse inclusion $\supseteq$. 

\begin{definition}
For $n \ge 1$, the \emph{Boolean algebra on $n$ atoms} $\mathbf{P}(n)$ is defined by $\mathbf{P}(n):=\bigl\langle 
\mathcal{P}([n]), \supseteq \bigr\rangle$, where $\mathcal{P}([n])$ denotes the powerset of $[n]$.
\end{definition}

\subsection{Inamdar logic}
In his Master's thesis \cite{Ina13}, Inamdar introduced a different but closely related class of structures that he called \emph{spiked Boolean algebras} and proved that their modal logic $\mathsf{S4.sBA}$ is an upper bound for the modal logic of c.c.c.\ forcing \cite[\S~5]{Ina13}.

\begin{definition}
If $\mathbf{P}(n)$ is a finite Boolean algebra with $n$ co-atoms $\{\{j\} : 1 \leq j \leq n\}$, we define a
\emph{spiked Boolean algebra} $\mathbf{S}(n)$ by adding $n$ additional nodes $\{\{js\} : 1 \leq j \leq n\}$ such that for
every $b \in \mathbf{P}(n)$, we have $b \leq js$ if and only if $b \leq j$. The \emph{Inamdar logic} is the intermediate logic of all spiked Boolean algebras, that is, $\mathsf{Inam}:= \mathsf{Log}\bigl(\{ \mathbf{S}(n) : n \in \omega \}\bigr)$.  	
\end{definition}
The spiked Boolean algebras for the Boolean algebras
with two, four and eight elements can be seen in Figure \ref{fig:sBa}. 

\begin{figure}[htbp]
\centering
\begin{tikzpicture}[
scale=0.275,
every node/.style={
    circle,
    fill=black,
    minimum size=3.5pt,
    inner sep=0pt
},
edge/.style={
    line width=0.5pt,
    shorten >=3pt,
    shorten <=3pt
}
]

\begin{scope}[xshift=0cm] 
    \node (a1) at (0,1) {};
    \node (a2) at (-2.5,5.2) {};
    \node (a3) at (2.5,5.2) {};

    \draw[edge] (a1) -- (a2);
    \draw[edge] (a1) -- (a3);
\end{scope}

\begin{scope}[xshift=10cm] 
    \node (b1) at (0,1) {};
    \node (b2) at (-2.45,4) {};
    \node (b3) at (2.45,4) {};
    \node (b4) at (0,7) {};
    \node (b5) at (-4,7) {};
    \node (b6) at (4,7) {};

    \draw[edge] (b1) -- (b2);
    \draw[edge] (b1) -- (b3);
    \draw[edge] (b2) -- (b4);
    \draw[edge] (b2) -- (b5);
    \draw[edge] (b3) -- (b4);
    \draw[edge] (b3) -- (b6);
\end{scope}

\begin{scope}[xshift=22cm] 
    \node (c1) at (0,1) {};
    \node (c2) at (-2.5,3.5) {};
    \node (c3) at (0,3.5) {};
    \node (c4) at (2.5,3.5) {};
    \node (c5) at (-2.5,6) {};
    \node (c6) at (0,6) {};
    \node (c7) at (2.5,6) {};
    \node (c8) at (0,8.5) {};
    \node (c9) at (-4.5,8.5) {};
    \node (c10) at (-2,8.5) {};
    \node (c11) at (4.5,8.5) {};

    \draw[edge] (c1) -- (c2);
    \draw[edge] (c1) -- (c3);
    \draw[edge] (c1) -- (c4);

    \draw[edge] (c2) -- (c5);
    \draw[edge] (c2) -- (c6);

    \draw[edge] (c3) -- (c5);
    \draw[edge] (c3) -- (c7);

    \draw[edge] (c4) -- (c6);
    \draw[edge] (c4) -- (c7);

    \draw[edge] (c5) -- (c8);
    \draw[edge] (c6) -- (c8);
    \draw[edge] (c7) -- (c8);

    \draw[edge] (c5) -- (c9);
    \draw[edge] (c6) -- (c10);
    \draw[edge] (c7) -- (c11);
\end{scope}

\end{tikzpicture}

\caption{Spiked Boolean algebras.}
\label{fig:sBa}
\end{figure}

\subsection{Generalized Medvedev logics}
A useful structural perspective on Medvedev logic is that it is characterized by \emph{topless products} of the $2$-chain. More precisely, the Medvedev frame $\mathbf{P}_0(n)$ is obtained from the $n$-fold product of the $2$-chain by removing its greatest element. 
From this perspective, it is natural to ask whether these constructions lead to new intermediate logics with similar structural properties. 
In particular, one may replace the $2$-chain by more general finite frames and consider the topless products obtained by removing the greatest element from their finite products. This idea was proposed by Nick Bezhanishvili, who asked whether the classical non-finite axiomatizability result for $\mathsf{Med}$ extends to such generalized constructions. 
More specifically, he conjectured that non-finite axiomatizability should already hold for topless products of arbitrary  non-singleton finite chains and, more generally, for topless products of arbitrary non-singleton finite rooted frames with a top.
These considerations lead naturally to a broader class of intermediate logics that we call \emph{generalized Medvedev logics} in \cite{Xiao26}. 
 
\begin{definition}
Let $\mathbf{F}$ be a non-singleton finite rooted frame with a top. We define $\mathsf{TLP}_{\mathbf{F}}
:=
\mathsf{Log}\bigl(\{ \mathbf{F}^{n-} : n \in \omega \}\bigr).$ The logic $\mathsf{TLP}_{\mathbf{F}}$ is called the 
\emph{generalized Medvedev logic over $\mathbf{F}$}.
\end{definition}

\subsection{Comparison of the relevant logics}
With the classes of finite partial function algebras, finite topless Boolean algebras and finite spiked Boolean algebras, we obtain corresponding intermediate and modal logics as listed in the following table~\ref{tab:comparison}. 
As usual, we call $\mathsf{Log}(\{\mathbf{A}_n: n\geq 1\})$ the logic $\mathsf{Cheq}$, $\mathsf{ML}(\{\mathbf{A}_n: n\geq 1\})$  the modal logic of chequered sets $\mathsf{L}_{\infty}$, and $\mathsf{Log}(\{\mathbf{P}_0(n): n\geq 1\})$ Medvedev logic $\mathsf{Med}$. 
We shall call $\mathsf{Inam}:=\mathsf{Log}(\{\mathbf{S}(n): n\geq 1\})$ Inamdar logic in recognition of the fact that Inamdar introduced the finite spiked Boolean algebras and $\mathsf{S4.sBA}$ in \cite{Ina13}.

We summarize in Table \ref{tab:comparison} the three classes of structures and related intermediate (modal) logics.

\begin{table}[h!]
    \centering
        \scalebox{0.85}{
    \begin{tabular}{@{}lccccc@{}}
         \toprule
        \textbf{Name} & \textbf{Frame} & \textbf{Intermediate Logic} & \multicolumn{2}{c}{\textbf{Modal Logic}} \\
        \cmidrule(l){4-5}
        & & & $\tau$ & $\sigma$ \\
        \midrule
        Finite topless Boolean algebra & $\mathbf{P}_0(n)$ & $\mathsf{Med}$ & $\mathsf{S4.tBa}$ & $\mathsf{tBa}$ \\
        Finite spiked Boolean algebra & $\mathbf{S}(n)$ & $\mathsf{Inam}$ & $\mathsf{S4.sBa}$ & $\mathsf{sBa}$ \\
        Finite partial function algebra & $\mathbf{A}_n$ & $\mathsf{Cheq}$ & $\mathsf{S4.FPFA}$ & $\mathsf{L}_{\infty}$ \\
        \bottomrule
    \end{tabular}
    }
    \caption{Comparison of three classes of structures}
    \label{tab:comparison}
\end{table}

Let $p_0$, $p_1$, $p_2$, $q_0$, and $q_1$ be distinct propositional variables and define

\noindent
\(
\displaystyle
\begin{aligned}
&\varphi_0
:= (p_0 \wedge \neg p_1)\wedge\Box(p_0 \wedge \neg p_1),
\quad
\varphi_1
:= (p_1 \wedge \neg p_0)\wedge\Box(p_1 \wedge \neg p_0),
\\
&\phi_i
:= \bigl(p_i \wedge \bigwedge_{j \neq i}\neg p_j\bigr)
\wedge
\Box\bigl(p_i \wedge \bigwedge_{j \neq i}\neg p_j\bigr), 0\leq i \leq 2
\\
&\Psi_s
:= \bigl(
\Diamond(\Box q_0 \wedge \Diamond\varphi_0 \wedge \Diamond\varphi_1)
\wedge
\Diamond(\Box q_1 \wedge \Diamond\varphi_0 \wedge \Diamond\varphi_1)\bigr)
\rightarrow
\Diamond\Box(q_0 \wedge q_1),
\\
&\Psi_t
:= \bigl(\wedge_{i=0,1,2}\Diamond\phi_i\bigr)
\rightarrow
\Diamond(\Diamond\phi_0 \wedge \Diamond\phi_1 \wedge \Box\neg\phi_2)
\text{\footnotemark}.
\end{aligned}
\)
\footnotetext{This formula is due to \cite[\S~2]{HLL15}.}

\begin{lemma}
	$\Psi_s \in \mathsf{S4.sBa}$, $\Psi_s \notin \mathsf{S4.FPFA}$, $\Psi_s \notin \mathsf{S4.tBa}$.
\label{lem:psis}	
\end{lemma}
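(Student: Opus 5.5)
The plan is to argue directly with frame semantics. By Zakharyaschev's theorem, $\mathsf{S4.sBa}$, $\mathsf{S4.FPFA}$ and $\mathsf{S4.tBa}$ are the modal logics of the thickenings of $\mathbf{S}(n)$, $\mathbf{A}_n$ and $\mathbf{P}_0(n)$ respectively. Every finite poset is a (trivial) thickening of itself. So for the two non-membership claims it suffices to falsify $\Psi_s$ on a single $\mathbf{A}_n$ and on a single $\mathbf{P}_0(n)$. For the membership claim we must verify $\Psi_s$ on an arbitrary thickening $\mathbf{S}(n)^\bullet$.

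\textbf{Validity on $\mathbf{S}(n)^\bullet$.} Fix a model on $\mathbf{S}(n)^\bullet$ and a point $x$ satisfying the antecedent. Then there are $y,z$ above $x$ with $\Box q_0$ at $y$, $\Box q_1$ at $z$, and $y$ and $z$ each seeing a $\varphi_0$-point and a $\varphi_1$-point. Since $\Box$ is reflexive, a $\varphi_0$-point $w$ forces $p_0\wedge\neg p_1$ throughout $w{\uparrow}$, and dually for $\varphi_1$.

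The key structural fact about $\mathbf{S}(n)$ is that every point other than a spike $js$ lies below the top $\varnothing$. Hence a $\varphi_0$-point and a $\varphi_1$-point cannot both see the top cluster, since the top would then force both $p_0\wedge\neg p_1$ and $p_1\wedge\neg p_0$. So one of them lies in a spike cluster. A spike cluster cannot itself see both a $\varphi_0$-point and a $\varphi_1$-point (the two kinds of point exclude each other and a cluster sees only itself). Therefore $y$ is not in a spike cluster, so $y$ sees the top cluster; similarly for $z$.

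Consequently every point of the top cluster forces $q_0$ (from $\Box q_0$ at $y$) and $q_1$ (from $\Box q_1$ at $z$). Because the top cluster's upset is itself, $\Box(q_0\wedge q_1)$ holds there. Since $x\le y\le$ top, we get $\Diamond\Box(q_0\wedge q_1)$ at $x$.

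\textbf{Countermodel on $\mathbf{A}_2\cong\mathbf{V}_2$ (Lemma~\ref{AcongV}).} Write points as pairs over $\{0,1,2\}$ with $0$ standing for $w_0$. Take $y=(1,0)$, whose maximal successors are $(1,1)$ and $(1,2)$, and $z=(2,0)$, whose maximal successors are $(2,1)$ and $(2,2)$. Put
\[
V(p_0)=\{(1,1),(2,1)\},\qquad V(p_1)=\{(1,2),(2,2)\},\qquad V(q_0)=y{\uparrow},\qquad V(q_1)=z{\uparrow}.
\]
Then $y$ and $z$ each see a $\varphi_0$-point and a $\varphi_1$-point, and we have $\Box q_0$ at $y$ and $\Box q_1$ at $z$. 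The root $(0,0)$ therefore satisfies the antecedent. Since $y{\uparrow}\cap z{\uparrow}=\varnothing$, no point forces $q_0\wedge q_1$, so the consequent fails at the root.

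\textbf{Countermodel on $\mathbf{P}_0(4)$.} Recall that the order is $\supseteq$, so the maximal points are the singletons. Take $y=\{1,2\}$ and $z=\{3,4\}$, and put
\[
V(p_0)=\{\{1\},\{3\}\},\qquad V(p_1)=\{\{2\},\{4\}\},\qquad V(q_0)=y{\uparrow},\qquad V(q_1)=z{\uparrow}.
\]
The upsets $y{\uparrow}$ and $z{\uparrow}$ are disjoint, and the root $[4]$ sees both $y$ and $z$, so $\Psi_s$ fails at $[4]$ by the same argument as before.

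The only delicate step is the validity argument. There the point is the dichotomy "spike cluster versus below the top", together with the observation that a spike cluster cannot see points of both kinds. I expect this case analysis to be the main thing to state carefully, because clusters in the thickening could otherwise hide extra successors.
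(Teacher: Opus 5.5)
Your proposal is correct and follows essentially the same route as the paper. For validity on thickenings of $\mathbf{S}(n)$, a point seeing both a $\varphi_0$-point and a $\varphi_1$-point cannot lie in a maximal spike cluster, so it lies below the top cluster, where $\Box(q_0\wedge q_1)$ then holds; for the refutations, you use the same countermodels on $\mathbf{A}_2$ and $\mathbf{P}_0(4)$ built from two points with disjoint upsets (your remark that one of the two witnesses must sit in a spike cluster is true but not needed).
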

\begin{proof}
	Let $\mathbf{S}$ be any finite spiked pre-Boolean algebra and $V$ be a valuation on $\mathbf{S}$. Assume that a point $a \in \mathbf{S}$, $a \vDash \Diamond(\Box q_0 \wedge \Diamond \varphi_0 \wedge \Diamond \varphi_1) \wedge \Diamond(\Box q_1 \wedge \Diamond \varphi_0 \wedge \Diamond \varphi_1)$, then there are two nodes $x, y\geq a$ and $x\vDash \Box q_0 \wedge \Diamond \varphi_0 \wedge \Diamond \varphi_1$, $y \vDash \Box q_1 \wedge \Diamond \varphi_0 \wedge \Diamond \varphi_1$, so there are two points $u, v\geq x$, $u \vDash \varphi_0$ and $v \vDash \varphi_1$. By the definition of $\varphi_0$ and $\varphi_1$, we have $u \neq v$ and they belong to two different clusters. In fact, they have no common successor. It follows that $x$ is not a spike. The same argument applies to $y$. Thus $x$ and $y$ belong to the corresponding Boolean algebra, and then we have $z \in \mathbf{S}$ and $z \geq x, y$. Since $x \vDash \Box q_0$ and $y \vDash \Box q_1$, we have $z \vDash \Box (q_0\wedge q_1)$. Since $a\leq z$, then $a \vDash \Diamond\Box (q_0\wedge q_1)$. It follows that $\Psi_s \in \mathsf{S4.sBa}$.

Let $\mathbf{A}_2$ be the finite partial function algebra on two elements. $V'$ be a valuation on $\mathbf{A}_2$ such that
Let $V'$ be a valuation on $\mathbf{A}_2$ such that
\[
\begin{aligned}
(\{1,2\},\emptyset)
&\vDash \varphi_0 \wedge q_0 \wedge \neg q_1,
&\qquad
(\{2\},\{1\})
&\vDash \varphi_1 \wedge q_0 \wedge \neg q_1,\\
(\emptyset,\{1,2\})
&\vDash \varphi_1 \wedge \neg q_0 \wedge q_1,
&
(\{1\},\{2\})
&\vDash \varphi_0 \wedge \neg q_0 \wedge q_1,
\end{aligned}
\]
and
\[
(\{2\},\emptyset)\vDash q_0,
\qquad
(\emptyset,\{2\})\vDash q_1.
\]

Then $(\{2\}, \emptyset) \vDash \Box q_0 \wedge \Diamond \varphi_0\wedge \Diamond \varphi_1$ and $(\emptyset, \{2\}) \vDash \Box q_1 \wedge \Diamond \varphi_0\wedge \Diamond \varphi_1$. But $(\emptyset, \emptyset)\vDash \Box \Diamond \lnot (q_0 \wedge q_1)$, thus $\Psi_s \notin \mathsf{S4.FPFA}$. 

Let $\mathbf{P}_0(4)$ be the topless Boolean algebra on four elements. $V''$ be a valuation on $\mathbf{P}_0(4)$ such that
\[
\begin{aligned}
\{1\}
&\vDash \varphi_0 \wedge q_0 \wedge \neg q_1,
&\qquad
\{2\}
&\vDash \varphi_1 \wedge q_0 \wedge \neg q_1,\\
\{3\}
&\vDash \varphi_1 \wedge \neg q_0 \wedge q_1,
&
\{4\}
&\vDash \varphi_0 \wedge \neg q_0 \wedge q_1,
\end{aligned}
\]
and
\[
\{1,2\}\vDash q_0,
\qquad
\{3,4\}\vDash q_1.
\]

Then $\{1,2\} \vDash \Box q_0 \wedge \Diamond \varphi_0\wedge \Diamond \varphi_1$ and $\{3,4\} \vDash \Box q_1 \wedge \Diamond \varphi_0\wedge \Diamond \varphi_1$. But $\{1,2,3,4\}\vDash \Box \Diamond \lnot (q_0 \wedge q_1)$, thus $\Psi_s \notin \mathsf{S4.tBa}$. 
\end{proof}


\begin{lemma}
$\Psi_t \in \mathsf{S4.tBa}$, $\Psi_t \notin \mathsf{S4.FPFA}$, $\Psi_t \notin \mathsf{S4.sBa}$.	
\label{lem:psit}
\end{lemma}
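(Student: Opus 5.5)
The plan mirrors the proof of Lemma~\ref{lem:psis}. Validity in $\mathsf{S4.tBa}$ is checked on arbitrary thickenings of the finite topless Boolean algebras $\mathbf{P}_0(n)$. Each failure is witnessed by a concrete valuation on a small partial function algebra and on a small spiked Boolean algebra. Since $\tau(\mathsf{L})=\mathsf{ML}(\mathcal{C}^{\bullet})$ for the relevant class $\mathcal{C}$, the thickened frames are exactly the ones to check for validity; for refutations a plain poset suffices, because it is a trivial thickening of itself.

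For $\Psi_t\in\mathsf{S4.tBa}$, fix a thickening of $\mathbf{P}_0(n)$, a valuation, and a point $a$ with $a\vDash\Diamond\phi_0\wedge\Diamond\phi_1\wedge\Diamond\phi_2$. Take $u_i\geq a$ with $u_i\vDash\phi_i$. Since $\phi_i$ forces $p_i\wedge\bigwedge_{j\neq i}\neg p_j$ on the whole upset, the three upsets of $u_0,u_1,u_2$ are pairwise disjoint. Viewed as subsets of $[n]$ under $\supseteq$, the projections $X_0,X_1,X_2$ of these points are therefore pairwise disjoint. Indeed, if $X_i\cap X_j\ni k$ then $\{k\}$ would lie above both; this is the key point where toplessness matters, because only the singletons are maximal. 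Now take the point $z$ whose projection is $X_0\cup X_1$, in any cluster representative. Then $z\geq a$, because $a$ corresponds to a superset of $X_0\cup X_1\cup X_2$. Also $z$ sees $u_0$ and $u_1$, since their projections are subsets of $X_0\cup X_1$. Every successor of $z$ corresponds to a nonempty subset $Y\subseteq X_0\cup X_1$. If some point $y\geq z$ satisfied $\phi_2$, then the maximal points above $y$, namely the singletons $\{k\}$ with $k\in Y$, would satisfy $p_2$. Each such $\{k\}$ also lies above $u_0$ or $u_1$, so it satisfies $\neg p_2$, a contradiction. Hence $z\vDash\Box\neg\phi_2$. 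One must be slightly careful that maximal clusters in the thickening still satisfy this, since $\phi_i$ includes $\Box$. I expect this step, together with the cluster bookkeeping, to be the main (mild) obstacle.

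For $\Psi_t\notin\mathsf{S4.FPFA}$, use $\mathbf{A}_2\cong\mathbf{V}_2$. Make one maximal point satisfy $\phi_0$, one satisfy $\phi_1$, and the two remaining maximal points satisfy $\phi_2$. The arrangement should ensure that every point strictly above the root which sees both the $\phi_0$-point and the $\phi_1$-point also sees a $\phi_2$-point, while the root itself sees a $\phi_2$-point. Concretely, put $\phi_0$ at $(\{1,2\},\emptyset)$ and $\phi_1$ at $(\emptyset,\{1,2\})$, and $\phi_2$ at the two mixed maximal points. The only point seeing both $(\{1,2\},\emptyset)$ and $(\emptyset,\{1,2\})$ is the root $(\emptyset,\emptyset)$, since each middle point sees exactly two maxima. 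The root sees $\phi_2$-points, so the consequent fails at the root while the antecedent holds.

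For $\Psi_t\notin\mathsf{S4.sBa}$, use $\mathbf{S}(2)$. Let the spikes $1s,2s$ satisfy $\phi_0$ and $\phi_1$, and let the top of $\mathbf{P}(2)$ satisfy $\phi_2$. Each non-root point sees at most one spike, so only the root sees both a $\phi_0$-point and a $\phi_1$-point. Since the root sees the $\phi_2$-point, $\Psi_t$ fails there.
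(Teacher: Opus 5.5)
Your proposal is correct and follows the paper's proof: $X_0\cup X_1$ serves as the witness for the consequent on (thickened) topless Boolean algebras, and you use the same valuation on $\mathbf{A}_2$, with $\phi_0,\phi_1$ at the two ``pure'' maximal points and $\phi_2$ at the two mixed ones. The only differences are minor: you show $X_0\cup X_1\vDash\Box\neg\phi_2$ directly instead of through the paper's iteration, and your $\mathbf{S}(2)$ countermodel (spikes for $\phi_0,\phi_1$, the top $\emptyset$ for $\phi_2$) is a smaller version of the paper's $\mathbf{S}(3)$ one that works equally well.
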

\begin{proof}
	Let $\mathbf{P}_0$ be any finite topless pre-Boolean algebra corresponding to proper subsets of a finite set $D$ and $V$ be a valuation on $\mathbf{P}_0$. Assume that a point $a \in \mathbf{P}_0$ and $a \vDash \bigwedge_{i=0,1,2}\Diamond \phi_i$, then there are three nodes $c_0, c_1,c_2 \geq a$ such that $c_j \vDash \phi_j$ and no two of $c_j$ have a join in $\mathbf{P}_0$. Thus $c_0 \cap c_2 =c_1\cap c_2=\emptyset$, it follows that $a < c_0 \cup c_1 \nleq c_2$. Let $c=c_0 \cup c_1$, then $c \vDash \Diamond \phi_0 \wedge \Diamond \phi_1$. If $c \vDash \Diamond \phi_2$, then $c \vDash  \bigwedge_{i=0,1,2}\Diamond \phi_i$, and by the above argument, we can find a new $c' >  c$, and $c' \vDash \Diamond \phi_0 \wedge \Diamond \phi_1$. Because $\mathbf{P}_0$ is finite, we will finally found a node $c_0 >a$ such that $c_0 \vDash \Diamond \phi_0 \wedge\Diamond \phi_1 \wedge \Box \lnot  \phi_2$. It follows that $\Psi_t \in \mathsf{S4.tBa}$.
	
	Let $\mathbf{A}_2$ be a finite partial function algebra on two elements. $V'$ be a valuation on $\mathbf{A}_2$ such that 
\[
\begin{aligned}
(\{1,2\},\emptyset)
&\vDash \phi_0,
&\qquad
(\emptyset,\{1,2\})
&\vDash \phi_1,\\
(\{2\},\{1\})
&\vDash \phi_2,
&
(\{1\},\{2\})
&\vDash \phi_2.
\end{aligned}
\]
	
Then $(\emptyset, \emptyset) \vDash \bigwedge_{i=0,1,2}\Diamond \phi_i$. For any node $a\notin \{(\{1, 2\}, \emptyset),(\emptyset, \{1, 2\})$, we have $a\vDash \Diamond \phi_2$. For any node $a\in \{(\{1, 2\}, \emptyset),(\emptyset, \{1, 2\})$, we have $a\vDash \neg(\Diamond \phi_0 \wedge \Diamond\phi_1)$. Therefore, any point dose not have $\Diamond \phi_0 \wedge\Diamond \phi_1 \wedge \Box \lnot  \phi_2$ and so $\Psi_t \notin \mathsf{S4.FPFA}$. 

Let $\mathbf{S}(3)$ be the spiked Boolean algebra on three elements. $V''$ be a valuation on $\mathbf{S}(3)$ such that 
\[
\begin{aligned}
\{1s\}
&\vDash p_0 \wedge \neg p_1 \wedge \neg p_2,
&\qquad
\{2s\}
&\vDash \neg p_0 \wedge p_1 \wedge \neg p_2,\\
\{3s\}
&\vDash \neg p_0 \wedge \neg p_1 \wedge p_2,
&
\emptyset
&\vDash \neg p_0 \wedge \neg p_1 \wedge p_2.
\end{aligned}
\]

Then $\{1,2,3\} \vDash \bigwedge_{i=0,1,2}\Diamond \phi_i$. For any node $a\notin \{\{1s\},\{2s\}\}$, we have $a\vDash \Diamond \phi_2$. For any node $a\in \{\{1s\},\{2s\}\}$, we have $a\vDash \neg(\Diamond \phi_0 \wedge \Diamond\phi_1)$. Therefore, any point dose not have $\Diamond \phi_0 \wedge\Diamond \phi_1 \wedge \Box \lnot  \phi_2$ and so $\Psi_t \notin \mathsf{S4.sBa}$. 
\end{proof}
\begin{corollary}
 $\mathsf{S4.tBa}$ and $\mathsf{S4.sBa}$ do not contain each other.
\label{cor:notcontaineachother}
\end{corollary}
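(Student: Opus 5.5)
The corollary follows directly from Lemma~\ref{lem:psis} and Lemma~\ref{lem:psit}, which supply formulas separating the two logics in each direction.

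First, to see that $\mathsf{S4.sBa}\not\subseteq\mathsf{S4.tBa}$, I take the witness $\Psi_s$. Lemma~\ref{lem:psis} gives $\Psi_s\in\mathsf{S4.sBa}$ and $\Psi_s\notin\mathsf{S4.tBa}$. The latter is shown by the explicit valuation on $\mathbf{P}_0(4)$, and that countermodel is available in $\mathsf{S4.tBa}$: $\mathbf{P}_0(4)$ is a (trivial) thickening of itself, and by Zakharyaschev's theorem $\mathsf{S4.tBa}=\tau(\mathsf{Med})=\mathsf{ML}(\{\mathbf{P}_0(n)\}^{\bullet})$.

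Second, to see that $\mathsf{S4.tBa}\not\subseteq\mathsf{S4.sBa}$, I take the witness $\Psi_t$. Lemma~\ref{lem:psit} gives $\Psi_t\in\mathsf{S4.tBa}$ and $\Psi_t\notin\mathsf{S4.sBa}$, the latter via the countermodel on $\mathbf{S}(3)$.

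Combining the two separations, neither logic contains the other. The only point requiring care is that the membership halves of the lemmas were argued over arbitrary thickenings (pre-Boolean and spiked pre-Boolean algebras) rather than just the posets themselves. That is exactly what is needed for the least modal companions $\tau$, by Zakharyaschev's theorem. Since the lemmas already include this, no further obstacle arises.
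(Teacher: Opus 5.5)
Your proposal is correct and is exactly the paper's argument: the corollary is stated as an immediate consequence of Lemma~\ref{lem:psis} and Lemma~\ref{lem:psit}, with $\Psi_s$ witnessing $\mathsf{S4.sBa}\not\subseteq\mathsf{S4.tBa}$ and $\Psi_t$ witnessing $\mathsf{S4.tBa}\not\subseteq\mathsf{S4.sBa}$. Your added remark is also accurate and matches how those lemmas are proved: the membership halves must be checked on all thickenings (Zakharyaschev's theorem), while countermodels on the posets $\mathbf{P}_0(4)$ and $\mathbf{S}(3)$ suffice for non-membership.
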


\begin{theorem}
$\mathsf{S4.FPFA}$ is strictly included in $\mathsf{S4.tBA}$.
\label{thm:fpfatba}
\end{theorem}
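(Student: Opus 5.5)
The plan is to prove the two halves of the statement separately: first the inclusion $\mathsf{S4.FPFA}\subseteq\mathsf{S4.tBa}$, and then that the inclusion is proper.

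\emph{Inclusion.} By definition $\mathsf{S4.FPFA}=\tau(\mathsf{Cheq})$ and $\mathsf{S4.tBa}=\tau(\mathsf{Med})$. Here $\tau(\mathsf{L})=\mathsf{S4}+\{\mathrm{T}(\varphi):\varphi\in\mathsf{L}\}$, so $\tau$ is monotone: if $\mathsf{L}\subseteq\mathsf{L}'$, then every axiom of $\tau(\mathsf{L})$ is an axiom of $\tau(\mathsf{L}')$. It therefore suffices to know $\mathsf{Cheq}\subseteq\mathsf{Med}$, which is Litak's result recalled in the introduction \cite{Lit04}, and I will invoke it directly.

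\emph{Possible semantic alternative.} If I wanted a self-contained check instead, I would try to show that every $\mathbf{P}_0(n)$ is a $p$-morphic image of a generated subframe of some $\mathbf{A}_m$. A natural candidate is to send a partial function $a$ to a subset of $[n]$ determined by $1_a$ and $2_a$, while avoiding the empty set so that the image stays topless. I expect this bookkeeping to be the only delicate point in the inclusion, and it is why I would rather cite Litak than rebuild the map.

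\emph{Strictness.} This is immediate from Lemma~\ref{lem:psit}: $\Psi_t\in\mathsf{S4.tBa}$ but $\Psi_t\notin\mathsf{S4.FPFA}$, so the two logics differ. Together with the inclusion, this gives $\mathsf{S4.FPFA}\subsetneq\mathsf{S4.tBa}$. As a consistency check, the same lemma also yields $\mathsf{Cheq}\subsetneq\mathsf{Med}$, since $\tau$ is injective by the isomorphism theorem. No step beyond the citation of $\mathsf{Cheq}\subseteq\mathsf{Med}$ is substantive.
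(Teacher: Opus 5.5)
Your argument is correct, and its strictness half is exactly the paper's: $\Psi_t$ from Lemma~\ref{lem:psit} lies in $\mathsf{S4.tBa}$ but not in $\mathsf{S4.FPFA}$. The difference is in the inclusion. You obtain it by citing Litak's $\mathsf{Cheq}\subseteq\mathsf{Med}$ and using that $\tau(\mathsf{L})=\mathsf{S4}+\{\mathrm{T}(\varphi):\varphi\in\mathsf{L}\}$ is monotone in $\mathsf{L}$. This is legitimate, and the paper itself attributes the inclusion to Litak. The paper, however, makes this step self-contained by exhibiting an explicit surjective p-morphism $f_n\colon\mathbf{A}_n\to\mathbf{P}_0(n+1)$. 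It is given by $f_n(a)=[\min 2_a]\setminus 1_a$ if $2_a\neq\emptyset$, and $f_n(a)=[n+1]\setminus 1_a$ if $2_a=\emptyset$. Monotonicity comes from $f_n(a)$ being the intersection of the images of the one-point partial functions below $a$. The back condition is checked in three cases, according to whether $n+1$ belongs to $f_n(a)$ and to the target set. This is precisely the bookkeeping you flagged in your ``semantic alternative'' and chose not to carry out. The extra element $n+1$ of the base set keeps the image nonempty when $2_a=\emptyset$, while $\min 2_a\notin 1_a$ does so otherwise. The map is also onto from all of $\mathbf{A}_n$, so no passage to generated subframes is needed. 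Your route buys brevity at the price of an external dependency; the paper's buys an explicit frame-level witness for $\mathsf{Cheq}\subseteq\mathsf{Med}$. (A small aside on your consistency check: concluding $\mathsf{Cheq}\neq\mathsf{Med}$ from $\tau(\mathsf{Cheq})\neq\tau(\mathsf{Med})$ uses only that $\tau$ is a function, not that it is injective.)
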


\begin{proof}
Since $\Psi_t \in \mathsf{S4.tBa}$ and $\Psi_t \notin \mathsf{S4.FPFA}$, it suffices to prove that for any $n \in \omega$, there exists a $p$-morphism $f_n$ from $\mathbf{A}_n$ onto $\mathbf{P}_0(n+1)$. Define a map $f_n$ from $\mathbf{A}_n$ to $\mathbf{P}_0(n+1)$ for any $a=(1_a,2_a)$ as follows:
\[ 
f_n(a)=
\begin{cases}
[v_a] \setminus 1_a, & \text{if } 2_a \neq \emptyset,\\
[n+1] \setminus 1_a, & \text{if } 2_a = \emptyset,
\end{cases}
\] 
where $v_a = \min\{v: v \in 2_a\}$.

We observe the following results:
\begin{enumerate}
	\item $f_n((\emptyset,\emptyset)) = [n+1]$.
	\item If $a = (\{u\}, \emptyset)$, then $f_n(a) = [n+1] \setminus \{u\}$, where $1 \leq u \leq n$.
	\item If $a = (\emptyset, \{v\})$, then $f_n(a) = [v]$, where $1 \leq v \leq n$.
\end{enumerate}

For any non-empty subset $S$ of $\{1,2,\ldots,n+1\}$, if $n+1 \in S$, then we have $f_n(([n+1] \setminus S,\emptyset)) = S$. If $n+1 \notin S$ but $n \in S$, then $f_n(([n] \setminus S,\{n\})) = S$. More generally, let $n_S = \max S$, then $f_n(([n_S] \setminus S,\{n_S\})) = S$. Thus, for each non-empty subset of $[n+1]$, it is the image of some element of $\mathbf{A}_n$, so $f_n$ is onto.

If $a \leq b$ in $\mathbf{A}_n$ and $a=(\emptyset,\emptyset)$, $b$ is an atom of $\mathbf{A}_n$, then $f_n(a) = [n+1] \supseteq f_n(b)$, so $f_n(a) \leq f_n(b)$ in $\mathbf{P}_0(n+1)$.

According to our construction, $f_n(a)= \bigcap f_n(x_a)$, where $x_a\leq a$  and $x_a$ is an atom in $\mathbf{A}_n$. Thus if $a \leq b$ in $\mathbf{A}_n$, then $f_n(a)=\bigcap f_n(x_a)\supseteq \bigcap f_n(x_b)=f_n(b)$ and so $f_n(a)\leq f_n(b)$ in $\mathbf{P}_0(n+1)$.

On the other hand, if $f_n(a) \leq s$ in $\mathbf{P}_0(n+1)$, then $s \subseteq f_n(a)$ as a non-empty subset of $[n+1]$. We aim to find $b$ in $\mathbf{A}_n$ such that $a \leq b$ and $f_n(b) = s$, there are three cases:

\emph{Case 1.} If $n+1 \in f_n(a)$ and $n+1 \in s$, then $2_a=\emptyset$ and $a=(1_a,\emptyset)$, $f_n(a)=[n+1]\setminus 1_a$. Let $1_b=\{u \in [n] : u \notin s\}$ and $b=(1_b, \emptyset)$. According to our construction, $f_n(b)=[n+1]\setminus 1_b= s$. Since $[n+1]\setminus 1_b=s\subseteq f_n(a)=[n+1]\setminus 1_a$, then $1_a\subseteq 1_b$ and so $a=(1_a,\emptyset)\leq (1_b,\emptyset)=b$.

\emph{Case 2.} If $n+1 \in f_n(a)$ and $n+1 \notin s$, then $a=(1_a,\emptyset)$ and $f_n(a)=[n+1]\setminus 1_a$. Let $1_b=\{u \in [n] : u \notin s\}$ and $n_s=\max s$. Then $1_b \cap \{n_s\}=\emptyset$ and let $b=(1_b, \{n_s\})$, so $f_n(b)=[n_s]\setminus1_b=s$. Since $[n_s]\setminus 1_b=s\subseteq f_n(a)=[n+1]\setminus 1_a$, then $u\notin 1_b \rightarrow u\notin 1_a \text{ when } u\leq n_s$ and $u\in 1_b$ when $n_s< u\leq n$, thus $1_a\subseteq 1_b$. Therefore, $a=(1_a,\emptyset)\leq (1_b,\{n_s\})=b$.

\emph{Case 3.} If $n+1 \notin f_n(a)$ and $n+1 \notin s$, then $2_a\neq \emptyset$ and $a=(1_a,2_a)$, $f_n(a)=[v_a]\setminus 1_a$. Let $n_s=\max s$ and $1_b=\{u \in [n_s] : u \notin s\} \cup \{u: n_s<u\leq n \text{ and } u\in 1_a\}$. Since $[n_s]\setminus 1_b=s\subseteq f_n(a)=[v_a]\setminus 1_a$, then $n_s=\text{max}\ s \leq \text{min} \{v : v\in 2_a\} =v_a$ and $u\notin 1_b \rightarrow u\notin 1_a \text{ when } u\in [n_s]$, thus $1_a \subseteq 1_b$. It is easy to see that $1_b \cap (2_a \cup\{n_s\})=(1_b\cap 2_a) \cup(1_b \cap \{n_s\})$ where $1_b \cap 2_a = (\{u \in [n_s] : u \notin s\} \cap 2_a) \cup (\{u : n_s < u \leq n \text{ and } u \in 1_a\} \cap 2_a) \text{ and } 1_b \cap \{n_s\} = \emptyset$. Since $n_s\leq \text{min}\{v: v\in 2_a\} \text{ and } n_s\in s$, then $\{u \in [n_s] : u \notin s\} \cap 2_a=\emptyset$. Because $\{u : n_s < u \leq n \text{ and } u \in 1_a\} \cap 2_a=\emptyset$, then $1_b\cap2_a=\emptyset$. Thus $1_b \cap (2_a \cup\{n_s\})=\emptyset$ and then we can let $b=(1_b, 2_a\cup \{n_s\})$. According to the above argument, $a=(1_a, 2_a)\leq (1_b,2_a\cup\{n_s\})=b$ and $f_n(b)=[\text{min}(2_a\cup \{n_s\})]\setminus 1_b=[n_s]\setminus1_b=s$.

Thus, $f_n$ is a $p$-morphism from $\mathbf{A}_n$ onto $\mathbf{P}_0(n+1)$.
\end{proof}

A different construction and proof for a \emph{p}-morphism between $\mathbf{V}_n$ and $\mathbf{P}_0(n+1)$ can be found in \cite{Lit04}.

\begin{theorem}
$\mathsf{S4.FPFA}$ is strictly included in $\mathsf{S4.sBA}$.
\label{thm:fpfasba}	
\end{theorem}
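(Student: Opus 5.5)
The inclusion and its strictness can be handled separately. Strictness is already available: by Lemma~\ref{lem:psis}, $\Psi_s \in \mathsf{S4.sBa}$ but $\Psi_s \notin \mathsf{S4.FPFA}$. So it remains to show $\mathsf{S4.FPFA} \subseteq \mathsf{S4.sBa}$.

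For the inclusion, I would pass to the intermediate level. By the theorems of Esakia and Zakharyaschev, $\mathsf{S4.FPFA}=\tau(\mathsf{Cheq})$ and $\mathsf{S4.sBa}=\tau(\mathsf{Inam})$. The map $\tau$ is monotone, since $\tau(\mathsf{L})=\mathsf{S4}+\{\mathrm{T}(\varphi):\varphi\in\mathsf{L}\}$. Hence it suffices to prove $\mathsf{Cheq}\subseteq\mathsf{Inam}$. As in the proof of Theorem~\ref{thm:fpfatba}, I would get this from the Folklore p-morphism theorem: for every $n$, I would exhibit some $m$ and a p-morphism $g$ from $\mathbf{A}_m$ onto $\mathbf{S}(n)$. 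Then every $\mathsf{Cheq}$-formula is valid on $\mathbf{A}_m$, hence on $\mathbf{S}(n)$, hence lies in $\mathsf{Inam}$.

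The main obstacle is constructing $g$. The natural first attempt is to compose $f_n:\mathbf{A}_n\to\mathbf{P}_0(n+1)$ from Theorem~\ref{thm:fpfatba} with a collapse $\mathbf{P}_0(n+1)\to\mathbf{S}(n)$. This collapse would send $\{n+1\}$ to the top $\varnothing$, each $\{j,n+1\}$ to the co-atom $\{j\}$, and $\{j\}$ to the spike $js$. It fails, however: a set $X\subseteq[n]$ with $|X|\ge2$ can no longer reach the top. So I would instead build $g$ directly on $\mathbf{A}_m$ with $m=n+1$, using the last coordinate as a switch between the Boolean part and the spikes. Concretely, I would let:
\begin{itemize}
\item[(i)] $a$ with $a(n+1)$ undefined be sent to the element of $\mathbf{P}(n)$ computed from $a{\upharpoonright}[n]$ by the recipe of $f_n$, namely $[v_a]\setminus 1_a$ or $[n]\setminus 1_a$, with $\varnothing$ allowed so that the Boolean top is hit;
\item[(ii)] $a$ with $a(n+1)=1$ be sent to the same Boolean value;
\item[(iii)] $a$ with $a(n+1)=2$ be sent to the spike $js$ when the Boolean value of $a{\upharpoonright}[n]$ is a singleton $\{j\}$ or smaller, and otherwise to the Boolean value.
\end{itemize}

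I would then verify monotonicity coordinatewise, as in Theorem~\ref{thm:fpfatba}, using that the value is an intersection over atoms below $a$. The back condition splits into two cases. Boolean targets are handled by Cases~1--3 of that proof, since they are unaffected by the last coordinate. For spike targets $js$ above $g(a)$, one extends $a$ so that its Boolean value shrinks to $\{j\}$ and sets $a(n+1)=2$ if it is still undefined.

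The delicate point is making (iii) consistent. Points with $a(n+1)=2$ whose Boolean value already has at least two elements must still be able to reach the top $\varnothing$, and no such point may be forced to map strictly above a spike. I expect to have to adjust the rule in (iii), possibly by enlarging $m$, until both the monotonicity and the back condition go through.
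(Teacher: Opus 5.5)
Your overall reduction is the same as the paper's. Strictness comes from $\Psi_s$ (Lemma~\ref{lem:psis}). The inclusion is reduced, via $\mathsf{Cheq}\subseteq\mathsf{Inam}$, to finding, for each $n$, a p-morphism from some $\mathbf{A}_m$ onto $\mathbf{S}(n)$. But that p-morphism is the whole content of the proof, and your proposal does not supply one. You end by saying rule (iii) must still be adjusted, and the rules as written fail the back condition.

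The failure is not confined to points with $a(n+1)=2$. Take $a\in\mathbf{A}_{n+1}$ with $a(1)=2$ and all other coordinates undefined. By (i), $g(a)=[1]\setminus\varnothing=\{1\}$, and $\{1\}\le\varnothing$ in $\mathbf{S}(n)$. Yet every $b\ge a$ has $1\in 2_{b{\upharpoonright}[n]}$. So for $b(n+1)$ undefined or $1$, the $f_n$-recipe returns $[v]\setminus 1_{b{\upharpoonright}[n]}$, which contains $v$ and is nonempty; for $b(n+1)=2$, rule (iii) gives a spike or a set with at least two elements. Hence no $b\ge a$ is mapped to $\varnothing$. The same happens above every point with $a(n+1)=2$. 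For example (for $n\ge2$), take the one with $a{\upharpoonright}[n]$ empty, which is mapped to the root $[n]$.

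Rule (iii) is also ill-defined when the Boolean value is $\varnothing$: your ``or smaller'' does not say which spike. Your claim that Boolean targets are ``handled by Cases~1--3'' of Theorem~\ref{thm:fpfatba} misses exactly this point. $\varnothing$ is not an element of $\mathbf{P}_0(n+1)$, and under the $f_n$-recipe it is reachable only from points with no $2$ at all.

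The missing idea is that the Boolean top must stay reachable after a $2$ has appeared. The natural way to do this is to let the \emph{number} of $2$'s act as the switch, instead of a dedicated extra coordinate. The paper takes $m=n$ and sets:
\begin{itemize}
\item $g_n(a)=[n]\setminus 1_a$ if $2_a=\varnothing$;
\item $g_n(a)=\{js\}$ if $a=([n]\setminus\{j\},\{j\})$;
\item $g_n(a)=\{j\}$ if $2_a=\{j\}$ and $a$ is not total;
\item $g_n(a)=\varnothing$ if $|2_a|\ge 2$.
\end{itemize}
Then every point with $2_a=\varnothing$ reaches $\varnothing$ via $([n],\varnothing)$. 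Every point sent to a co-atom $\{j\}$ with $2_a=\{j\}$ is non-total, so it has a free coordinate $u\neq j$ and reaches $\varnothing$ by setting $u$ to $2$. The spike $js$ is reserved for the unique maximal point whose only $2$ is at $j$. With these choices, monotonicity and the back condition reduce to a short case check.
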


\begin{proof}
	Since $\Psi_s\in \mathsf{S4.sBa}$ and $\Psi_s \notin \mathsf{S4.FPFA}$, it is sufficient to get the result by proving that for any $n \in \omega$, there exists a $p$-morphism $g_n$ from $\mathbf{A}_n$ onto $\mathbf{S}(n)$, where \( \mathbf{S}(n)= \langle S_n, \leq_{S_n} \rangle \) denotes the spiked Boolean algebra on \( n \) elements.

We define a map $g_n$ from $\mathbf{A}_n$ to $\mathbf{S}(n)$ for any $a=(1_a,2_a)$ as follows:
\[
g_n(a)=\begin{cases}
	[n]\setminus1_a, & \text{ if } 2_a=\emptyset\\
	\{js\} , & \text{ if } a=([n]\setminus \{j\},\{j\})\\
	\bigcap_{j\in 2_a} \{j\}, & \text{ otherwise}\\
\end{cases}
\] 

It is not difficult to calculate the following cases:
\begin{enumerate}
	\item  $g_n((\emptyset,\emptyset ))=[n]$.
	\item If $a=(\{u\}, \emptyset)$, then $g_n(a)=[n]\setminus \{u\}$, where $1\leq u \leq n$.
	\item If $a=(\emptyset,\{v\})$, $g_n(a)=\{v\}$, where $1\leq v \leq n$.
	\item If $a=(\{1,2,...,n\}\setminus \{j\},\{j\})$, $g_n(a)=\{js\}$, where $1\leq j \leq n$.
\end{enumerate}

For any subset $S\subseteq [n]$, let $1_a=\{u\in [n] : u\notin S\}$, then $g_n((1_a, \emptyset))=[n]\setminus 1_a=S$. Together with the fact that $g_n(([n]\setminus \{j\},\{j\}))=\{js\}$, thus $g_n$ is an onto from $\mathbf{A}_n$ to $\mathbf{S}(n)$.

\emph{Case 1.} If $a\leq b$ in $\mathbf{A}_n$ and $2_b=\emptyset$, then $1_a \subseteq 1_b$ and so $g_n(a)=[n]\setminus 1_a \supseteq  [n]\setminus 1_b=g_n(b)$, that is, $g_n(a)\leq  g_n(b)$. 

\emph{Case 2.}  If $a\leq b$ in $\mathbf{A}_n$ and $b=(\{1,2,...,n\}\setminus \{j\},\{j\})$, then $a=(1_a, \emptyset)$ or $a=(1_a, \{j\})$ where $1_a\subseteq 1_b$. Thus $j\notin 1_a$ and so $j\in [n]\setminus 1_a$. Therefore, $g_n(a)=[n]\setminus 1_a \leq \{j\} < \{js\}=g_n(b) \text{ if } a=(1_a, \emptyset)$ or $g_n(a)=\{j\} < \{js\}=g_n(b) \text{ if } a=(1_a, \{j\})$. So $g_n(a)\leq g_n(b)$ in $\mathbf{S}(n)$.

\emph{Case 3.}  If $a\leq b$ in $\mathbf{A}_n$ and $b\neq (\{1,2,...,n\}\setminus \{j\},\{j\})$ for any $1\leq j\leq n$ and $2_b\neq \emptyset$. So there exists at least a $j$ such that $j\in 2_b$. Then $j\notin 1_b$ and so $j\notin 1_a$. Thus $g_n(a)=[n]\setminus 1_a \leq \{j\} \leq \bigcap_{i\in 2_b}\{i\}=g_n(b) \text{ if } 2_a= \emptyset$ or $g_n(a)= \bigcap_{i\in 2_a}\{i\} \leq \bigcap_{i\in 2_b}\{i\}=g_n(b) \text{ if } 2_a\neq \emptyset$. So $g_n(a)\leq g_n(b)$ in $\mathbf{S}(n)$.

On the other hand, if $g_n(a)\leq s$ in $\mathbf{S}(n)$. We aim to find $b$ in $\mathbf{A}_n$, such that $a\leq b$ and $g_n(b)=s$. 

If $2_a=\emptyset$, then $g_n(a)=[n]\setminus 1_a$ and $a=(1_a, \emptyset)$. There are four cases for $s$ as follows:

\emph{Case 1.}  If $s$ is a non-empty subset of $[n]$ and $s\neq \{j\}$ for any $1\leq j \leq n$. Let $b=([n]\setminus s, \emptyset)$, then $g_n(b)=s$, and since $g_n(a)=[n]\setminus 1_a \supseteq s$, then $[n]\setminus s\supseteq 1_a$, therefore, $a=(1_a, \emptyset)\leq([n]\setminus s, \emptyset)=b$.	

\emph{Case 2.} If $s=\emptyset$, then let $b=([n], \emptyset)$. It is not difficult to see $g_n(b)=\emptyset=s$ and $a=(1_a, \emptyset)\leq([n], \emptyset)=b$.

\emph{Case 3.}  If $s=\{j\}$, since $[n]\setminus 1_a \leq \{j\}$, then $j\notin 1_a$. If $1_a=[n]\setminus \{j\}$, then $g_n(a)=\{j\}=s$, let $b=a$. If $1_a \neq [n]\setminus \{j\}$, Let $b=(1_a, \{j\})$, then $g_n(b)=\{j\}=s$ and $a=(1_a, \emptyset)\leq(1_a, \{j\})=b$.

\emph{Case 4.} If $s=\{js\}$, let $b=([n]\setminus \{j\}, \{j\})$. Since $[n]\setminus 1_a \leq \{js\}$, then $j\notin 1_a$ and so $1_a \subseteq [n]\setminus\{j\}$. Therefore, $a=(1_a, \emptyset)\leq([n]\setminus\{j\}, \{j\})=b$ and $g_n(b)=\{js\}=s$.

When $2_a\neq \emptyset$, we are looking at two obvious cases first. If $a= ([n]\setminus \{j\}, \{j\})$, then $g_n(a)=\{js\}$ and so $s=\{js\}$ and we choose $b=a$. Another case is that if $|2_a|\geq2$, then $g_n(a)=\emptyset$ and so $s=\emptyset$, then we choose $b=a$.

The remaining case is that if $2_a=\{j\}$ and $a\neq ([n]\setminus \{i\}, \{i\})$ for any $1\leq i\leq n$, then $g_n(a)=\{j\}$. Let $a=(1_a, \{j\})$.

\emph{Case 1.}  If $s=\{js\}$, since $1_a\neq [n]\setminus\{j\}$, then $1_a \subseteq [n]\setminus \{j\}$ and let $b=( [n]\setminus \{j\},\{j\})$, then $g_n(b)=\{js\}=s$ and $a=(1_a,\{j\})\leq ( [n]\setminus \{j\},\{j\})=b$.

\emph{Case 2.}  If $s=\emptyset$, since $1_a\neq [n]\setminus\{j\}$, then there exists $u\neq j$ such that $u \in [n]\setminus 1_a$, thus $1_a\cap\{u,j\}=\emptyset$. Let $b=(1_a, \{u,j\})$, then $g_n(b)=\emptyset=s$ and $a=(1_a, \{j\})\leq(1_a, \{u,j\})=b$.

Thus, $g_n$ is a $p$-morphism from $\mathbf{A}_n$ onto $\mathbf{S}(n)$. 
\end{proof}
\begin{corollary}
$\mathsf{L}_{\infty} \subseteq \mathsf{tBA}\cap \mathsf{sBA}$, $\mathsf{Cheq} \subseteq \mathsf{Med}\cap \mathsf{Inam}$, and $\mathsf{S4.FPFA} \subseteq \mathsf{S4.tBA}\cap \mathsf{S4.sBA}$. 
\label{cheqsubsetmedcapls}
\end{corollary}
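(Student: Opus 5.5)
The corollary follows from Theorem~\ref{thm:fpfatba} and Theorem~\ref{thm:fpfasba}: for every $n$ they give $p$-morphisms $f_n\colon\mathbf{A}_n\to\mathbf{P}_0(n+1)$ and $g_n\colon\mathbf{A}_n\to\mathbf{S}(n)$, both onto. We first treat the intermediate level. Let $\varphi\in\mathsf{Cheq}$. Then $\varphi$ is valid on every $\mathbf{A}_n$. By the Folklore theorem on $p$-morphic images, $\varphi$ is valid on every $\mathbf{P}_0(n+1)$ and on every $\mathbf{S}(n)$.

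The Medvedev frame $\mathbf{P}_0(1)$ is a singleton. Every formula valid on some frame is valid on a singleton, since a singleton is a generated subframe of any frame's maximal point. Alternatively, $\mathbf{P}_0(1)$ is a $p$-morphic image of $\mathbf{A}_1$ via the constant map. So $\varphi$ holds on all Medvedev frames, and $\mathsf{Cheq}\subseteq\mathsf{Med}$. In the same way $\mathsf{Cheq}\subseteq\mathsf{Inam}$, and hence $\mathsf{Cheq}\subseteq\mathsf{Med}\cap\mathsf{Inam}$.

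For the modal logics we transfer along thickenings. For $\sigma$ we use Esakia's theorem. We have $\mathsf{L}_\infty=\mathsf{ML}(\{\mathbf{A}_n\})$, $\mathsf{tBA}=\mathsf{ML}(\{\mathbf{P}_0(n)\})$, and $\mathsf{sBA}=\mathsf{ML}(\{\mathbf{S}(n)\})$. A $p$-morphism between posets is also a $p$-morphism of the corresponding reflexive $\mathsf{S4}$-frames, so modal validity is preserved. For $\tau$ we use Zakharyaschev's theorem: $\mathsf{S4.FPFA}=\mathsf{ML}(\{\mathbf{A}_n\}^\bullet)$, and similarly for the other two logics.

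Given a thickening $\mathbf{P}^\bullet$ of $\mathbf{P}_0(n+1)$, we build a thickening $\mathbf{A}_n^\bullet$ of $\mathbf{A}_n$ with a $p$-morphism onto $\mathbf{P}^\bullet$. Let $k$ be the largest cluster size in $\mathbf{P}^\bullet$. Replace each point $a$ of $\mathbf{A}_n$ by a cluster $\{a\}\times[m]$ with $m\geq k$. Send $(a,i)$ to the $\min(i,|C|)$-th element of the cluster $C$ lying over $f_n(a)$. This map is surjective within each cluster. Monotonicity and the back condition follow from those of $f_n$, because clusters are fully related.

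A more direct alternative avoids thickenings. By the corresponding inclusion $\mathsf{Cheq}\subseteq\mathsf{Med}$ and the fact that $\tau$ is monotone, we get $\tau(\mathsf{Cheq})\subseteq\tau(\mathsf{Med})$. The same holds for $\sigma$, since $\sigma(\mathsf{L})=\mathsf{Grz}+\mathrm{T}(\mathsf{L})$ is monotone in $\mathsf{L}$. This shorter route is the one I would use, and it gives all three inclusions immediately once the intermediate inclusion is established.

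The only point needing care is the degenerate indices: $\mathbf{P}_0(1)$, and the frames $\mathbf{S}(n)$ for small $n$, which are all covered by $g_n$ for $n\geq1$. Both are handled by the singleton remark above.
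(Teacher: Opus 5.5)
Your proposal is correct and follows the paper's argument. The paper leaves it implicit, reading the corollary off from the onto $p$-morphisms $f_n\colon\mathbf{A}_n\to\mathbf{P}_0(n+1)$ and $g_n\colon\mathbf{A}_n\to\mathbf{S}(n)$ of Theorems~\ref{thm:fpfatba} and~\ref{thm:fpfasba}: these give $\mathsf{Cheq}\subseteq\mathsf{Med}\cap\mathsf{Inam}$, and the modal inclusions follow by monotonicity of $\tau$ and $\sigma$. Your separate handling of the singleton $\mathbf{P}_0(1)$, which no $f_n$ with $n\geq 1$ reaches, fills a small point the paper leaves unstated.
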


Let $\mathbf{W}$ denote the \emph{bowtie}, the poset consisting 
of four elements: two incomparable minimal elements and two 
incomparable maximal elements, each maximal element lying above 
each minimal one; see Figure~\ref{fig:bowtie}. 
\begin{figure}[htbp]
$$\xymatrix@C=5mm{
\bullet\ar@{-}[d]\ar@{-}[rrd] & & \bullet\ar@{-}[d]\\
\bullet\ar@{-}[rru] & & \bullet\\
}$$
\caption{The bowtie $\mathbf{W}$.\label{fig:bowtie}}
\end{figure}

\begin{lemma}[Bowtie Lemma]	
If $\mathbf{F}$ is any finite rooted frame, then 
$\mathbf{F}\oplus\mathbf{W}$ is a $p$-morphic image of some 
$\mathbf{V}_n$.
\label{lem:bow-tie}
\end{lemma}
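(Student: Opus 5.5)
The plan is to identify $\mathbf{V}_n$ with $\mathbf{A}_n$ (Lemma~\ref{AcongV}) and stratify $\mathbf{A}_n$ by domain size. Let $M$ be the set of total functions (the maximal points), $C$ the set of partial functions with $|\mathrm{dom}(a)|=n-1$ (the coatoms), and $B$ the set of $a$ with $|\mathrm{dom}(a)|\le n-2$, which is a downset. The map $h:\mathbf{A}_n\to\mathbf{F}\oplus\mathbf{W}$ will send $M$ onto the two maximal points of $\mathbf{W}$, $C$ onto the two minimal points of $\mathbf{W}$, and $B$ onto $\mathbf{F}$.

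\textbf{The top two layers.} Write $\sigma(a)=\sum_{i\in\mathrm{dom}(a)}a(i)$.
\begin{itemize}
\item Colour $x\in M$ by the parity of $\sigma(x)$, sending the two classes to the two tops $t_0,t_1$ of $\mathbf{W}$. A coatom $c$ has exactly two extensions in $M$, obtained by setting its missing coordinate to $1$ or $2$. They have opposite parity, so $c$ sees both $t_0$ and $t_1$.
\item Colour $c\in C$ by the parity of $\sigma(c)$, sending the two classes to the two bottoms $m_0,m_1$ of $\mathbf{W}$. This respects the order, since each $m_i$ lies below both tops. The back condition at $c$ holds because $c$ reaches both tops.
\item Any $a\in B$ has two undefined coordinates $i,j$. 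Extending $a$ on all of $[n]\setminus\{i,j\}$ except $i$, and letting the value at $j$ vary, gives coatoms above $a$ of both parities. So every $a\in B$ sees both $m_0,m_1$ and hence also both $t_0,t_1$.
\end{itemize}
Since every point of $\mathbf{F}$ lies below all of $\mathbf{W}$, monotonicity from $B$ into $\mathbf{W}$ is automatic. The back condition for targets in $\mathbf{W}$ is exactly what the colouring gives. Therefore $h$ is a $p$-morphism onto $\mathbf{F}\oplus\mathbf{W}$ as soon as $h{\upharpoonright}B$ is a $p$-morphism from $B$ onto $\mathbf{F}$. The back condition inside $\mathbf{F}$ only ever needs witnesses in $B$, which is fine because $B$ is a downset of $\mathbf{A}_n$.

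\textbf{Reduction to trees.} Every finite rooted frame is a $p$-morphic image of its finite unravelling $\mathbf{T}$, a finite tree of some depth $d$ and branching at most $k$. Composing $p$-morphisms, it suffices to map $B$ onto $\mathbf{T}$ for $n$ large. For this I would partition $[n]$ into $d$ consecutive blocks $I_1,\dots,I_d$, each of size $m\ge k+2$. Then $h(a)$ is defined by reading the blocks in order. While the first $r$ blocks are \emph{fully defined} by $a$, descend $r$ steps in $\mathbf{T}$. At each step, choose among the at most $k$ children of the current node according to a choice function of $a{\upharpoonright}I_r$, for instance the number of $2$'s modulo the number of children. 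Stop at the first block that is not fully defined. As $a$ grows, blocks only become defined, so the descended path can only lengthen. This gives monotonicity.

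\textbf{Back condition and main obstacle.} The hard step is the back condition inside $\mathbf{T}$. Suppose $h(a)=u$ and $v$ is a descendant of $u$. We must extend $a$ within $B$ so that the next blocks are completed with the right residues, while keeping at least two coordinates undefined. Keeping two coordinates free is the reason for reserving slack: I would leave the last block, or two extra coordinates outside all blocks, never used in the descent. The real delicacy is a partially defined block. Its not-yet-defined coordinates must be able to realise every residue; this holds if each block has at least $k$ free coordinates whenever it is unfinished. That need not be true, since a block can be almost completely defined before it is complete. To handle this, I would modify the choice rule so that it is determined only once the block is completed, and then use \emph{the last} undefined coordinate's value together with the count. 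Then any nonempty set of free coordinates in the block suffices to hit every child; if this fails, I would pick the block size and residue rule more carefully. Checking this is where I expect the main work. Surjectivity follows from the back condition applied at the root $(\varnothing,\varnothing)$.
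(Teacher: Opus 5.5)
The paper does not prove this lemma itself (it defers to Xiao's earlier work), so I am judging your argument on its own terms. Your handling of the top two layers is correct, but the proposal has a genuine gap, in fact two.

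\textbf{The reduction to $B$ cannot work for every $\mathbf{F}$.} It is a valid sufficient condition, but there are finite rooted $\mathbf{F}$ that are not $p$-morphic images of $B$ for any $n$. Take $\mathbf{F}$ to be the fork with root $r$ and seven maximal points, and suppose $g\colon B\to\mathbf{F}$ is an onto $p$-morphism.
\begin{itemize}
\item Points with exactly two undefined coordinates (squares) are maximal in $B$, so they go to maximal points.
\item A point $c$ with three undefined coordinates has only seven points above it in $B$: itself and six squares. Since $r{\uparrow}$ has eight points, $g(c)$ must be maximal, and all six squares above $c$ share its value.
\item If $a$ has at least three undefined coordinates, the squares above $a$ are connected under the relation ``extend a common point above $a$ with three undefined coordinates''. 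Inside such a point you can flip a defined coordinate, or trade one undefined coordinate for another. Hence all squares above $a$ have the same image.
\end{itemize}
So $a{\uparrow}\cap B$ meets only one maximal point of $\mathbf{F}$, nothing is sent to $r$, and $g$ is not onto. The same holds for any $\mathbf{F}$ with a depth-two point having at least seven immediate successors: run the argument inside $a{\uparrow}$ for a preimage $a$ of that point. Passing to the unravelling does not help, since maps $B\to\mathbf{T}\to\mathbf{F}$ would compose.

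Moreover, in any $p$-morphism onto $\mathbf{F}\oplus\mathbf{W}$, vertices must go to $t_0,t_1$. Edges must go into $\mathbf{W}$, because an edge has only two proper successors. So for such $\mathbf{F}$ some faces of dimension $\geq 2$ have to be sent into $\mathbf{W}$. Since everything above a preimage of $m_i$ lands in $\{m_i,t_0,t_1\}$, the $m_0/m_1$ labels are then constant on connected pieces of the preimage of $\{m_0,m_1\}$. The missing idea is to choose this split depending on $\mathbf{F}$, so that every point sent into $\mathbf{F}$ still sees both $m_0$ and $m_1$. A fixed split by dimension cannot do this.

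\textbf{The block rule fails even where $B\to\mathbf{T}$ might exist.} Consider a point $a\in B$ that defines every coordinate except two in $I_1$, slack coordinates included. It is maximal in $B$, yet your rule sends it to the root of $\mathbf{T}$. The back condition then demands proper extensions of $a$ inside $B$, and there are none. In general a $p$-morphism cannot raise depth, so $h(a)$ must have depth at most $|[n]\setminus\mathrm{dom}(a)|-1$. A rule that only records which blocks are complete ignores this. Reserving slack cannot repair it, because $h$ has to be defined on all of $B$, including points whose slack is already filled.

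The residue issue you single out as the main obstacle is secondary. A workable map has to be driven by the number of undefined coordinates, as $\#_0$ is in the proof of Lemma~\ref{lem:cc}, with branch choices that a single coordinate change can adjust.
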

The proof of Bowtie Lemma can be found in~\mbox{\cite[Lemma~6.2.3]{Xiao24}}. The Bowtie Lemma provides a useful tool for the analysis of finite $\mathsf{Cheq}$-frames. 
Fontaine proved that $\mathsf{Med}$ in not finitely axiomtizable over $\mathsf{Cheq}$ in \cite[Theorem~9]{Fon06}. By using Bowtie Lemma, analogous results for the the Inamdar logic $\mathsf{Inam}$ and for generalized Medvedev logics were obtained in \cite[Theorem~18]{LX24} and \cite[Theorem~5.3]{Xiao26}, respectively.

\section{Geometric perspective}
\label{sec:geo}
\begin{definition}
Assuming $X$ is a polytope in $\mathbb{R}^{n}$, a hyperplane $c$ of $\mathbb{R}^{n}$ is \emph{supporting} $X$ if one of the two closed half-spaces of $c$ contains $X$. A subset $F$ of $X$ is called a \emph{face} of $X$ if it is either $\emptyset$, $X$ itself or the intersection of $X$ with a supporting hyperplane $c$. The empty set $\emptyset$ and the $X$ itself are defined to be \emph{trivial faces}. Otherwise, the face is called \emph{proper}. The maximal proper face under the inclusion is called \emph{facet} while the minimal proper face is called \emph{vertex}. The \emph{dual face poset} $\mathbf{Q}^{\text{d}}(X)$ is the poset of non-empty faces of $X$, ordered by reverse inclusion: $F_0\leq F_1$ \mbox{iff $F_0 \supseteq F_1$}.	 
\end{definition}

If we add the top $\emptyset$ to $\mathbf{Q}^{\text{d}}(X)$, then it forms a bounded lattice $\mathbf{Q}_1^{\text{d}}(X)$ since $F_0\vee F_1=F_0\cap F_1$ and $F_0\wedge F_1=\bigcap_{(F_0\cup F_1)\subseteq F} F$. We call $\mathbf{Q}_1^{\text{d}}(X)$ the \emph{face lattice} of $X$. Note that, in the lattice theory, the face lattice of $X$ is the opposite lattice of $\mathbf{Q}_1^{\text{d}}(X)$, however, we will continue to use our terminology, and this should not cause any confusion.
For the convenience of the discussion, we use the coordinate system to give a definition of the $n$-cube $Q_n$.
\begin{definition}
An $n$-\emph{cube} $Q_n$ is the set $\{(x_1, ..., x_n) \in \mathbb{R}^n : 1\leq x_i \leq2 \text{ for } 1\leq i\leq n\}$.
\end{definition}

Bennett \cite[Theorem 2.1]{Ben82} provided a description of the face lattice and we state the result below.

\begin{theorem}
A lattice $\mathbf{F}$ is the face lattice of $Q_n$ iff $|\mathbf{F}|=3^n+1$, $|\mathrm{at}(\mathbf{F})|=2n$, $\mathbf{F}$ is atomistic and for each atom $a$, there exists a unique atom $a'$ with $a \vee a'=\bar 1$. 
\label{theorem:facelattice}    	
\end{theorem}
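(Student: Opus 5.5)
The forward direction will be a direct computation on $Q_n$. The converse will be a counting argument that forces the lattice minus its top to be isomorphic to the poset of "consistent" sets of atoms, which is $\mathbf{A}_n$. In both directions I identify the non-empty faces of $Q_n$ with $\mathbf{A}_n$, using Lemma~\ref{AcongV} as the model for the combinatorics.

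\emph{Forward direction.} A non-empty face of $Q_n$ is obtained by fixing some coordinates to $1$ or $2$ and leaving the others free. It therefore corresponds to a partial function $a:[n]\to\{1,2\}$, namely the face $\{x : x_i=a(i) \text{ for } i\in\mathrm{dom}(a)\}$. Reverse inclusion of faces matches extension of partial functions, so $\mathbf{Q}^{\mathrm d}(Q_n)\cong\mathbf{A}_n\cong\mathbf{V}_n$, and $|\mathbf{Q}_1^{\mathrm d}(Q_n)|=3^n+1$. The atoms are the facets $\{x_i=1\}$ and $\{x_i=2\}$, so there are $2n$ of them. Every face is the intersection, i.e.\ the join in our ordering, of the facets containing it, so the lattice is atomistic. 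Two facets intersect in $\emptyset=\bar 1$ exactly when they are $\{x_i=1\}$ and $\{x_i=2\}$ for the same $i$. Hence each atom has a unique such partner.

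\emph{Converse: the pairing.} Let $\mathbf{F}$ satisfy the four conditions. For an atom $a$, the condition gives a unique $a'$ with $a\vee a'=\bar 1$. Since $\vee$ is symmetric, $a''=a$. Also $a'\neq a$, because $a\vee a=a\neq\bar 1$ (otherwise $\mathbf{F}$ would have only two elements, contradicting $|\mathbf{F}|=3^n+1$ with $n\geq 1$). So $'$ is a fixed-point-free involution, and the $2n$ atoms split into $n$ complementary pairs $\{a_i,a_i'\}$. Call a set of atoms \emph{consistent} if it contains no complementary pair. There are exactly $3^n$ consistent sets, since for each $i$ we choose $a_i$, $a_i'$, or neither. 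A consistent set corresponds to a partial function in $\mathbf{A}_n$, and inclusion of sets matches the order of $\mathbf{A}_n$.

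\emph{Converse: the counting step, which is the crux.} For $x\in\mathbf{F}$ let $\mathrm{At}(x)$ be the set of atoms below $x$. Atomisticity gives $x=\bigvee\mathrm{At}(x)$. Consequently $x\leq y$ iff $\mathrm{At}(x)\subseteq\mathrm{At}(y)$, and $x\mapsto\mathrm{At}(x)$ is injective. If $x\neq\bar 1$, then $\mathrm{At}(x)$ is consistent, since otherwise $x\geq a_i\vee a_i'=\bar 1$.

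So $\mathrm{At}$ maps the $3^n$ non-top elements injectively into the $3^n$ consistent sets, and is therefore a bijection. It preserves and reflects order, so $\mathbf{F}\setminus\{\bar 1\}\cong\mathbf{A}_n\cong\mathbf{Q}^{\mathrm d}(Q_n)$. Adding the top back on both sides gives $\mathbf{F}\cong\mathbf{Q}_1^{\mathrm d}(Q_n)$.

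The only delicate point is that the hypotheses alone do not obviously rule out a consistent set of atoms whose join is $\bar 1$. The cardinality hypothesis $|\mathbf{F}|=3^n+1$ is exactly what excludes this: surjectivity of $\mathrm{At}$ onto the consistent sets comes for free from the count.
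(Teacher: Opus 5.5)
Your proof is correct. It is a genuinely different route, because the paper gives no proof of this statement at all. The statement is quoted from Bennett \cite[Theorem 2.1]{Ben82} and used only in its ``if'' direction, in the proof of Theorem~\ref{thm:dualfaceposetofcube}: there the four conditions are checked on $\mathbf{V}_n'$ to conclude $\mathbf{V}_n'\cong\mathbf{Q}_1^{\mathrm d}(Q_n)$.

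Your converse is a clean, self-contained substitute. The map $x\mapsto\mathrm{At}(x)$ is an order-embedding of $\mathbf{F}\setminus\{\bar 1\}$ into the $3^n$ consistent sets, and the hypothesis $|\mathbf{F}|=3^n+1$ forces it to be onto. That surjectivity is exactly the point that does not follow from the other hypotheses, and you isolate it correctly.

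Your forward direction already proves $\mathbf{Q}^{\mathrm d}(Q_n)\cong\mathbf{A}_n\cong\mathbf{V}_n$ directly, which is the entire content of Theorem~\ref{thm:dualfaceposetofcube}. So on your route the abstract characterization is not needed for the paper's purposes. It also avoids a step the paper leaves tacit: Bennett's criterion applies only to lattices, so one must know $\mathbf{V}_n'$ is a lattice. It is, since meets in $\mathbf{A}_n$ are greatest common restrictions, but the paper does not check this.

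The one geometric input you take for granted is that the non-empty faces of $Q_n$ are exactly the sets obtained by fixing some coordinates to $1$ or $2$. With the paper's definition of a face via supporting hyperplanes this takes one line, and it is worth writing out. The maximizers of $c\cdot x$ on $Q_n$ are given by $x_i=2$ where $c_i>0$, $x_i=1$ where $c_i<0$, and $x_i$ free where $c_i=0$. Every such pattern arises by choosing signs for $c$.

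Citing Bennett buys the paper brevity and an intrinsic recognition criterion for cube face lattices. Your argument buys an elementary, fully verifiable identification of $\mathbf{V}_n$ with the dual face poset of the cube.
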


\begin{theorem}
$\mathbf{V}_n \cong \mathbf{Q}^{\text{d}}(Q_n)$, the dual face poset of $n$-cube. 
\label{thm:dualfaceposetofcube}
\end{theorem}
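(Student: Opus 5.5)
The plan is to define an explicit bijection between $\mathbf{V}_n$, or equivalently $\mathbf{A}_n$ by Lemma~\ref{AcongV}, and the set of non-empty faces of $Q_n$, and then show it is an order isomorphism, with $\leq$ on faces being reverse inclusion.

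\textbf{The map.} To a tuple $x=(x_1,\dots,x_n)\in\mathbf{V}_n$ we assign
\[
F_x=\{y\in Q_n : y_i=1 \text{ if } x_i=w_1,\ y_i=2 \text{ if } x_i=w_2\}.
\]
Coordinates with $x_i=w_0$ are left free in $[1,2]$. So $F_x$ is a product of intervals and points, non-empty, and $F_{(w_0,\dots,w_0)}=Q_n$. In the language of $\mathbf{A}_n$, $F_a$ fixes coordinates in $1_a$ to $1$ and those in $2_a$ to $2$.

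\textbf{$F_x$ is a face.} Take the linear functional
\[
\ell(y)=\sum_{x_i=w_1}(y_i-1)+\sum_{x_i=w_2}(2-y_i).
\]
It is $\geq 0$ on $Q_n$ and vanishes exactly on $F_x$. Hence the hyperplane $\ell=0$ supports $Q_n$ and cuts out $F_x$. When $x$ is the root, $\ell\equiv 0$, and we take $F_x=Q_n$ as the trivial face.

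\textbf{Every non-empty face arises this way.} Let $F=Q_n\cap c$ be a non-empty face with supporting hyperplane $c=\{\lambda\cdot y=\mu\}$ and $\lambda\cdot y\le\mu$ on $Q_n$. A linear functional on a box attains its maximum exactly on the set where:
\begin{itemize}
\item each coordinate with $\lambda_i>0$ is set to $2$;
\item each coordinate with $\lambda_i<0$ is set to $1$;
\item coordinates with $\lambda_i=0$ are free.
\end{itemize}
Since $F\neq\emptyset$, $\mu$ equals this maximum, so $F=F_x$ for the corresponding $x$. Injectivity is immediate: the fixed coordinates and their values can be read off from $F_x$. In particular, the point with all free coordinates set to $1.5$ lies in $F_x$, so a coordinate is fixed on all of $F_x$ exactly when $x_i\neq w_0$.

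\textbf{Order preservation.} We need $x\le y$ in $\mathbf{V}_n$ iff $F_x\supseteq F_y$.
\begin{itemize}
\item If $x\le y$ coordinatewise, then every $x_i\in\{w_1,w_2\}$ forces $y_i=x_i$. Thus $F_y$ satisfies all constraints of $F_x$, giving $F_y\subseteq F_x$.
\item Conversely, suppose $F_y\subseteq F_x$ and $x_i\neq w_0$. The coordinate $i$ is constant, equal to $1$ or $2$, on $F_x$, hence constant with the same value on $F_y$. By the $1.5$-point observation, $y_i=x_i$. So $x_i\le y_i$ for every $i$.
\end{itemize}

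\textbf{Main obstacle.} The only non-routine step is surjectivity, i.e.\ characterizing the faces of the cube via supporting hyperplanes. This reduces to the box-maximization argument above. Alternatively, one could pass to face lattices and invoke Theorem~\ref{theorem:facelattice}: $\mathbf{V}_n$ with a top added has $3^n+1$ elements, its atoms are the $2n$ tuples with a single $w_1$ or $w_2$, it is atomistic, and the atom with $w_1$ at coordinate $i$ is complemented to the top only by the atom with $w_2$ at coordinate $i$. Checking the lattice structure there is more work, so I would use the direct argument.
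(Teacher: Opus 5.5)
Your proof is correct, but it follows a different route from the paper's. The paper adds a top $\bar 1$ to $\mathbf{V}_n$ and appeals to Bennett's abstract characterization of cube face lattices (Theorem~\ref{theorem:facelattice}). It counts $3^n+1$ elements and exhibits the $2n$ atoms (a single $w_1$ or $w_2$ in coordinate $i$, zeros elsewhere). It then notes that each atom joins to $\bar 1$ only with its partner in the same coordinate, writes every element as a join of atoms, and finally removes the top.

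You instead build the isomorphism $x\mapsto F_x$ explicitly and check surjectivity, injectivity and order-reflection by hand. Your only geometric input is the elementary fact that an affine functional on a box is maximized exactly on a coordinate-fixing subbox. This makes your argument self-contained and independent of the cited characterization. It also names the isomorphism concretely, which is exactly the sub-hypercube picture the paper uses informally in Section~\ref{sec:geo} and in the reduction of Theorem~\ref{thm:cheqmed}.

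The paper's route is shorter on the page, but it outsources the geometry to Bennett. It also leaves implicit that $\mathbf{V}_n$ with a top added is a lattice at all. That is true, because coordinatewise meets always exist in $\mathbf{V}_n$. Two tuples with no coordinate where one is $w_1$ and the other $w_2$ have a coordinatewise join, and all other pairs join to $\bar 1$. Verifying this is part of the ``more work'' you chose to avoid.

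Two cosmetic points. Your $\ell$ is affine rather than linear, which is harmless since supporting hyperplanes are affine. In the surjectivity step, the trivial face $Q_n$ is matched to the root directly rather than through a supporting hyperplane, which you do implicitly.
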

\begin{proof}
Let $\mathbf{V}_n'$ be a frame obtained by adding a top $\bar{1}$ to $\mathbf{V}_n$. The cardinality of $\mathbf{V}_n'$ is $3^n+1$ and $\mathbf{V}_n'$ has $2n$ atoms $\{x^i\}_{1\leq i\leq n}$ and $\{y^i\}_{1\leq i\leq n}$ where  $x^i=(\underbrace{0, \ldots ,0}_{i-1}, 1 , 0, \ldots, 0)$ and $y^i=(\underbrace{0, \ldots , 0}_{i-1}, 2, 0, \ldots, 0)$. For any $1\leq i\neq j \leq n$, it is easy to see $x^i \vee y^i=\bar 1$. Since $x^i \vee y^j$, $x^i \vee x^j$ and $y^i \vee y^j$ are all in $\mathbf{V}_n$, thus not $\bar 1$. For any point $a\neq \bar 1$ in $\mathbf{V}_n'$, let $a=(a_1, \ldots, a_n)$, then
$a= (\bigvee_{i \in \pi_1} x_i)\vee(\bigvee_{j \in \pi_2} y_j)$, where $\pi_1=\{i : a_i=1\} \text{ and } \pi_2=\{i : a_i=2\}$.

Therefore $\mathbf{V}_n'\cong \mathbf{Q}_1^{\text{d}}(Q_n)$ by Theorem \ref{theorem:facelattice}. By removing the top $\bar{1}$, $\mathbf{V}_n \cong \mathbf{Q}^{\text{d}}(Q_n)$.  
\end{proof}
Similarly, Medvedev frames also have a natural geometric structure.
\begin{definition}
	A $n$-\emph{simplex} $X_n$ is a convex hull of $n+1$ affinely independent points.
\end{definition}

\begin{theorem}
	$\mathbf{P}_0(n) \cong \mathbf{Q}^{\text{d}}(X_n)$, the dual face poset of $n$-simplex. 
\end{theorem}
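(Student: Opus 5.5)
We need to prove that $\mathbf{P}_0(n)$ is isomorphic to the dual face poset of the $n$-simplex. There is an indexing subtlety: $\mathbf{P}_0(n)$ consists of the nonempty subsets of $[n]$, so it has $2^n-1$ elements. The simplex $X_n$, as defined, has $n+1$ vertices and therefore $2^{n+1}-1$ nonempty faces. The faces of $X_n$ correspond to the nonempty subsets of its vertex set. So the statement holds exactly when the simplex has $n$ vertices, namely the $(n-1)$-simplex. I will fix a simplex whose vertex set $\{v_1,\dots,v_n\}$ is affinely independent and identify it with the $X_n$ of the statement under that convention. In other words, the index should be read as "number of vertices".

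\textbf{Step 1: describe the faces.} I will show that the nonempty faces of $X=\mathrm{conv}\{v_1,\dots,v_n\}$ are exactly the sets $F_S=\mathrm{conv}\{v_i : i\in S\}$ for nonempty $S\subseteq[n]$.

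First, each $F_S$ is a face. Take a concrete model: $v_i=e_i\in\mathbb{R}^n$, so that $X=\{x : x_i\ge 0,\ \sum_i x_i=1\}$. For $S\ne[n]$, the hyperplane $c_S:\ \sum_{i\notin S}x_i=0$ supports $X$, since $X$ lies in the half-space $\sum_{i\notin S}x_i\ge 0$. Moreover $X\cap c_S=F_S$. For $S=[n]$ we have $F_S=X$, which is a trivial face.

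Conversely, let $F=X\cap c$ be a nonempty face, with $c=\{\ell=\alpha\}$ and $\ell\ge\alpha$ on $X$. Put $S=\{i:\ell(v_i)=\alpha\}$. Writing a point of $X$ as $\sum_i\lambda_i v_i$ with $\lambda_i\ge0$ and $\sum_i\lambda_i=1$, we get $\ell(\sum_i\lambda_iv_i)-\alpha=\sum_i\lambda_i(\ell(v_i)-\alpha)$. Every term is nonnegative, so the sum vanishes iff $\lambda_i=0$ for all $i\notin S$. Hence $F=F_S$.

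\textbf{Step 2: the correspondence is an order isomorphism.} Affine independence makes barycentric coordinates unique. So the vertices lying in $F_S$ are exactly $\{v_i:i\in S\}$, which shows $S\mapsto F_S$ is injective. It also gives $F_S\supseteq F_T$ iff $S\supseteq T$. Thus the map $S\mapsto F_S$ is a bijection from $\{S\subseteq[n]:S\neq\varnothing\}$ onto the nonempty faces. It carries $\supseteq$ on subsets to reverse inclusion on faces, which is precisely the order of $\mathbf{Q}^{\text{d}}$. Isomorphism of simplices is affine, so the result does not depend on the chosen simplex.

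\textbf{Alternative route.} One could instead adapt the proof of Theorem~\ref{thm:dualfaceposetofcube}, using a lattice characterization of face lattices analogous to Theorem~\ref{theorem:facelattice}. Adding a top to $\mathbf{P}_0(n)$ gives $\mathbf{P}(n)$, the Boolean lattice $2^n$, which is the face lattice of the simplex in this paper's reversed convention. The direct argument above avoids needing such a characterization for simplices.

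\textbf{Main obstacle.} The only delicate point is the converse half of Step 1, that every face is some $F_S$. The sign argument on barycentric coordinates handles it. The remaining care is the dimension convention noted at the start, which must be stated explicitly.
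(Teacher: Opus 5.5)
Your argument is correct, and your index correction is necessary. With $X_n$ the hull of $n+1$ affinely independent points, $\mathbf{P}_0(n)$ has $2^n-1$ elements while $\mathbf{Q}^{\text{d}}(X_n)$ has $2^{n+1}-1$; already for $n=1$ you get a singleton versus the two-fork $\mathbf{V}$. So the statement should read $\mathbf{P}_0(n+1)\cong\mathbf{Q}^{\text{d}}(X_n)$, which is your ``number of vertices'' reading.

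The paper gives no proof of this statement. It only introduces it with ``Similarly'' after Theorem~\ref{thm:dualfaceposetofcube}, so the intended argument is presumably your alternative route: adjoin a top, getting the Boolean lattice $\mathbf{P}(n)$, and identify it through a Bennett-style characterization of face lattices, as in Theorem~\ref{theorem:facelattice}. That route is quick and makes the parallel with the cube explicit. However, it rests on a fact about face lattices of simplices that the paper never states or cites. Your direct route derives the face structure from the paper's own definition of face via barycentric coordinates, so it is self-contained. The sign argument for ``every nonempty face is some $F_S$'' is exactly the needed step.

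One small tidy-up concerns the check that each $F_S$ is a face. You do this only for the standard simplex and then transfer it by affine isomorphism. That transfer tacitly uses that faces are intrinsic to the affine hull, since the ambient spaces may have different dimensions. You can avoid this as follows. By affine independence there is an affine $g$ with $g(v_i)=0$ for $i\in S$ and $g(v_i)=1$ for $i\notin S$. For $\varnothing\neq S\neq[n]$ this $g$ is nonconstant, $g\geq 0$ on $X$, and $X\cap\{g=0\}=F_S$. So $\{g=0\}$ is the required supporting hyperplane directly.
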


\subsection{Polychromatic colorings of \emph{n}-cube}
The \emph{n}-dimensional hypercube $Q_n$ is the graph whose vertex set is $\{1, 2\}^{n}$. Let $E_n$ denote the set of all edges in $Q_n$. The two vertices $v$ and $w$ of an edge $e$ of the hypercube differ in exactly one coordinate, thus we represent the edge $e\in E_n$ by a $n$-vector $(e_1,e_2,\ldots,e_n)$ with
\[e_i=\begin{cases}
0, & \text{ if } v \text{ and } w \text{ differ in } i \text{-th coordinate}\\
v_i, & \text{ if } v_i=w_i  
\end{cases}
\]

Let $G_0$ and $G_1$ be two graphs. A subgraph of $G_0$ isomorphic to $G_1$ is called an \emph{embedding} of $G_1$ in $G_0$. Furthermore, an embedding of $Q_d$ in $Q_n$ (as a subcube) is a $n$-vector with $d$ coordinates set to $0$. 

Fix a set $P$ of $p$ colors. An \emph{edge}-\emph{coloring} $\mu$ of a graph $G$ with $p$ colors is a surjective function 
\[
\mu: E(G) \rightarrow P
\]
which associate every edge of $G$ with a color in $P$.
  
For a given hypercube $Q_n$ and a fixed subgraph $G$, an edge-coloring $\mu$ of a hypercube with $p$ colors is called a $G$-\emph{polychromatic} $p$-\emph{coloring of} $Q_n$ if every embedding of $G$ in $Q_n$ contains every color. Let $p(G)$ be the maximum number of colors for which a $G$-\emph{polychromatic coloring} is possible for the edges of any hypercube (containing $G$). $p(G)$ is called the \emph{polychromatic number} of $G$.  

The case $G=Q_d$, a sub-hypercube in $Q_n$, was introduced by Alon, Krech and Szab{\'o} in \cite[Theorem 4]{AKS07}.      
\begin{theorem}[Alon, Krech and Szab{\'o}]
\[
\displaystyle\binom{d+1}{2} \geq p(Q_d) \geq \left\lfloor \frac{(d+1)^2}{4} \right\rfloor
\]
\end{theorem}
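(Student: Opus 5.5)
The plan is to prove the two inequalities separately: the lower bound by an explicit coloring, and the upper bound by a Ramsey-type canonization argument. Both rest on one bookkeeping device. For an edge $e$ of $Q_n$ whose flipped coordinate is $i$, let $\alpha(e)$ be the number of coordinates $j<i$ with $e_j=2$, and let $\beta(e)$ be the number of coordinates $j>i$ with $e_j=2$.

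\emph{Lower bound.} Fix positive integers $r,s$ with $r+s=d+1$, and color each edge by
\[
\mu(e)=\bigl(\alpha(e)\bmod r,\ \beta(e)\bmod s\bigr).
\]
This uses $rs$ colors, and choosing $r=\lfloor (d+1)/2\rfloor$ gives $rs=\lfloor (d+1)^2/4\rfloor$.

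To check polychromaticity, take any embedded $Q_d$ with free directions $j_1<\dots<j_d$. Consider the edges of this subcube in direction $j_r$.
\begin{itemize}
\item There are $r-1$ free coordinates before $j_r$. Setting them to $1$ or $2$ independently makes $\alpha$ range over an interval of $r$ consecutive integers, so $\alpha \bmod r$ takes every residue.
\item There are $d-r=s-1$ free coordinates after $j_r$. Similarly, $\beta$ ranges over $s$ consecutive integers, so $\beta \bmod s$ takes every residue.
\end{itemize}
These two groups of coordinates are disjoint, so the two residues can be chosen independently. Hence every color occurs in the subcube, and $p(Q_d)\ge\lfloor (d+1)^2/4\rfloor$.

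\emph{Upper bound.} Here I must show that for every $p$-coloring with $p>\binom{d+1}{2}$ and every $n$, some embedded $Q_d$ misses a color. The key observation is that inside an embedded $Q_d$, an edge has an internal type $(a,b)$: $a$ is the number of free coordinates before the flip direction set to $2$, and $b$ is the number after it set to $2$, with $a+b\le d-1$. There are exactly $\binom{d+1}{2}$ such types. So it suffices to find, for $n$ large, an embedded $Q_d$ on which the color of an edge depends only on its internal type.

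To find it, I would encode edges by finite sets. Take $n$ large and restrict attention to a structured family of embeddings. The free directions are a $d$-subset $S\subseteq[m]$, and each free coordinate is blown up into a block of coordinates that all take the same value, with fixed coordinates set to $1$ in a canonical way. Under this encoding, an edge of such an embedding corresponds to an ordered choice of a flip position together with the set of positions set to $2$ on each side. That data is determined by a subset of $[m]$ of bounded size together with a pattern label.

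Coloring these labelled subsets by $\mu$ and applying the finite hypergraph Ramsey theorem, once for each of the finitely many patterns, gives a large homogeneous set $H\subseteq[m]$. Any $d$-subset of $H$ then yields a $Q_d$ whose edge colors depend only on the pattern, that is, only on $(a,b)$. Such a subcube shows at most $\binom{d+1}{2}$ colors, contradicting polychromaticity.

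The main obstacle is this last step: choosing the encoding so that the edges of one embedded $Q_d$ correspond faithfully to subsets of a common ground set, with the order information ("before" versus "after" the flip direction) preserved. This is what makes Ramsey's theorem applicable. I would try first to realize each free direction as a single coordinate and all fixed coordinates as $1$; then an edge is exactly the pair consisting of its flip coordinate and its set of $2$-coordinates, which is an ordered subset of $[n]$. I would apply Ramsey to the colorings of $k$-subsets for each $k\le d$, with the marked flip position recorded as part of the color, and intersect the homogeneous sets successively.
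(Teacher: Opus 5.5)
The paper gives no proof of this statement; it quotes it from \cite[Theorem~4]{AKS07}. Your proposal is correct and is essentially the Alon--Krech--Szab\'o argument: the $(\alpha \bmod r,\beta \bmod s)$ coloring gives the lower bound, and a Ramsey reduction to colorings that depend only on the internal type $(a,b)$ gives the upper bound.

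The lower bound is complete as written. In any subcube, the edges in the single direction $j_r$ already realize all $rs$ residue pairs.

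For the upper bound, commit to the encoding in your last paragraph. It removes what you call the main obstacle.
\begin{itemize}
\item Set all fixed coordinates equal to $1$. An edge of a subcube whose free set lies in $H$ is then a pair $(i,A)$, where $B=A\cup\{i\}$ is a $k$-subset with $k\le d$ and $i$ is its $\ell$-th element.
\item Its internal type is $(a,b)=(\ell-1,k-\ell)$. The before/after information is therefore carried automatically by the order of $[n]$.
\item Color each $k$-set $B$ by the $k$-tuple of colors of its $k$ marked versions, or run Ramsey once per pair $(k,\ell)$.
\item Apply Ramsey's theorem for $k=1,\dots,d$ in turn, each time inside the homogeneous set already found. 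The sets must be nested in this way. Homogeneous sets found independently in $[n]$ and then intersected may have too small an intersection.
\end{itemize}

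Finally, discard the middle paragraph's device of blowing each free direction up into a block of coordinates that move together. For blocks of size at least $2$, flipping a block is not an edge of $Q_n$. So the resulting object is not a subcube (an $n$-vector with exactly $d$ zeros), which is the notion of embedding used here.
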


Offner gave the exact value of the polychromatic number of $Q_d$ in \cite[Theorem 2]{Off08}.

\begin{theorem}[Offner] 

\[
p(Q_d)=\left\lfloor \frac{(d+1)^2}{4} \right\rfloor
\]
\label{thm:offner}
\end{theorem}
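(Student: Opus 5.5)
Since the lower bound $p(Q_d)\ge \lfloor (d+1)^2/4\rfloor$ is already provided by the theorem of Alon, Krech and Szab\'o, the plan is to prove only the matching upper bound. Concretely, fix $d$ and $p=\lfloor (d+1)^2/4\rfloor+1$ colors. I want to show that for every $n$ and every edge-coloring $\mu$ of $Q_n$ with $p$ colors there is an embedding of $Q_d$ that misses some color. Equivalently, I would show that every edge-coloring of a sufficiently large $Q_n$ admits an embedded $Q_d$ whose edges carry at most $\lfloor (d+1)^2/4\rfloor$ distinct colors.

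\emph{Step 1: canonize the coloring by Ramsey's theorem.} I would work with embeddings of a special shape. Choose positions $i_1<\dots<i_m$ in $[n]$, with $m\gg d$. Then use \emph{block} substitutions: each free coordinate of a smaller cube is replaced by a consecutive block of the chosen positions, all moving together, and each fixed coordinate is set to the constant value $1$ or $2$. An edge of such a subcube is then described by its type. Record which block is flipped and the $\{1,2\}$-pattern on the remaining blocks. Applying the hypergraph Ramsey theorem, or more conveniently the Graham--Rothschild parameter-word theorem, to the finitely many edge types, I expect to find a large subcube $Q_N$, still with $N\ge d$, with the following property: the colour of an edge in direction $j$ depends only on a bounded amount of order data. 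Ideally it depends only on the word formed by the coordinates to the left of $j$ and the word to the right, up to a canonical equivalence. The aim is to reduce to colourings that are \emph{invariant under order-preserving compressions of coordinates}.

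\emph{Step 2: reduce the canonical form to two counts and pick a thin subcube.} For such an invariant colouring, I would further restrict to subcubes in which the coordinates to the left of the flipped direction are arranged ``$1$'s then $2$'s'', and those to the right likewise. Then an edge in direction $j$ is determined by a pair $(a,b)$: $a$ is the number of $2$'s before $j$ and $b$ the number of $1$'s after $j$. This matches the extremal construction of Alon--Krech--Szab\'o. Inside this structured cube I would locate a $Q_d$ realising only pairs $(a,b)$ with $a+b$ constrained, via a diagonal embedding that interleaves the $d$ free coordinates with fixed ones. The count of admissible pairs should then be $\max_k (k+1)(d+1-k)$ restricted appropriately, that is, $\lfloor (d+1)^2/4\rfloor$.

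\emph{Main obstacle.} The hard part is Step 2: the Ramsey canonization of Step 1 does not by itself collapse colours to the pair-counts. A naive count gives $\sum_i i(d-i+1)$ edge types, far more than $\lfloor (d+1)^2/4\rfloor$. I would first try to show that two edge classes with the same ``profile'' must receive the same colour in the canonized cube, using the freedom to choose different block substitutions that map one class onto another. Then I would choose the $Q_d$ so that all its edges fall into at most $\lfloor (d+1)^2/4\rfloor$ profiles, by an averaging or pigeonhole over the possible split points. If that identification fails, the fallback is an induction on $d$. Here one shows that a polychromatic colouring of $Q_d$ restricts to polychromatic colourings of two disjoint $Q_{d-1}$'s sharing at most $\lceil d/2\rceil$ new colours. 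The bound would then follow from $\lfloor (d+1)^2/4\rfloor-\lfloor d^2/4\rfloor=\lceil d/2\rceil$.
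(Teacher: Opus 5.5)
The paper does not prove this statement; it quotes it from Offner, with the lower bound taken from the Alon--Krech--Szab\'o construction. Your argument therefore has to stand on its own. The architecture is the right one: canonize by Ramsey, then analyse the canonical colourings. Step 1 as written is not legitimate, though. Suppose a free coordinate of the small cube is replaced by a block of two or more positions moving together. An ``edge'' in that direction then joins two vertices of $Q_n$ that differ in several coordinates. That pair is not an edge of $Q_n$, $\mu$ gives it no colour, and the image is not an embedded cube in the sense used here (a vector with exactly $d$ zero coordinates). Parameter words are the tool for combinatorial subspaces, not for the hypercube graph. What works is ordinary coordinate subcubes. Apply Ramsey's theorem, iterated over the uniformities $s$, to find a large $M\subseteq[n]$ with the following property: for all $s,t$ and every $s$-set $S\subseteq M$, the colour of the edge whose flipped coordinate is the $t$-th element of $S$, whose other coordinates in $S$ are $2$, and whose remaining coordinates are $1$, does not depend on $S$. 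On the subcube with free coordinates $M$ (all other coordinates $1$), the colour of an edge is then a function $c(a,b)$. Here $a$ and $b$ are the numbers of $2$'s to the left and to the right of the flipped coordinate.

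The decisive gap is Step 2, which cannot work as stated. Take any embedded $Q_d$ with free coordinates $j_1<\dots<j_d$. Let $\alpha_i,\beta_i$ be the contributions of the fixed coordinates to $a$ and $b$ for edges in direction $j_i$. These edges realise exactly the pairs $[\alpha_i,\alpha_i+i-1]\times[\beta_i,\beta_i+d-i]$. Since $\alpha_i+i-1$ is strictly increasing in $i$, the rightmost columns of these rectangles are pairwise disjoint. Hence every $Q_d$ carries at least $\sum_i(d-i+1)=\binom{d+1}{2}$ distinct pairs, and the same count holds for your parametrisation. Equality occurs, for example, when no fixed $2$ sits between free coordinates. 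So choosing a single thin subcube can only reproduce the Alon--Krech--Szab\'o bound $\binom{d+1}{2}$. Your count $\max_k(k+1)(d+1-k)$ is also off: it equals $\lfloor (d+2)^2/4\rfloor$. Moreover, a maximum over $k$ describes what the construction achieves, not what a witness subcube contains. To reach $\lfloor (d+1)^2/4\rfloor$ you must show that polychromaticity forces distinct pairs to share colours, by an argument ranging over many subcubes at once. Take $d=2$, $p=3$. Every triangle $\{(a,b),(a+1,b),(a,b+1)\}$ must be rainbow, which forces $c(a,b)$ to depend only on $(a-b)\bmod 3$. Then a subcube with a single fixed $2$ between its free coordinates realises $\{(a,b),(a,b+1),(a+1,b-1),(a+2,b-1)\}$, which meets only two residues and so misses a colour. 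Nothing in your plan supplies such forcing. Every function of $(a,b)$ is already a canonical colouring, so re-embedding arguments cannot identify distinct pairs; any collapse has to come from the polychromatic hypothesis. The inductive fallback, losing at most $\lceil d/2\rceil$ colours in passing from $d$ to $d-1$, simply restates the missing claim. The identity $\lfloor (d+1)^2/4\rfloor-\lfloor d^2/4\rfloor=\lceil d/2\rceil$ is the easy part.
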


By using Offner's result \ref{thm:offner}, Kuznetsov in \cite{Kuz19} discussed the two strategies belonging to Fontaine and Shatrov for proving that $\mathsf{Cheq}$ is not finitely axiomatizable, respectively. He also presented an alternative solution as an open problem for proving the non-finitely axiomatizability of $\mathsf{Cheq}$.

\subsection{Fontaine's structures and Shatrov's structures}
Recall that Maksimova, Skvortsov and Shehtman provided a method to prove that $\mathsf{Med}$ is not finitely axiomatizable in \cite[Corollary 5]{MSS79} by showing $\mathbf{L}_{s,2^{s+3}}$ is not $\mathsf{Med}$-frame while $\mathbf{L}'_{s,2^{s+3},m}$ is $\mathsf{Med}$-frame for any $s\geq 1$.   

\begin{figure}[H]
\centering
\begin{tikzpicture}[scale=0.3, node distance=2cm, every node/.style={circle, fill=black, minimum size=3.5pt, inner sep=0pt}]
\begin{scope}[xshift=-4cm] 
    
    \node (1)[label={[font=\small]below:\text{($s$+2,0)}}]at (0,0) {};
    \node (2)[label={[font=\scriptsize]left:($s$+1,0)}] at (-4.5,2.5) {};
    \node (3) at (-2,2.5) {};
    \node (4) at (2,2.5) {};
    \node (5)[label={[font=\scriptsize]right:($s$+1,$2\times3^{s+2}$)}] at (4.5,2.5) {};
    \node (6)[label={[font=\small]left:($s$,0)}] at (-1.7,5) {};
    \node (7)[label={[font=\small]right:($s$,2)}] at (1.7,5) {};
    \node (8)[label={[font=\small]left:($s$-1,0)}] at (-1.7,6.5) {};
    \node (9)[label={[font=\small]right:($s$-1,2)}] at (1.7,6.5) {};
    \node (10)[label={[font=\small]left:($m$+1,0)}] at (-1.7,9) {};
    \node (11)[label={[font=\small]right:($m$+1,2)}] at (1.7,9) {};
    \node (12)[label={[font=\small]left:($m$,0)}] at (-1.7,10.5) {};
    \node (13)[label={[font=\small]right:($m$,2)}] at (1.7,10.5) {};
    \node (14)[label={[font=\small]left:($m$-1,0)}] at (-1.7,12) {};
    \node (15)[label={[font=\small]right:($m$-1,2)}] at (1.7,12) {};

   \node (16)[label={[font=\small]left:(3,0)}] at (-1.7,14.5) {};
   \node (17)[label={[font=\small]right:(3,2)}] at (1.7,14.5) {};
   \node (18)[label={[font=\small]left:(2,0)}] at (-1.7,16) {};
   \node (19)[label={[font=\small]right:(2,2)}] at (1.7,16) {};
    
   \node (20)[label={[font=\small]left:(1,0)}] at (-1.7,17.5) {};
   \node (21)[label={[font=\small]right:(1,2)}] at (1.7,17.5) {};
   \node (22)[label={[font=\small]left:(0,0)}] at (-1.7,19) {};
   \node (23)[label={[font=\small]right:(0,1)}] at (1.7,19) {};
  
   \node (31) at (0,17.5) {};

 \node (24) at (0,16) {};
 \node (25) at (0,14.5) {};
  \node (26) at (0,12) {};
  \node (27) at (0,10.5) {};
  \node (28) at (0,9) {};
  \node (29) at (0,6.5) {};
  \node (30) at (0,5) {};

\draw[thin, dashed, dash pattern=on 1.6pt off 1pt] (8) -- ++(0,0.6); 
\draw[thin, dashed, dash pattern=on 1.6pt off 1pt] (9) -- ++(0,0.6); 
\draw[thin, dashed, dash pattern=on 1.6pt off 1pt] (10) -- ++(0,-0.6); 
\draw[thin, dashed, dash pattern=on 1.6pt off 1pt] (11) -- ++(0,-0.6);     

\draw[thin, dashed, dash pattern=on 1.6pt off 1pt] (29) -- ++(0,0.6); 
\draw[thin, dashed, dash pattern=on 1.6pt off 1pt] (28) -- ++(0,-0.6);

\draw[thin, dashed, dash pattern=on 1.6pt off 1pt] (14) -- ++(0,0.6); 
\draw[thin, dashed, dash pattern=on 1.6pt off 1pt] (15) -- ++(0,0.6); 
\draw[thin, dashed, dash pattern=on 1.6pt off 1pt] (16) -- ++(0,-0.6); 
\draw[thin, dashed, dash pattern=on 1.6pt off 1pt] (17) -- ++(0,-0.6);     

\draw[thin, dashed, dash pattern=on 1.6pt off 1pt] (26) -- ++(0,0.6); 
\draw[thin, dashed, dash pattern=on 1.6pt off 1pt] (25) -- ++(0,-0.6);

   \fill (-0.5,2.5) circle (1.5pt);
    \fill (-1,2.5) circle (1.5pt);
    \fill (-0.75,2.5) circle (1.5pt);
  \fill (1,2.5) circle (1.5pt);
    \fill (0.75,2.5) circle (1.5pt);
    \fill (0.5,2.5) circle (1.5pt);

    \draw[thick] (1) -- (2);
    \draw[thick] (1) -- (3);
    \draw[thick] (1) -- (4);
    \draw[thick] (1) -- (5);
    
    \draw[thick] (2) -- (6);
    \draw[thick] (2) -- (7);
     \draw[thick] (3) -- (6);
    \draw[thick] (3) -- (7);
     \draw[thick] (4) -- (6);
    \draw[thick] (4) -- (7);
     \draw[thick] (5) -- (6);
    \draw[thick] (5) -- (7); 
    
    \draw[thick] (6) -- (8);
    \draw[thick] (6) -- (9);
    \draw[thick] (7) -- (8);
    \draw[thick] (7) -- (9);

    \draw[thick] (10) -- (12);
    \draw[thick] (10) -- (13);
    \draw[thick] (11) -- (12);
    \draw[thick] (11) -- (13);
    \draw[thick] (12) -- (14);
    \draw[thick] (12) -- (15);
    \draw[thick] (13) -- (14);
    \draw[thick] (13) -- (15);
   
    \draw[thick] (16) -- (18);
    \draw[thick] (16) -- (19);
    \draw[thick] (17) -- (18);
    \draw[thick] (17) -- (19);

\draw[thick] (18) -- (20);
    \draw[thick] (18) -- (21);
    \draw[thick] (19) -- (20);
    \draw[thick] (19) -- (21);
\draw[thick] (20) -- (22);
    \draw[thick] (20) -- (23);
    \draw[thick] (21) -- (22);
    \draw[thick] (21) -- (23);

\draw[thick] (20) -- (24);
    \draw[thick] (21) -- (24);
    \draw[thick] (16) -- (24);
    \draw[thick] (25) -- (24);
\draw[thick] (17) -- (24);
    \draw[thick] (25) -- (18);
    \draw[thick] (25) -- (19);

    \draw[thick] (26) -- (12);
\draw[thick] (26) -- (27);
    \draw[thick] (26) -- (13);
 
 \draw[thick] (27) -- (14);
\draw[thick] (15) -- (27);
    \draw[thick] (28) -- (13); 
    \draw[thick] (28) -- (12);
\draw[thick] (10) -- (27);
    \draw[thick] (11) -- (27);
    
     \draw[thick] (29) -- (6);
\draw[thick] (29) -- (7);
    \draw[thick] (29) -- (30); 
    \draw[thick] (8) -- (30);
    
     \draw[thick] (9) -- (30);
\draw[thick] (2) -- (30);
    \draw[thick] (3) -- (30); 
    \draw[thick] (4) -- (30);
    \draw[thick] (5) -- (30);
    \draw[thick] (27) -- (28);

 \draw[thick] (31) -- (23);
 \draw[thick] (31) -- (22);
 \draw[thick] (31) -- (18);
  \draw[thick] (31) -- (19);
  \draw[thick] (31) -- (24);

\end{scope}

\begin{scope}[xshift=13.9cm] 
    
    \node (1)[label={[font=\small]below:\text{($s$+2,0)}}]at (0,0) {};
    \node (2)[label={[font=\scriptsize]left:($s$+1,0)}] at (-4.5,2.5) {};
    \node (3) at (-2,2.5) {};
    \node (4) at (2,2.5) {};
    \node (5)[label={[font=\scriptsize]right:($s$+1,$2\times3^{s+2}$)}] at (4.5,2.5) {};
    \node (6)[label={[font=\small]left:($s$,0)}] at (-1.7,5) {};
    \node (7)[label={[font=\small]right:($s$,2)}] at (1.7,5) {};
    \node (8)[label={[font=\small]left:($s$-1,0)}] at (-1.7,6.5) {};
    \node (9)[label={[font=\small]right:($s$-1,2)}] at (1.7,6.5) {};
    \node (10)[label={[font=\small]left:($m$+1,0)}] at (-1.7,9) {};
    \node (11)[label={[font=\small]right:($m$+1,2)}] at (1.7,9) {};
   
    \node (14)[label={[font=\small]left:($m$-1,0)}] at (-1.7,12) {};
    \node (15)[label={[font=\small]right:($m$-1,2)}] at (1.7,12) {};

   \node (16)[label={[font=\small]left:(3,0)}] at (-1.7,14.5) {};
   \node (17)[label={[font=\small]right:(3,2)}] at (1.7,14.5) {};
   \node (18)[label={[font=\small]left:(2,0)}] at (-1.7,16) {};
   \node (19)[label={[font=\small]right:(2,2)}] at (1.7,16) {};
    
   \node (20)[label={[font=\small]left:(1,0)}] at (-1.7,17.5) {};
   \node (21)[label={[font=\small]right:(1,2)}] at (1.7,17.5) {};
   \node (22)[label={[font=\small]left:(0,0)}] at (-1.7,19) {};
   \node (23)[label={[font=\small]right:(0,1)}] at (1.7,19) {};
  
   \node (31) at (0,17.5) {};

 \node (24) at (0,16) {};
 \node (25) at (0,14.5) {};
  \node (26) at (0,12) {};

 \node (27)[label={[font=\small,label distance=1.05cm]180:($m$,0)}] at (0,10.5) {};

  \node (28) at (0,9) {};
  \node (29) at (0,6.5) {};
  \node (30) at (0,5) {};

\draw[thin, dashed, dash pattern=on 1.6pt off 1pt] (8) -- ++(0,0.6); 
\draw[thin, dashed, dash pattern=on 1.6pt off 1pt] (9) -- ++(0,0.6); 
\draw[thin, dashed, dash pattern=on 1.6pt off 1pt] (10) -- ++(0,-0.6); 
\draw[thin, dashed, dash pattern=on 1.6pt off 1pt] (11) -- ++(0,-0.6);     

\draw[thin, dashed, dash pattern=on 1.6pt off 1pt] (29) -- ++(0,0.6); 
\draw[thin, dashed, dash pattern=on 1.6pt off 1pt] (28) -- ++(0,-0.6);

\draw[thin, dashed, dash pattern=on 1.6pt off 1pt] (14) -- ++(0,0.6); 
\draw[thin, dashed, dash pattern=on 1.6pt off 1pt] (15) -- ++(0,0.6); 
\draw[thin, dashed, dash pattern=on 1.6pt off 1pt] (16) -- ++(0,-0.6); 
\draw[thin, dashed, dash pattern=on 1.6pt off 1pt] (17) -- ++(0,-0.6);     

\draw[thin, dashed, dash pattern=on 1.6pt off 1pt] (26) -- ++(0,0.6); 
\draw[thin, dashed, dash pattern=on 1.6pt off 1pt] (25) -- ++(0,-0.6);

   \fill (-0.5,2.5) circle (1.5pt);
    \fill (-1,2.5) circle (1.5pt);
    \fill (-0.75,2.5) circle (1.5pt);
  \fill (1,2.5) circle (1.5pt);
    \fill (0.75,2.5) circle (1.5pt);
    \fill (0.5,2.5) circle (1.5pt);

    \draw[thick] (1) -- (2);
    \draw[thick] (1) -- (3);
    \draw[thick] (1) -- (4);
    \draw[thick] (1) -- (5);
    
    \draw[thick] (2) -- (6);
    \draw[thick] (2) -- (7);
     \draw[thick] (3) -- (6);
    \draw[thick] (3) -- (7);
     \draw[thick] (4) -- (6);
    \draw[thick] (4) -- (7);
     \draw[thick] (5) -- (6);
    \draw[thick] (5) -- (7); 
    
    \draw[thick] (6) -- (8);
    \draw[thick] (6) -- (9);
    \draw[thick] (7) -- (8);
    \draw[thick] (7) -- (9);

    \draw[thick] (16) -- (18);
    \draw[thick] (16) -- (19);
    \draw[thick] (17) -- (18);
    \draw[thick] (17) -- (19);

\draw[thick] (18) -- (20);
    \draw[thick] (18) -- (21);
    \draw[thick] (19) -- (20);
    \draw[thick] (19) -- (21);
\draw[thick] (20) -- (22);
    \draw[thick] (20) -- (23);
    \draw[thick] (21) -- (22);
    \draw[thick] (21) -- (23);

\draw[thick] (20) -- (24);
    \draw[thick] (21) -- (24);
    \draw[thick] (16) -- (24);
    \draw[thick] (25) -- (24);
\draw[thick] (17) -- (24);
    \draw[thick] (25) -- (18);
    \draw[thick] (25) -- (19);

\draw[thick] (26) -- (27);
 
 \draw[thick] (27) -- (14);
\draw[thick] (15) -- (27);
\draw[thick] (10) -- (27);
    \draw[thick] (11) -- (27);
    
     \draw[thick] (29) -- (6);
\draw[thick] (29) -- (7);
    \draw[thick] (29) -- (30); 
    \draw[thick] (8) -- (30);
    
     \draw[thick] (9) -- (30);
\draw[thick] (2) -- (30);
    \draw[thick] (3) -- (30); 
    \draw[thick] (4) -- (30);
    \draw[thick] (5) -- (30);
    \draw[thick] (27) -- (28);

 \draw[thick] (31) -- (23);
 \draw[thick] (31) -- (22);
 \draw[thick] (31) -- (18);
  \draw[thick] (31) -- (19);
  \draw[thick] (31) -- (24);

\end{scope}

\end{tikzpicture}

\caption{The frames $\mathbf{G}_s$ and $\mathbf{G}'_{s,m}$}
\label{fig:gaelle}
\end{figure}

\begin{figure}[h]
\centering
\begin{tikzpicture}[scale=0.3, node distance=2cm, every node/.style={circle, fill=black, minimum size=3.5pt, inner sep=0pt}]

\begin{scope}[xshift=-4cm] 
    
    \node (1)[label={[font=\small]below:\text{($s$+2,0)}}]at (0.85,1) {};
    \node (2)[label={[font=\scriptsize]left:($s$+1,0)}] at (-2.8,3) {};
    \node (3) at (-0.85,3) {};
    \node (4) at (2.55,3) {};
    \node (5)[label={[font=\scriptsize]right:($s$+1,$2^{s+3}$)}] at (4.5,3) {};
    
    \node (6)[label={[font=\small]left:($s$,0)}] at (-1.7,5) {};
    \node (7)[label={[font=\small]right:($s$,3)}] at (3.4,5) {};
    \node (8)[label={[font=\small]left:($s$-1,0)}] at (-1.7,6.5) {};
    \node (9)[label={[font=\small]right:($s$-1,3)}] at (3.4,6.5) {};
   
    \node (10)[label={[font=\small]left:($m$+1,0)}] at (-1.7,9) {};
    \node (11)[label={[font=\small]right:($m$+1,3)}] at (3.4,9) {};
    \node (12)[label={[font=\small]left:($m$,0)}] at (-1.7,10.5) {};
    \node (13)[label={[font=\small]right:($m$,3)}] at (3.4,10.5) {};
    \node (14)[label={[font=\small]left:($m$-1,0)}] at (-1.7,12) {};
    \node (15)[label={[font=\small]right:($m$-1,3)}] at (3.4,12) {};

   \node (16)[label={[font=\small]left:(2,0)}] at (-1.7,14.5) {};
   \node (17)[label={[font=\small]right:(2,3)}] at (3.4,14.5) {};
   \node (18)[label={[font=\small]left:(1,0)}] at (-1.7,16) {};
   \node (19)[label={[font=\small]right:(1,3)}] at (3.4,16) {};
    
   \node (20)[label={[font=\small]left:(0,0)}] at (0,17.5) {};
   \node (21)[label={[font=\small]right:(0,1)}] at (1.7,17.5) {};

 \node (24) at (0,16) {};
 \node (25) at (0,14.5) {};
  \node (26) at (0,12) {};
  \node (27) at (0,10.5) {};
  \node (28) at (0,9) {};
  \node (29) at (0,6.5) {};
  \node (30) at (0,5) {};

 \node (31) at (1.7,16) {};
 \node (32) at (1.7,14.5) {};
  \node (33) at (1.7,12) {};
  \node (34) at (1.7,10.5) {};
  \node (35) at (1.7,9) {};
 \node (36) at (1.7,6.5) {};
  \node (37) at (1.7,5) {};

\draw[thin, dashed, dash pattern=on 1.6pt off 1pt] (8) -- ++(0,0.6); 
\draw[thin, dashed, dash pattern=on 1.6pt off 1pt] (9) -- ++(0,0.6); 
\draw[thin, dashed, dash pattern=on 1.6pt off 1pt] (10) -- ++(0,-0.6); 
\draw[thin, dashed, dash pattern=on 1.6pt off 1pt] (11) -- ++(0,-0.6);     

\draw[thin, dashed, dash pattern=on 1.6pt off 1pt] (29) -- ++(0,0.6); 
\draw[thin, dashed, dash pattern=on 1.6pt off 1pt] (28) -- ++(0,-0.6); 

\draw[thin, dashed, dash pattern=on 1.6pt off 1pt] (35) -- ++(0,-0.6); 
\draw[thin, dashed, dash pattern=on 1.6pt off 1pt] (36) -- ++(0,0.6);

\draw[thin, dashed, dash pattern=on 1.6pt off 1pt] (14) -- ++(0,0.6); 
\draw[thin, dashed, dash pattern=on 1.6pt off 1pt] (15) -- ++(0,0.6); 
\draw[thin, dashed, dash pattern=on 1.6pt off 1pt] (16) -- ++(0,-0.6); 
\draw[thin, dashed, dash pattern=on 1.6pt off 1pt] (17) -- ++(0,-0.6);     

\draw[thin, dashed, dash pattern=on 1.6pt off 1pt] (26) -- ++(0,0.6); 
\draw[thin, dashed, dash pattern=on 1.6pt off 1pt] (25) -- ++(0,-0.6); 
\draw[thin, dashed, dash pattern=on 1.6pt off 1pt] (32) -- ++(0,-0.6); 
\draw[thin, dashed, dash pattern=on 1.6pt off 1pt] (33) -- ++(0,0.6);     
       
   \fill (-0.05,3) circle (1.5pt);
   \fill (0.2,3) circle (1.5pt);
   \fill (0.45,3) circle (1.5pt);
   
   \fill (1.75,3) circle (1.5pt);
   \fill (1.5,3) circle (1.5pt);
   \fill (1.25,3) circle (1.5pt);

    \draw[thick] (1) -- (2);
    \draw[thick] (1) -- (3);
    \draw[thick] (1) -- (4);
    \draw[thick] (1) -- (5);
    
    \draw[thick] (2) -- (6);
    \draw[thick] (2) -- (7);
     \draw[thick] (3) -- (6);
    \draw[thick] (3) -- (7);
     \draw[thick] (4) -- (6);
    \draw[thick] (4) -- (7);
     \draw[thick] (5) -- (6);
    \draw[thick] (5) -- (7); 
    
    \draw[thick] (6) -- (8);
    \draw[thick] (6) -- (9);
    \draw[thick] (7) -- (8);
    \draw[thick] (7) -- (9);

    \draw[thick] (10) -- (12);
    \draw[thick] (10) -- (13);
    \draw[thick] (11) -- (12);
    \draw[thick] (11) -- (13);
    \draw[thick] (12) -- (14);
    \draw[thick] (12) -- (15);
    \draw[thick] (13) -- (14);
    \draw[thick] (13) -- (15);
   
    \draw[thick] (16) -- (18);
    \draw[thick] (16) -- (19);
    \draw[thick] (17) -- (18);
    \draw[thick] (17) -- (19);

\draw[thick] (18) -- (20);
    \draw[thick] (18) -- (21);
    \draw[thick] (19) -- (20);
    \draw[thick] (19) -- (21);

\draw[thick] (20) -- (24);
    \draw[thick] (21) -- (24);
    \draw[thick] (16) -- (24);
    \draw[thick] (25) -- (24);
\draw[thick] (17) -- (24);
    \draw[thick] (25) -- (18);
    \draw[thick] (25) -- (19);

    \draw[thick] (26) -- (12);
\draw[thick] (26) -- (27);
    \draw[thick] (26) -- (13);
 
 \draw[thick] (27) -- (14);
\draw[thick] (15) -- (27);
    \draw[thick] (28) -- (13); 
    \draw[thick] (28) -- (12);
\draw[thick] (10) -- (27);
    \draw[thick] (11) -- (27);
    
     \draw[thick] (29) -- (6);
\draw[thick] (29) -- (7);
    \draw[thick] (29) -- (30); 
    \draw[thick] (8) -- (30);
    
     \draw[thick] (9) -- (30);
\draw[thick] (2) -- (30);
    \draw[thick] (3) -- (30); 
    \draw[thick] (4) -- (30);
    \draw[thick] (5) -- (30);
    \draw[thick] (27) -- (28);

    \draw[thick] (31) -- (20);
    \draw[thick] (31) -- (21);

    \draw[thick] (31) -- (16);
    \draw[thick] (31) -- (25);
    \draw[thick] (31) -- (32);
    \draw[thick] (31) -- (17);

    \draw[thick] (32) -- (18);
    \draw[thick] (32) -- (24);
    \draw[thick] (32) -- (19);
    \draw[thick] (32) -- (31);

    \draw[thick] (33) -- (12);
    \draw[thick] (33) -- (27);
    \draw[thick] (33) -- (34);
    \draw[thick] (33) -- (13);

    \draw[thick] (34) -- (14);
    \draw[thick] (34) -- (26);
    \draw[thick] (34) -- (33);
    \draw[thick] (34) -- (15);
    \draw[thick] (34) -- (10);
    \draw[thick] (34) -- (28);
    \draw[thick] (34) -- (35);
    \draw[thick] (34) -- (11);

    \draw[thick] (35) -- (12);
    \draw[thick] (35) -- (27);
    \draw[thick] (35) -- (34);
    \draw[thick] (35) -- (13);

    \draw[thick] (36) -- (6);
    \draw[thick] (36) -- (30);
    \draw[thick] (36) -- (37);
    \draw[thick] (36) -- (7);
     
     \draw[thick] (37) -- (8);
    \draw[thick] (37) -- (29);
    \draw[thick] (37) -- (36);
    \draw[thick] (37) -- (9);
      
      \draw[thick] (37) -- (5);
    \draw[thick] (37) -- (4);
    \draw[thick] (37) -- (3);
    \draw[thick] (37) -- (2);

\end{scope}

\begin{scope}[xshift=12.7cm] 
    
   \node (1)[label={[font=\small]below:\text{($s$+2,0)}}]at (0.85,1) {};
    \node (2)[label={[font=\scriptsize]left:($s$+1,0)}] at (-2.8,3) {};
    \node (3) at (-0.85,3) {};
    \node (4) at (2.55,3) {};
    \node (5)[label={[font=\scriptsize]right:($s$+1,$2^{s+3}$)}] at (4.5,3) {};
    
    \node (6)[label={[font=\small]left:($s$,0)}] at (-1.7,5) {};
    \node (7)[label={[font=\small]right:($s$,3)}] at (3.4,5) {};
    \node (8)[label={[font=\small]left:($s$-1,0)}] at (-1.7,6.5) {};
    \node (9)[label={[font=\small]right:($s$-1,3)}] at (3.4,6.5) {};
   
    \node (10)[label={[font=\small]left:($m$+1,0)}] at (-1.7,9) {};
    \node (11)[label={[font=\small]right:($m$+1,3)}] at (3.4,9) {};
    \node (14)[label={[font=\small]left:($m$-1,0)}] at (-1.7,12) {};
    \node (15)[label={[font=\small]right:($m$-1,3)}] at (3.4,12) {};

   \node (16)[label={[font=\small]left:(2,0)}] at (-1.7,14.5) {};
   \node (17)[label={[font=\small]right:(2,3)}] at (3.4,14.5) {};
   \node (18)[label={[font=\small]left:(1,0)}] at (-1.7,16) {};
   \node (19)[label={[font=\small]right:(1,3)}] at (3.4,16) {};
    
   \node (20)[label={[font=\small]left:(0,0)}] at (0,17.5) {};
   \node (21)[label={[font=\small]right:(0,1)}] at (1.7,17.5) {};

 \node (24) at (0,16) {};
 \node (25) at (0,14.5) {};
  \node (26) at (0,12) {};
  \node (28) at (0,9) {};
  \node (29) at (0,6.5) {};
  \node (30) at (0,5) {};

 \node (31) at (1.7,16) {};
 \node (32) at (1.7,14.5) {};
  \node (33) at (1.7,12) {};
  \node (35) at (1.7,9) {};
 \node (36) at (1.7,6.5) {};
  \node (37) at (1.7,5) {};

  \node (40)[label={[font=\small,label distance=0.8cm]180:($m$,0)}] at (0.85,10.5) {};

\draw[thin, dashed, dash pattern=on 1.6pt off 1pt] (8) -- ++(0,0.6); 
\draw[thin, dashed, dash pattern=on 1.6pt off 1pt] (9) -- ++(0,0.6); 
\draw[thin, dashed, dash pattern=on 1.6pt off 1pt] (10) -- ++(0,-0.6); 
\draw[thin, dashed, dash pattern=on 1.6pt off 1pt] (11) -- ++(0,-0.6);     

\draw[thin, dashed, dash pattern=on 1.6pt off 1pt] (29) -- ++(0,0.6); 
\draw[thin, dashed, dash pattern=on 1.6pt off 1pt] (28) -- ++(0,-0.6); 

\draw[thin, dashed, dash pattern=on 1.6pt off 1pt] (35) -- ++(0,-0.6); 
\draw[thin, dashed, dash pattern=on 1.6pt off 1pt] (36) -- ++(0,0.6);

\draw[thin, dashed, dash pattern=on 1.6pt off 1pt] (14) -- ++(0,0.6); 
\draw[thin, dashed, dash pattern=on 1.6pt off 1pt] (15) -- ++(0,0.6); 
\draw[thin, dashed, dash pattern=on 1.6pt off 1pt] (16) -- ++(0,-0.6); 
\draw[thin, dashed, dash pattern=on 1.6pt off 1pt] (17) -- ++(0,-0.6);     

\draw[thin, dashed, dash pattern=on 1.6pt off 1pt] (26) -- ++(0,0.6); 
\draw[thin, dashed, dash pattern=on 1.6pt off 1pt] (25) -- ++(0,-0.6); 
\draw[thin, dashed, dash pattern=on 1.6pt off 1pt] (32) -- ++(0,-0.6); 
\draw[thin, dashed, dash pattern=on 1.6pt off 1pt] (33) -- ++(0,0.6);     
       
  \fill (-0.05,3) circle (1.5pt);
   \fill (0.2,3) circle (1.5pt);
   \fill (0.45,3) circle (1.5pt);
   
   \fill (1.75,3) circle (1.5pt);
   \fill (1.5,3) circle (1.5pt);
   \fill (1.25,3) circle (1.5pt);

    \draw[thick] (1) -- (2);
    \draw[thick] (1) -- (3);
    \draw[thick] (1) -- (4);
    \draw[thick] (1) -- (5);
    
    \draw[thick] (2) -- (6);
    \draw[thick] (2) -- (7);
     \draw[thick] (3) -- (6);
    \draw[thick] (3) -- (7);
     \draw[thick] (4) -- (6);
    \draw[thick] (4) -- (7);
     \draw[thick] (5) -- (6);
    \draw[thick] (5) -- (7); 
    
    \draw[thick] (6) -- (8);
    \draw[thick] (6) -- (9);
    \draw[thick] (7) -- (8);
    \draw[thick] (7) -- (9);

    \draw[thick] (16) -- (18);
    \draw[thick] (16) -- (19);
    \draw[thick] (17) -- (18);
    \draw[thick] (17) -- (19);

\draw[thick] (18) -- (20);
    \draw[thick] (18) -- (21);
    \draw[thick] (19) -- (20);
    \draw[thick] (19) -- (21);

\draw[thick] (20) -- (24);
    \draw[thick] (21) -- (24);
    \draw[thick] (16) -- (24);
    \draw[thick] (25) -- (24);
\draw[thick] (17) -- (24);
    \draw[thick] (25) -- (18);
    \draw[thick] (25) -- (19);

     \draw[thick] (29) -- (6);
\draw[thick] (29) -- (7);
    \draw[thick] (29) -- (30); 
    \draw[thick] (8) -- (30);
    
     \draw[thick] (9) -- (30);
\draw[thick] (2) -- (30);
    \draw[thick] (3) -- (30); 
    \draw[thick] (4) -- (30);
    \draw[thick] (5) -- (30);

    \draw[thick] (31) -- (20);
    \draw[thick] (31) -- (21);

    \draw[thick] (31) -- (16);
    \draw[thick] (31) -- (25);
    \draw[thick] (31) -- (32);
    \draw[thick] (31) -- (17);

    \draw[thick] (32) -- (18);
    \draw[thick] (32) -- (24);
    \draw[thick] (32) -- (19);
    \draw[thick] (32) -- (31);

     \draw[thick] (36) -- (6);
    \draw[thick] (36) -- (30);
    \draw[thick] (36) -- (37);
    \draw[thick] (36) -- (7);
     
     \draw[thick] (37) -- (8);
    \draw[thick] (37) -- (29);
    \draw[thick] (37) -- (36);
    \draw[thick] (37) -- (9);
      
      \draw[thick] (37) -- (5);
    \draw[thick] (37) -- (4);
    \draw[thick] (37) -- (3);
    \draw[thick] (37) -- (2);

    \draw[thick] (40) -- (14);
        \draw[thick] (40) -- (15);
    \draw[thick] (40) -- (26);
    \draw[thick] (40) -- (33);
    
    \draw[thick] (40) -- (10);
    \draw[thick] (40) -- (11);
    \draw[thick] (40) -- (28);
    \draw[thick] (40) -- (35);

\end{scope}

\end{tikzpicture}

\caption{The frames $\mathbf{T}_{s}$ and $\mathbf{T}'_{s,m}$}
\label{fig:shatrov}
\end{figure}

In \cite[\S~5]{Fon07}, Fontaine gave two kinds of structures $\mathbf{G}_s$ and $\mathbf{G}'_{s,m}$ as frames depicted in Figure \ref{fig:gaelle}, she proved that $\mathbf{G}_{s}$ is not a $\mathsf{Cheq}$-frame for any $s\geq 1$ and want to prove that $\mathbf{G}'_{s,m}$ is a $\mathsf{Cheq}$-frame. 

\begin{lemma}
	Suppose $\mathbf{F}_0=(F_0, \leq_0)$ be a frame with a top, and $\mathbf{F}_1=(F_1, \leq_1)$ a frame with a root. If $\mathbf{F}_0$ is a p-morphic image of $\mathbf{V}_n$ and $\mathbf{F}_1$ is a p-morphic image of $\mathbf{V}_m$, it follows that $\mathbf{F}_0\overline\oplus\mathbf{F}_1$ is a p-morphic image of $\mathbf{V}_{n+m}$.
\label{lem:verticalsumcheq}	
\end{lemma}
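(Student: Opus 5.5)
We take $f_0:\mathbf{V}_n\to\mathbf{F}_0$ and $f_1:\mathbf{V}_m\to\mathbf{F}_1$ to be onto p-morphisms and write $\mathbf{V}_{n+m}\cong\mathbf{V}_n\times\mathbf{V}_m$, so a point is a pair $(x,y)$. Let $t$ be the top of $\mathbf{F}_0$ and $r$ the root of $\mathbf{F}_1$; these are identified in the vertical sum $\mathbf{F}_0\overline\oplus\mathbf{F}_1$. The natural candidate is
\[
h(x,y)=\begin{cases} f_0(x), & \text{if } f_0(x)\neq t,\\ f_1(y), & \text{if } f_0(x)=t.\end{cases}
\]
Since $f_1((w_0,\dots,w_0))=r$ (a p-morphism onto a rooted frame sends the root to the root), $h$ agrees with $f_0$ on $\mathbf{V}_n\times\{\text{root}\}$. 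The value $t=r$ arises consistently, so $h$ is well defined.

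\emph{Surjectivity.} Points of $\mathbf{F}_0\setminus\{t\}$ are hit by $(x,\text{root})$. For a point $u\in F_1$, choose $x\in f_0^{-1}(t)$ and $y$ with $f_1(y)=u$; then $h(x,y)=u$.

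\emph{Monotonicity.} Let $(x,y)\le(x',y')$, so $f_0(x)\le f_0(x')$. There are three cases. If both $f_0$-values differ from $t$, monotonicity of $f_0$ gives the claim. If $f_0(x)\neq t$ and $f_0(x')=t$, then $h(x,y)<t=r\le h(x',y')$. If $f_0(x)=t$, then $f_0(x')=t$ as well, since $t$ is the top, and monotonicity of $f_1$ gives the claim.

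\emph{Back condition.} Suppose $h(x,y)\le u$. If $f_0(x)=t$, then $u\in F_1$ and $f_1(y)\le u$, so some $y'\ge y$ has $f_1(y')=u$; take $(x,y')$. If $f_0(x)\neq t$ and $u\in F_0$, find $x'\ge x$ with $f_0(x')=u$. If $u\neq t$, take $(x',y)$. If $u=t$, we instead need $h(x',y)=t$, that is $f_1(y)=r$, which need not hold.

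This last case is the main obstacle: when leaving the $\mathbf{F}_0$ part, the $y$-coordinate may already have moved above the root of $\mathbf{V}_m$. The same issue appears when $u\in F_1\setminus\{r\}$: we reach $x'$ with $f_0(x')=t$, but then need $y'\ge y$ with $f_1(y')=u$, which fails if $f_1(y)\not\le u$. So the naive $h$ breaks down, and I would repair the definition rather than the argument.

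The first fix I would try is to redefine $h(x,y)=f_0(x)$ whenever $f_0(x)\neq t$ or $y\ne$ root. Then $h(x,y)=f_1(y)$ requires $f_0(x)=t$, and the case $f_0(x)=t$ with $y\neq$ root still needs a value in $F_1$, which is awkward. A more robust option is to restrict the second coordinate. Define $h(x,y)=f_0(x)$ if $f_0(x)\ne t$, and $h(x,y)=f_1(y)$ otherwise, but only use pairs for which the $\mathbf{F}_0$-phase constrains $y$. This can be arranged by precomposing with a p-morphism $\mathbf{V}_n\times\mathbf{V}_m\to\mathbf{V}_n\times\mathbf{V}_m$ that "freezes" $y$ at the root until $x$ reaches $f_0^{-1}(t)$.

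Concretely, I would use the fact that $f_0^{-1}(t)$ is an upset of $\mathbf{V}_n$ and try
\[
h(x,y)=f_0(x) \text{ if } f_0(x)\ne t, \qquad h(x,y)=f_1(y) \text{ if } f_0(x)=t,
\]
and then verify the back condition from $(x,y)$ with $f_0(x)\ne t$ by checking directly whether each $u\in F_1$ can be reached. The point to examine is whether $f_1(y)$, for $y$ arbitrary, lies below the targets $u$. If this fails, the correction would be to replace $f_1$ by a map that collapses all of $\mathbf{V}_m$ to $r$ until the first coordinate reaches the top. I expect this bookkeeping in the back condition, across the junction point $t=r$, to be the only nontrivial step. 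Monotonicity and surjectivity are routine once the right definition is fixed.
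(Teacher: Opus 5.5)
Your proposal has a genuine gap: it never produces a map that works. You correctly show that the naive $h$ (switch to $f_1(y)$ once $f_0(x)=t$) violates the back condition. For instance, with $n=m=1$ and $\mathbf{F}_0,\mathbf{F}_1$ two-element chains, the point $(w_0,w_1)$ is sent below $t$ but sees no preimage of $t=r$. However, the ``concrete'' definition you end with is exactly this naive $h$ again. Your proposed correction (``collapse $\mathbf{V}_m$ to $r$ until the first coordinate reaches the top'') amounts to the same map, and the precomposition that ``freezes $y$'' is never constructed. So the one nontrivial step, which you flag yourself, is left open. The paper states this lemma without proof, so there is no argument to compare against. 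The lemma is nevertheless true, and the repair is short.

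The missing idea is to let the \emph{second} coordinate, not the first, trigger the passage into $\mathbf{F}_1$. Let $e=(w_0,\ldots,w_0)$ be the root of $\mathbf{V}_m$ and put
\[
h(x,y)=\begin{cases} f_0(x), & y=e,\\ f_1(y), & y\neq e.\end{cases}
\]
\emph{Monotonicity.} Let $(x,y)\le(x',y')$. If $y=y'=e$, use $f_0$. If $y\neq e$, then $y'\neq e$, and you use $f_1$. If $y=e\neq y'$, then $h(x,y)\in F_0$ lies below every element of $F_1$ in the vertical sum.

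\emph{Back condition.} From $(x,y)$ with $y\neq e$, any $u\ge f_1(y)$ lies in $F_1$, and $f_1$ gives $(x,y')$ with $y'\ge y$ and $h(x,y')=u$. From $(x,e)$, a target $u\in F_0$ (including $u=t$) is reached by some $(x',e)$, using $f_0$. A target $u\in F_1\setminus\{r\}$ is reached by $(x,y')$ for any $y'$ with $f_1(y')=u$. Here $y'\ge e$ trivially, and $y'\neq e$ because $f_1(e)=r$, since a surjective p-morphism onto a rooted frame sends the root to the root.

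\emph{Surjectivity.} This follows from surjectivity of $f_0$ and $f_1$.

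What you were missing is that there is no need to hold $y$ at the root during the $\mathbf{F}_0$-phase. Because all of $F_0$ sits below all of $F_1$, it is harmless to jump straight into $F_1$ the moment $y$ leaves $e$. The $\mathbf{F}_0$-phase is then confined to the slice $\mathbf{V}_n\times\{e\}$, and from every point of that slice all of $F_1$ remains reachable.
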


Lemma \ref{lem:verticalsumcheq} gives us a motivation of focusing on the $(m,0){\uparrow}$ in $\mathbf{G}'_{s,m}$ since $(m,0){\downarrow}$ is already a $\mathsf{Med}-frame$ and thus a $\mathsf{Cheq}$-frame. Let $\mathbf{Fon}_m$ be the frame $(m,0){\uparrow}$, then Fontaine provided the following proposition.
\begin{proposition}
	If $\mathbf{Fon}_m$ is $\mathsf{Cheq}$-frame for any $m$, then $\mathsf{Cheq}$ is not finitely axiomatizable.
\end{proposition}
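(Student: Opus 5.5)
The plan is to run the standard Maksimova--Skvortsov--Shehtman scheme for non-finite axiomatizability, with Fontaine's frames playing the roles of $\mathbf{L}_{s,2^{s+3}}$ and $\mathbf{L}'_{s,2^{s+3},m}$. Suppose, towards a contradiction, that $\mathsf{Cheq}=\mathsf{IPC}+\chi$ for a single formula $\chi$. Since $\mathbf{G}_s$ is not a $\mathsf{Cheq}$-frame (Fontaine's result, which we assume), $\chi$ is refuted on $\mathbf{G}_s$ for every $s$. Fix $s$ much larger than the number $k$ of subformulas of $\chi$.

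The first key step is to show that every $\mathbf{G}'_{s,m}$ is a $\mathsf{Cheq}$-frame, and this is where the hypothesis on $\mathbf{Fon}_m$ enters. The point $(m,0)$ of $\mathbf{G}'_{s,m}$ is a single node through which every path from the root to the top part passes. Hence $\mathbf{G}'_{s,m}$ is the vertical sum of two pieces:
\begin{itemize}
\item the lower part $(m,0){\downarrow}$, which is a finite rooted frame with a top (namely $(m,0)$);
\item the upper part $\mathbf{Fon}_m=(m,0){\uparrow}$, which is rooted at $(m,0)$.
\end{itemize}
By Lemma~\ref{lem:topmedframe}, the lower part is a $\mathsf{Med}$-frame. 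To apply Lemma~\ref{lem:verticalsumcheq}, however, I need it to be a $p$-morphic image of some $\mathbf{V}_n$, not merely a $\mathsf{Med}$-frame. I would argue this as follows. The frame is finite and rooted with a top, so the MSS construction exhibits it as a $p$-morphic image of some $\mathbf{P}_0(n')$. Theorem~\ref{thm:fpfatba} gives $\mathbf{P}_0(n')$ as a $p$-morphic image of $\mathbf{A}_{n'-1}\cong\mathbf{V}_{n'-1}$, and composing the two $p$-morphisms does the job. For the upper part, the hypothesis says $\mathbf{Fon}_m$ is a $\mathsf{Cheq}$-frame. Since it is a finite rooted frame validating $\mathsf{Cheq}=\mathsf{Log}\{\mathbf{V}_n\}$, the standard Jankov--de Jongh/Fine argument for finite rooted frames of logics of finite frames, applied with $\mathbf{V}_n$ having enough product structure, should make it a $p$-morphic image of a generated subframe of some $\mathbf{V}_{n''}$. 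I would then upgrade this to a $p$-morphic image of $\mathbf{V}_{n''}$ itself, using that generated subframes of $\mathbf{V}_{n''}$ are again of the form $\mathbf{V}_j$. Lemma~\ref{lem:verticalsumcheq} then yields $\mathbf{G}'_{s,m}$ as a $p$-morphic image of $\mathbf{V}_{n'+n''}$, so $\chi$ is valid on it.

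The second step is the MSS collapse argument. Take a countermodel of $\chi$ on $\mathbf{G}_s$. The middle region of $\mathbf{G}_s$ is a long stack of repeated levels $(i,0),(i,2)$ with the connecting middle points, and the valuation can distinguish only boundedly many (at most $2^k$) subformula-types. By the pigeonhole principle there is a level $m$ with $1<m<s$ at which the points $(m,0)$, $(m,2)$ and the middle point of level $m$ can be identified without changing the truth of any subformula of $\chi$. Concretely, I would use the Pruning Lemma~\ref{lem:pruning} or a filtration-style argument to show that the quotient map from $\mathbf{G}_s$ collapsing level $m$ to the single point $(m,0)$ of $\mathbf{G}'_{s,m}$ preserves the refutation. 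This gives $\mathbf{G}'_{s,m}\not\vDash\chi$, contradicting the first step, so $\mathsf{Cheq}$ is not finitely axiomatizable.

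I expect the main obstacle to be this collapse step. Identifying three points at level $m$ is not a $p$-morphism in general, so one must choose $m$ so that all points at level $m$ satisfy the same subformulas. This needs a careful argument that the width-2 ladder structure forces eventual periodicity of types going downward. I would first try to mimic MSS's proof for $\mathbf{L}_{s,2^{s+3}}$ verbatim, since Fontaine's frames are designed to be their $\mathsf{Cheq}$-analogues. A secondary difficulty is justifying that the $\mathsf{Cheq}$-frame $\mathbf{Fon}_m$ is a genuine $p$-morphic image of some $\mathbf{V}_n$, which is needed to invoke Lemma~\ref{lem:verticalsumcheq}.
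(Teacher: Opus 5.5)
Your architecture matches what the paper intends. The paper gives no separate proof, only the remark that $(m,0){\downarrow}$ is a $\mathsf{Med}$-, hence $\mathsf{Cheq}$-frame, together with Lemma~\ref{lem:verticalsumcheq} and the MSS method. Your first step is sound: composing an MSS $p$-morphism onto the lower part with the map of Theorem~\ref{thm:fpfatba}, and getting $\mathbf{Fon}_m$ as an image of a cube via Jankov--de Jongh, does show that $\mathbf{G}'_{s,m}$ is a $\mathsf{Cheq}$-frame. (The resulting index is $n'-1+n''$, a triviality.)

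The gap is the collapse step, which you leave open. The justification you sketch would not deliver it. Pigeonholing levels against at most $2^k$ types gives two levels with the same pattern of types, not one level on whose three points the valuation is constant. Eventual periodicity, filtration and the Pruning Lemma~\ref{lem:pruning} are beside the point.

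You have also misread the shape. Above the successors of the root, $\mathbf{G}_s$ is not a width-$2$ ladder. It is the linear sum of $s$ three-element antichains capped by $(0,0),(0,1)$, with every point of level $i+1$ lying below every point of level $i$. Consequently, collapsing a level is always a frame $p$-morphism $f\colon\mathbf{G}_s\to\mathbf{G}'_{s,m}$: the three points share the same strict upset, and everything below sees all of them. What must be chosen is $m$ such that the valuation respects $f$.

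The missing idea is persistence. Each $V(p)$ is an upset, so if $p$ holds at one point of level $j$ it holds on all levels above. If it fails at one point of level $j$, it fails on all levels below. Hence each variable is non-constant on at most one level, and a formula in $k$ variables has at most $k$ mixed levels. So once $s\geq k+3$, there is a level $m$ with $2\leq m\leq s-1$ whose three points agree on every variable.

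For that $m$, $V'(p):=f[V(p)]$ is an upset making $f$ a model $p$-morphism. Since $\chi$ fails at the root of $\mathbf{G}_s$, it fails at the root of $\mathbf{G}'_{s,m}$, contradicting your first step. Agreement on variables is all that is needed; no subformula bookkeeping enters.
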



 

\begin{definition}
A collection $\{X_0,\ldots, X_{n-1}\}$ is called  a \emph{full} $n$-\emph{partition} of $X$ with respect to $Y$ if:
\begin{enumerate}
	\item $X=\bigcup_{i=0}^{n-1}X_i,$
	\item $X_i\cap X_j=\emptyset$ for $0\leq i\neq j \leq n-1,$
	\item for each $i$ and for every $y\in Y$, there exists $x_i \in X_i$ satisfying $y\leq x_i$.
\end{enumerate} 	
\end{definition}

Consider the set $D(i,j)$, consisting of all $x$ in the frame $\mathbf{V}_i$, where the depth $d(x)$ equals $j+1$. Fontaine showed that if, for every $i > 1$, a 3-full partition of $D(i, 1)$ with respect to $D(i, 2)$ exists, then it is possible to construct a 3-full partition of $D(i, j)$ with respect to $D(i, j+1)$ for any $j > 1$. Consequently, Fontaine suggests the following proposition in \cite[Proposition 22]{Fon07}.

\begin{proposition}
If for any $i > 1$, a 3-full partition of $D(i, 1)$ with respect to $D(i, 2)$ exists, then $\mathbf{Fon}_m$ is a $\mathsf{Cheq}$-frame for any $m$ and so $\mathsf{Cheq}$ is not finitely axiomatizable.
\label{pro:fontainconjecture}	
\end{proposition}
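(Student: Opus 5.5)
The plan is to build the p-morphism $f:\mathbf{V}_n\to\mathbf{Fon}_m$ level by level, from the top down, using depth. I identify $\mathbf{V}_n$ with $\mathbf{A}_n$ via Lemma~\ref{AcongV}. The elements of $D(n,j)$ are exactly the partial functions with $|\mathrm{dom}|=n-j$. In $\mathbf{Fon}_m$, the levels are as in Figure~\ref{fig:gaelle}. Each level $k$ contains the pair $(k,0),(k,2)$, together with the middle point. At the top sit the two maximal points $(0,0),(0,1)$, and $(m,0)$ is the root.

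The map $f$ will send the maximal points and the layers $D(n,j)$ for $j<m$ onto the corresponding layers of $\mathbf{Fon}_m$. The points of depth at least $m+1$ all go to the root $(m,0)$; this is harmless because $(m,0)$ lies below everything. The conditions to check are monotonicity and the back condition. Since every cover relation in $\mathbf{Fon}_m$ joins adjacent levels, and since in $\mathbf{A}_n$ every element of $D(n,j+1)$ lies below elements of every higher level, both conditions reduce to local ``back'' requirements between consecutive levels.

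\textbf{Key steps.} First, start at the top. Use a $2$-coloring of the maximal elements $D(n,0)$: the total functions $[n]\to\{1,2\}$, i.e.\ the vertices of the cube $Q_n$ of Theorem~\ref{thm:dualfaceposetofcube}. Each edge-point of $D(n,1)$ must see both colours among its two vertices, since it maps to a point covered by both $(0,0)$ and $(0,1)$.

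Second, use the hypothesis to produce a full $3$-partition of $D(n,1)$ with respect to $D(n,2)$. Call the three blocks $X_0,X_1,X_2$ and send them to the three points $(1,0)$, $(1,2)$ and the middle point of level~$1$. Fullness says that every square face (a point of $D(n,2)$) has an edge from each block above it. This is exactly the back condition for elements of $D(n,2)$, once those elements are mapped into level~$2$.

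Third, invoke Fontaine's propagation result, quoted just before the statement. A full $3$-partition of $D(n,1)$ with respect to $D(n,2)$ yields full $3$-partitions of $D(n,j)$ with respect to $D(n,j+1)$ for every $j>1$. I use these to define $f$ on levels $2,\dots,m-1$. At each level I must check that the images of the upper covers of a point $x$ are precisely the covers of $f(x)$ in $\mathbf{Fon}_m$. Fullness gives surjectivity onto the three successor points. Monotonicity follows because each point of level $k$ in $\mathbf{Fon}_m$ lies below all three points of level $k-1$; this complete-bipartite pattern is visible in the figure.

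Fourth, I handle the top layer, where the middle point at level~$1$ is not below both maximal points in the same way as the others. The $2$-colouring of the vertices has to be compatible with the partition of the edges. The vertex colouring should be chosen to be polychromatic in the sense of Section~\ref{sec:geo}, namely balanced parity of the coordinate sum. The edges then need to be assigned so that the middle class sits under $(0,0),(0,1)$ only as the figure dictates.

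Finally, I conclude. Each $\mathbf{Fon}_m$ is a $\mathsf{Cheq}$-frame, since it is a p-morphic image of some $\mathbf{V}_n$. By Lemma~\ref{lem:verticalsumcheq}, the frames $\mathbf{G}'_{s,m}$ are then $\mathsf{Cheq}$-frames, because $(m,0){\downarrow}$ has a top and hence is a $\mathsf{Med}$-frame, and therefore a $\mathsf{Cheq}$-frame, by Lemma~\ref{lem:topmedframe} and Corollary~\ref{cheqsubsetmedcapls}. Meanwhile $\mathbf{G}_s$ is not a $\mathsf{Cheq}$-frame. Fontaine's preceding proposition then gives non-finite axiomatizability by the standard Maksimova--Skvortsov--Shehtman style argument: $\mathbf{G}'_{s,m}$ agrees with $\mathbf{G}_s$ on all formulas of bounded size once $m$ is large, so no finite axiom set can separate them.

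\textbf{Main obstacle.} I expect the hardest step to be the top-layer compatibility in the fourth step, together with verifying that the propagated partitions glue consistently across levels. A partition of $D(n,j)$ that is full with respect to $D(n,j+1)$ must also respect the already fixed images of $D(n,j-1)$, so that monotonicity holds. My first attempt would be to make every partition depend only on a fixed coordinate-permutation-invariant statistic, such as counts of $1$'s and $2$'s in the domain modulo~$3$. This would make the gluing automatic.
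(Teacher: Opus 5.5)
Your proof follows the route the paper outlines, attributing it to Fontaine. The hypothesis supplies the level-$1$ partition, and Fontaine's propagation result supplies full $3$-partitions of $D(n,j)$ with respect to $D(n,j+1)$ for $j>1$. You send $D(n,j)$ onto level $j$ of $\mathbf{Fon}_m$ and every point of depth $\geq m+1$ to the root. Lemma~\ref{lem:verticalsumcheq} and Fontaine's preceding proposition then give non-finite axiomatizability. Your first three steps and your conclusion are correct as they stand, provided you take $n\geq\max(m,2)$. One small correction: the propagation result, as quoted in the paper, assumes level-$1$ partitions for \emph{every} $i>1$, not only for your fixed $n$. The hypothesis provides exactly that, so nothing breaks.

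The two ``obstacles'' in your last paragraph do not exist, and the remedy you suggest cannot work.

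\emph{Top layer.} In $\mathbf{Fon}_m$ all three level-$1$ points, the middle one included, lie below both $(0,0)$ and $(0,1)$. So your fourth step has nothing to do: the parity colouring of the vertices already gives the back condition at every edge, whatever the edge partition is.

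\emph{Gluing.} Consecutive levels of $\mathbf{Fon}_m$, and also the root with level $m-1$, are completely joined. Hence every depth-respecting map is monotone, as you yourself observe in step three, and the back condition is precisely fullness. The partitions on different levels need no coordination at all.

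\emph{Your proposed remedy.} Do not try to choose the partitions via a coordinate-permutation-invariant statistic. Within level $j$ such a statistic depends only on $|1_a|$, since $|1_a|+|2_a|=n-j$ is fixed there. But the elements of $D(n,j)$ above a given $g\in D(n,j+1)$ have $|1_a|\in\{|1_g|,|1_g|+1\}$. So at most two blocks lie above $g$, and such a partition is never full.

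This failure is not an accident. As the paper points out right after the statement, Offner's theorem (Theorem~\ref{thm:offner}) gives $p(Q_2)=2$. Hence for large $i$ no full $3$-partition of $D(i,1)$ with respect to $D(i,2)$ exists. The hypothesis is therefore false, and the proposition holds only vacuously. Any argument for it must take the partitions from the hypothesis rather than try to construct them.
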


\begin{figure}[H]
\centering
\begin{tikzpicture}[scale=0.35, node distance=2cm, every node/.style={circle, fill=black, minimum size=3.5pt, inner sep=0pt}]

\begin{scope}[xshift=0cm] 
       
    \node (14)[label={[font=\small]left:($m$-1,0)}] at (-1.7,12) {};
    \node (15)[label={[font=\small]right:($m$-1,2)}] at (1.7,12) {};
   
   \node (16)[label={[font=\small]left:(3,0)}] at (-1.7,14.5) {};
   \node (17)[label={[font=\small]right:(3,2)}] at (1.7,14.5) {};
   \node (18)[label={[font=\small]left:(2,0)}] at (-1.7,16) {};
   \node (19)[label={[font=\small]right:(2,2)}] at (1.7,16) {};
    
   \node (20)[label={[font=\small]left:(1,0)}] at (-1.7,17.5) {};
   \node (21)[label={[font=\small]right:(1,2)}] at (1.7,17.5) {};
   \node (22)[label={[font=\small]left:(0,0)}] at (-1.7,19) {};
   \node (23)[label={[font=\small]right:(0,1)}] at (1.7,19) {};
  
   \node (31) at (0,17.5) {};

 \node (24) at (0,16) {};
 \node (25) at (0,14.5) {};
  \node (26) at (0,12) {};

 \node (27)[label={[font=\small]below:($m$,0)}] at (0,10.5) {};

\draw[thin, dashed, dash pattern=on 1.6pt off 1pt] (14) -- ++(0,0.6); 
\draw[thin, dashed, dash pattern=on 1.6pt off 1pt] (15) -- ++(0,0.6); 
\draw[thin, dashed, dash pattern=on 1.6pt off 1pt] (16) -- ++(0,-0.6); 
\draw[thin, dashed, dash pattern=on 1.6pt off 1pt] (17) -- ++(0,-0.6);     

\draw[thin, dashed, dash pattern=on 1.6pt off 1pt] (26) -- ++(0,0.6); 
\draw[thin, dashed, dash pattern=on 1.6pt off 1pt] (25) -- ++(0,-0.6);

    \draw[thick] (16) -- (18);
    \draw[thick] (16) -- (19);
    \draw[thick] (17) -- (18);
    \draw[thick] (17) -- (19);

\draw[thick] (18) -- (20);
    \draw[thick] (18) -- (21);
    \draw[thick] (19) -- (20);
    \draw[thick] (19) -- (21);
\draw[thick] (20) -- (22);
    \draw[thick] (20) -- (23);
    \draw[thick] (21) -- (22);
    \draw[thick] (21) -- (23);

\draw[thick] (20) -- (24);
    \draw[thick] (21) -- (24);
    \draw[thick] (16) -- (24);
    \draw[thick] (25) -- (24);
\draw[thick] (17) -- (24);
    \draw[thick] (25) -- (18);
    \draw[thick] (25) -- (19);

\draw[thick] (26) -- (27);
 
 \draw[thick] (27) -- (14);
\draw[thick] (15) -- (27);

 \draw[thick] (31) -- (23);
 \draw[thick] (31) -- (22);
 \draw[thick] (31) -- (18);
  \draw[thick] (31) -- (19);
  \draw[thick] (31) -- (24);

\end{scope}

\begin{scope}[xshift=11cm,yshift=0.75cm]

    \node (14)[label={[font=\small]left:($m$-1,0)}] at (-1.7,12) {};
    \node (15)[label={[font=\small]right:($m$-1,3)}] at (3.4,12) {};

   \node (16)[label={[font=\small]left:(2,0)}] at (-1.7,14.5) {};
   \node (17)[label={[font=\small]right:(2,3)}] at (3.4,14.5) {};
   \node (18)[label={[font=\small]left:(1,0)}] at (-1.7,16) {};
   \node (19)[label={[font=\small]right:(1,3)}] at (3.4,16) {};
    
   \node (20)[label={[font=\small]left:(0,0)}] at (0,17.5) {};
   \node (21)[label={[font=\small]right:(0,1)}] at (1.7,17.5) {};

 \node (24) at (0,16) {};
 \node (25) at (0,14.5) {};
  \node (26) at (0,12) {};

 \node (31) at (1.7,16) {};
 \node (32) at (1.7,14.5) {};
  \node (33) at (1.7,12) {};

  \node (40)[label={[font=\small]below:($m$,0)}] at (0.85,10.5) {};

\draw[thin, dashed, dash pattern=on 1.6pt off 1pt] (14) -- ++(0,0.6); 
\draw[thin, dashed, dash pattern=on 1.6pt off 1pt] (15) -- ++(0,0.6); 
\draw[thin, dashed, dash pattern=on 1.6pt off 1pt] (16) -- ++(0,-0.6); 
\draw[thin, dashed, dash pattern=on 1.6pt off 1pt] (17) -- ++(0,-0.6);     

\draw[thin, dashed, dash pattern=on 1.6pt off 1pt] (26) -- ++(0,0.6); 
\draw[thin, dashed, dash pattern=on 1.6pt off 1pt] (25) -- ++(0,-0.6); 
\draw[thin, dashed, dash pattern=on 1.6pt off 1pt] (32) -- ++(0,-0.6); 
\draw[thin, dashed, dash pattern=on 1.6pt off 1pt] (33) -- ++(0,0.6);     
             
    \draw[thick] (16) -- (18);
    \draw[thick] (16) -- (19);
    \draw[thick] (17) -- (18);
    \draw[thick] (17) -- (19);

\draw[thick] (18) -- (20);
    \draw[thick] (18) -- (21);
    \draw[thick] (19) -- (20);
    \draw[thick] (19) -- (21);

\draw[thick] (20) -- (24);
    \draw[thick] (21) -- (24);
    \draw[thick] (16) -- (24);
    \draw[thick] (25) -- (24);
\draw[thick] (17) -- (24);
    \draw[thick] (25) -- (18);
    \draw[thick] (25) -- (19);

    \draw[thick] (31) -- (20);
    \draw[thick] (31) -- (21);

    \draw[thick] (31) -- (16);
    \draw[thick] (31) -- (25);
    \draw[thick] (31) -- (32);
    \draw[thick] (31) -- (17);

    \draw[thick] (32) -- (18);
    \draw[thick] (32) -- (24);
    \draw[thick] (32) -- (19);
    \draw[thick] (32) -- (31);
     
    \draw[thick] (40) -- (14);
        \draw[thick] (40) -- (15);
    \draw[thick] (40) -- (26);
    \draw[thick] (40) -- (33);

\end{scope}

\end{tikzpicture}
\caption{The frames $\mathbf{Fon}_m$ and $\mathbf{Sha}_m$}
\end{figure}
According to Theorem \ref{thm:dualfaceposetofcube}, each $\mathbf{V}_i$ can be regarded as the dual face poset of hypercube $Q_i$. Then $D(i,1)$ is the set of edges of $Q_i$, and $D(i,2)$ is the set of faces of $Q_i$. The assumption of Proposition \ref{pro:fontainconjecture} is to ask to have edge-coloring $\mu$ of any hypercube $Q_i$ with $3$ colors and $\mu$ is $Q_2$-polychromatic 3-coloring of $Q_i$ since every face of $Q_i$ contains every color. But by Theorem \ref{thm:offner}, the polychromatic number of $Q_2$ 
\[
p(Q_2)=\left\lfloor \frac{(2+1)^2}{4} \right\rfloor=2
\]  
Thus for a sufficiently large integer $N$, the hypercube $Q_N$ cannot allow a $Q_2$-polychromatic 3-coloring, then the assumption of Proposition \ref{pro:fontainconjecture} fails.

Shatrov claimed to prove the non-finitely axiomatizability of $\mathsf{Cheq}$ by showing that $\mathbf{T}'_{s,m}$ is a $\mathsf{Cheq}$-frame while $\mathbf{T}_s$ is not, where $\mathbf{T}_s$ and $\mathbf{T}'_{s,m}$ are frames depicted in Figure \ref{fig:shatrov}. His strategy involves proving that each frame $\mathbf{Sha}_m$, which is $(m,0){\uparrow}$ in $\mathbf{T}'_{s,m}$, is a $\mathsf{Cheq}$-frame. If we consider the possibility of a 4-full partition of $D(i,1)$ with respect to $D(i,2)$ to achieve this goal, it fails due to $p(Q_2)=2$. However, since Shatrov has not published his proof, we do not know his concrete assumptions for proving that $\mathbf{Sha}_m \vDash \mathsf{Cheq}$.

\subsection{Kuznetsov's alternative}
After briefly describing the infeasibility of the above frames to prove that $\mathsf{Cheq}$ is not finitely axiomatizable, Kuznetsov provided his strategy by building the following alternative structure in \cite{Kuz19}.
\begin{figure}[H]
\centering
\begin{tikzpicture}[scale=0.3, node distance=2cm, every node/.style={circle, fill=black, minimum size=3.5pt, inner sep=0pt}]

\begin{scope}[xshift=0cm] 
    
    \node (1)[label={[font=\small]below:\text{($s$+2,0)}}]at (0,0) {};
    \node (2)[label={[font=\scriptsize]left:($s$+1,0)}] at (-4.5,2.5) {};
    \node (3) at (-2,2.5) {};
    \node (4) at (2,2.5) {};
    \node (5)[label={[font=\scriptsize]right:($s$+1,$2\times3^{s+2}$)}] at (4.5,2.5) {};
    \node (6)[label={[font=\small]left:($s$,0)}] at (-1.7,5) {};
    \node (7)[label={[font=\small]right:($s$,2)}] at (1.7,5) {};
    \node (8)[label={[font=\small]left:($s$-1,0)}] at (-1.7,6.5) {};
    \node (9)[label={[font=\small]right:($s$-1,2)}] at (1.7,6.5) {};
    \node (10)[label={[font=\small]left:($m$+1,0)}] at (-1.7,9) {};
    \node (11)[label={[font=\small]right:($m$+1,2)}] at (1.7,9) {};
    \node (12)[label={[font=\small]left:($m$,0)}] at (-1.7,10.5) {};
    \node (13)[label={[font=\small]right:($m$,2)}] at (1.7,10.5) {};
    \node (14)[label={[font=\small]left:($m$-1,0)}] at (-1.7,12) {};
    \node (15)[label={[font=\small]right:($m$-1,2)}] at (1.7,12) {};

   \node (16)[label={[font=\small]left:(3,0)}] at (-1.7,14.5) {};
   \node (17)[label={[font=\small]right:(3,2)}] at (1.7,14.5) {};
   \node (18)[label={[font=\small]left:(2,0)}] at (-1.7,16) {};
   \node (19)[label={[font=\small]right:(2,2)}] at (1.7,16) {};
    
   \node (20)[label={[font=\small]left:(1,0)}] at (-1.7,17.5) {};
   \node (21)[label={[font=\small]right:(1,2)}] at (1.7,17.5) {};
   \node (22)[label={[font=\small]left:(0,0)}] at (-1.7,19) {};
   \node (23)[label={[font=\small]right:(0,2)}] at (1.7,19) {};

 \node (24) at (0,16) {};
 \node (25) at (0,14.5) {};
  \node (26) at (0,12) {};
  \node (27) at (0,10.5) {};
  \node (28) at (0,9) {};
  \node (29) at (0,6.5) {};
  \node (30) at (0,5) {};

\draw[thin, dashed, dash pattern=on 1.6pt off 1pt] (8) -- ++(0,0.6); 
\draw[thin, dashed, dash pattern=on 1.6pt off 1pt] (9) -- ++(0,0.6); 
\draw[thin, dashed, dash pattern=on 1.6pt off 1pt] (10) -- ++(0,-0.6); 
\draw[thin, dashed, dash pattern=on 1.6pt off 1pt] (11) -- ++(0,-0.6);     

\draw[thin, dashed, dash pattern=on 1.6pt off 1pt] (29) -- ++(0,0.6); 
\draw[thin, dashed, dash pattern=on 1.6pt off 1pt] (28) -- ++(0,-0.6);

\draw[thin, dashed, dash pattern=on 1.6pt off 1pt] (14) -- ++(0,0.6); 
\draw[thin, dashed, dash pattern=on 1.6pt off 1pt] (15) -- ++(0,0.6); 
\draw[thin, dashed, dash pattern=on 1.6pt off 1pt] (16) -- ++(0,-0.6); 
\draw[thin, dashed, dash pattern=on 1.6pt off 1pt] (17) -- ++(0,-0.6);     

\draw[thin, dashed, dash pattern=on 1.6pt off 1pt] (26) -- ++(0,0.6); 
\draw[thin, dashed, dash pattern=on 1.6pt off 1pt] (25) -- ++(0,-0.6);

   \fill (-0.5,2.5) circle (1.5pt);
    \fill (-1,2.5) circle (1.5pt);
    \fill (-0.75,2.5) circle (1.5pt);
  \fill (1,2.5) circle (1.5pt);
    \fill (0.75,2.5) circle (1.5pt);
    \fill (0.5,2.5) circle (1.5pt);

    \draw[thick] (1) -- (2);
    \draw[thick] (1) -- (3);
    \draw[thick] (1) -- (4);
    \draw[thick] (1) -- (5);
    
    \draw[thick] (2) -- (6);
    \draw[thick] (2) -- (7);
     \draw[thick] (3) -- (6);
    \draw[thick] (3) -- (7);
     \draw[thick] (4) -- (6);
    \draw[thick] (4) -- (7);
     \draw[thick] (5) -- (6);
    \draw[thick] (5) -- (7); 
    
    \draw[thick] (6) -- (8);
    \draw[thick] (6) -- (9);
    \draw[thick] (7) -- (8);
    \draw[thick] (7) -- (9);

    \draw[thick] (10) -- (12);
    \draw[thick] (10) -- (13);
    \draw[thick] (11) -- (12);
    \draw[thick] (11) -- (13);
    \draw[thick] (12) -- (14);
    \draw[thick] (12) -- (15);
    \draw[thick] (13) -- (14);
    \draw[thick] (13) -- (15);
   
    \draw[thick] (16) -- (18);
    \draw[thick] (16) -- (19);
    \draw[thick] (17) -- (18);
    \draw[thick] (17) -- (19);

\draw[thick] (18) -- (20);
    \draw[thick] (18) -- (21);
    \draw[thick] (19) -- (20);
    \draw[thick] (19) -- (21);
\draw[thick] (20) -- (22);
    \draw[thick] (20) -- (23);
    \draw[thick] (21) -- (22);
    \draw[thick] (21) -- (23);

\draw[thick] (20) -- (24);
    \draw[thick] (21) -- (24);
    \draw[thick] (16) -- (24);
    \draw[thick] (25) -- (24);
\draw[thick] (17) -- (24);
    \draw[thick] (25) -- (18);
    \draw[thick] (25) -- (19);

    \draw[thick] (26) -- (12);
\draw[thick] (26) -- (27);
    \draw[thick] (26) -- (13);
 
 \draw[thick] (27) -- (14);
\draw[thick] (15) -- (27);
    \draw[thick] (28) -- (13); 
    \draw[thick] (28) -- (12);
\draw[thick] (10) -- (27);
    \draw[thick] (11) -- (27);
    
     \draw[thick] (29) -- (6);
\draw[thick] (29) -- (7);
    \draw[thick] (29) -- (30); 
    \draw[thick] (8) -- (30);
    
    \draw[thick] (9) -- (30);
    \draw[thick] (2) -- (30);
    \draw[thick] (3) -- (30); 
    \draw[thick] (4) -- (30);
    \draw[thick] (5) -- (30);
    \draw[thick] (27) -- (28);

\end{scope}

\end{tikzpicture}
\caption{The frame $\mathbf{E}_s$}
\label{fig:kuz}
\end{figure}

We define the \emph{Kuznetsov frame} $\mathbf{E}_s$ to be the frame shown in Figure \ref{fig:kuz}.
The result of the polychromatic number of $Q_2$ cannot directly negate the Kuznetsov frame $\mathbf{E}_s$. Kuznetsov aimed to prove that each Kuznetsov frame $\mathbf{E}_s$ is not a $\mathsf{Cheq}$-frame as the key step to follow the established method of proving that $\mathsf{Cheq}$ is not finitely axiomatizable. But we have the following proposition:

\begin{proposition}
	$\mathbf{E}_s$ is a $\mathsf{Cheq}$-frame. 
\end{proposition}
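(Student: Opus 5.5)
The plan is to show that $\mathbf{E}_s$ splits as a linear sum $\mathbf{F}\oplus\mathbf{W}$, with $\mathbf{F}$ a finite rooted frame and $\mathbf{W}$ the bowtie. Once that is established, the Bowtie Lemma (Lemma~\ref{lem:bow-tie}) gives that $\mathbf{E}_s$ is a $p$-morphic image of some $\mathbf{V}_n$. The Folklore theorem on $p$-morphisms then transfers validity: every formula of $\mathsf{Cheq}=\mathsf{Log}(\{\mathbf{V}_n:n\in\omega\})$ is valid on $\mathbf{V}_n$, hence on $\mathbf{E}_s$. So $\mathbf{E}_s$ is a $\mathsf{Cheq}$-frame.

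The first step is to read off the top of $\mathbf{E}_s$ from Figure~\ref{fig:kuz}. Unlike $\mathbf{G}_s$, the frame $\mathbf{E}_s$ has no middle point at the top levels. Its maximal elements are exactly $(0,0)$ and $(0,2)$, and the next level consists of $(1,0)$ and $(1,2)$, each lying below both maximal points. The four points $(1,0),(1,2),(0,0),(0,2)$ therefore form a copy of $\mathbf{W}$: two incomparable minimal elements, two incomparable maximal elements, and each maximal element above each minimal one.

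The second step is to check that every other point of $\mathbf{E}_s$ lies strictly below both $(1,0)$ and $(1,2)$, and hence below all of $\mathbf{W}$. By transitivity, it suffices to check the immediate predecessors of $(1,0)$ and $(1,2)$. These are $(2,0)$, $(2,2)$ and the middle point at level $2$, and the figure shows each of them joined to both $(1,0)$ and $(1,2)$. Every point further down lies below one of these three, because each middle point at level $k$ sits below $(k-1,0)$ and $(k-1,2)$, and each side point at level $k$ sits below both side points at level $k-1$. This continues down through the $s$-, $(s+1)$- and $(s+2)$-levels.

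Let $\mathbf{F}$ be $\mathbf{E}_s$ with the four bowtie points removed. Then $\mathbf{F}$ is finite, and it is rooted with root $(s+2,0)$, since that point lies below all $(s+1,\cdot)$ points and hence below everything. It follows that $\mathbf{E}_s=\mathbf{F}\oplus\mathbf{W}$ exactly as defined for linear sums, and the Bowtie Lemma applies. The main obstacle is only this careful verification of the order relations in the figure, in particular that no lower point fails to sit below one of $(1,0)$, $(1,2)$. Once that is confirmed, the rest is immediate from earlier results.
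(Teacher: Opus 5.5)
Your proposal is correct and is the same argument as the paper's: the paper asserts that $\mathbf{E}_s$ is a linear sum $\mathbf{F}\oplus\mathbf{W}$ of a finite rooted frame and the bowtie, and then invokes the Bowtie Lemma~\ref{lem:bow-tie}. Your check of the order relations in Figure~\ref{fig:kuz} fills in what the paper leaves to the picture: that $(1,0),(1,2),(0,0),(0,2)$ form $\mathbf{W}$, that every other point lies below both $(1,0)$ and $(1,2)$, and that the rest of the frame has root $(s+2,0)$.
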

\begin{proof}
	Each $\mathbf{E}_s$ is a linear sum of a finite rooted frame and the Bowtie frame. By the Bowtie Lemma~\ref{lem:bow-tie}, every Kuznetsov frame $\mathbf{E}_s$ is a $\mathsf{Cheq}$-frame. 
\end{proof}

\section{$\mathsf{Cheq}$ is not finitely axiomatizable}
To prove the main result of this paper, we first settle the open problem for $\mathsf{Cheq}$. By standard results concerning modal companions, the result also settles the case of $\mathsf{L}_{\infty}$. The combinatorial structure of the frame $\mathbf{V}_n$, together with Bowtie lemma, suggest the structures used in our construction. For any $x\in \mathbf{V}_n$, the parities of the numbers of $w_1$ to the left of the leftmost $w_0$ and to the right of the rightmost $w_0$, counted respectively, form a coloring strategy.

\begin{definition}
	Given integers $u\geq 3$ and $n\geq 1$, let $\mathbf{C}_{u,v}$ denote the frame shown in Figure~\ref{fig:china}. There are $n$ red points, each of depth $u$. 
\end{definition}
The immediate successors of the root are $\{x,y\}$ and red points. The immediate successors of $x$ are $\{(u-2,0),(u-2,1),(u-2,3)\}$, those of $y$ are $\{(u-2,0),(u-2,2),(u-2,3)\}$, and those of each $z^i$ are $\{(u-2,1), (u-2,2), (u-2,3)\}$. For any point $a=(i,j)$, $2\leq i \leq u-2$ and $0\leq j \leq3$, it cannot see $(i-1, 3-j)$.
	
\begin{figure}[H]
\centering
\begin{tikzpicture}[scale=0.5, node distance=2cm, every node/.style={circle, fill=black, minimum size=3.5pt, inner sep=0pt}]

    \node (1)[label={[label distance=0.1cm]below:{\normalsize $(u,0)$}}] at (0.85,0) {};
    \node (2)[label={[label distance=0.1cm]left:{\normalsize $x$}}] at (-3.8,2.5) {};
    \node (3)[label={[label distance=0.1cm]left:{\normalsize $y$}}]  at (-2.1,2.5) {};
    \node (4) [label={[label distance=0.1cm]right:{\normalsize $z^1$}}]  at (-0.4,2.5)[fill=red, circle] {};
    \node (55) [label={[label distance=0.1cm]right:{\normalsize $z^2$}}]  at (1.3,2.5)[fill=red, circle] {};
    
    \node (5)[label={[label distance=0.1cm]right:{\normalsize $z^n$}}] [fill=red, circle] at (5.5,2.5) {};
    \node (6) [label={[label distance=0.1cm]left:{\normalsize $(u-2,0)$}}]  at (-1.7,5) {};
    \node (7) [label={[label distance=0.1cm]right:{\normalsize $(u-2,3)$}}] at (3.4,5) {};
   
    \node (8)[label={[label distance=0.1cm]left:{\normalsize $(u-3,0)$}}] at (-1.7,6.5) {};
    \node (9)[label={[label distance=0.1cm]right:{\normalsize $(u-3,3)$}}] at (3.4,6.5) {};
   
    \node (10)[label={[label distance=0.1cm]left:{\normalsize $(m+1,0)$}}] at (-1.7,9) {};
    \node (11)[label={[label distance=0.1cm]right:{\normalsize $(m+1,3)$}}] at (3.4,9) {};
    \node (12)[label={[label distance=0.1cm]left:{\normalsize $(m,0)$}}] at (-1.7,10.5) {};
    \node (13)[label={[label distance=0.1cm]right:{\normalsize $(m,3)$}}]at (3.4,10.5) {};
    \node (14)[label={[label distance=0.1cm]left:{\normalsize $(m-1,0)$}}] at (-1.7,12) {};
    \node (15)[label={[label distance=0.1cm]right:{\normalsize $(m-1,3)$}}] at (3.4,12) {};

   \node (16)[label={[label distance=0.1cm]left:{\normalsize $(2,0)$}}] at (-1.7,14.5) {};
   \node (17)[label={[label distance=0.1cm]right:{\normalsize $(2,3)$}}] at (3.4,14.5) {};
   \node (18)[label={[label distance=0.1cm]left:{\normalsize $(1,0)$}}] at (-1.7,16) {};
   \node (19)[label={[label distance=0.1cm]right:{\normalsize $(1,3)$}}] at (3.4,16) {};
    
   \node (20)[label={[label distance=0.1cm]left:{\normalsize $(0,0)$}}] at (-1.7,17.5) {};
   \node (21)[label={[label distance=0.1cm]right:{\normalsize $(0,1)$}}] at (3.4,17.5) {};

 \node (24) at (0,16) {};
 \node (25) at (0,14.5) {};
  \node (26) at (0,12) {};
  \node (27) at (0,10.5) {};
  \node (28) at (0,9) {};
  \node (29) at (0,6.5) {};
  \node (30) at (0,5) {};

 \node (31) at (1.7,16) {};
 \node (32) at (1.7,14.5) {};
  \node (33) at (1.7,12) {};
  \node (34) at (1.7,10.5) {};
  \node (35) at (1.7,9) {};
 \node (36) at (1.7,6.5) {};
 \node (37) at (1.7,5) {};

\draw[thin, dashed, dash pattern=on 1.6pt off 1pt] (8) -- ++(0,0.6); 
\draw[thin, dashed, dash pattern=on 1.6pt off 1pt] (9) -- ++(0,0.6); 
\draw[thin, dashed, dash pattern=on 1.6pt off 1pt] (10) -- ++(0,-0.6); 
\draw[thin, dashed, dash pattern=on 1.6pt off 1pt] (11) -- ++(0,-0.6);     

\draw[thin, dashed, dash pattern=on 1.6pt off 1pt] (29) -- ++(0,0.6); 
\draw[thin, dashed, dash pattern=on 1.6pt off 1pt] (28) -- ++(0,-0.6); 

\draw[thin, dashed, dash pattern=on 1.6pt off 1pt] (35) -- ++(0,-0.6); 
\draw[thin, dashed, dash pattern=on 1.6pt off 1pt] (36) -- ++(0,0.6);

\draw[thin, dashed, dash pattern=on 1.6pt off 1pt] (14) -- ++(0,0.6); 
\draw[thin, dashed, dash pattern=on 1.6pt off 1pt] (15) -- ++(0,0.6); 
\draw[thin, dashed, dash pattern=on 1.6pt off 1pt] (16) -- ++(0,-0.6); 
\draw[thin, dashed, dash pattern=on 1.6pt off 1pt] (17) -- ++(0,-0.6);     

\draw[thin, dashed, dash pattern=on 1.6pt off 1pt] (26) -- ++(0,0.6); 
\draw[thin, dashed, dash pattern=on 1.6pt off 1pt] (25) -- ++(0,-0.6); 
\draw[thin, dashed, dash pattern=on 1.6pt off 1pt] (32) -- ++(0,-0.6); 
\draw[thin, dashed, dash pattern=on 1.6pt off 1pt] (33) -- ++(0,0.6);     
       
   \fill[red] (2.5,2.5) circle (1.5pt);
   \fill[red] (2.75,2.5) circle (1.5pt);
   \fill[red] (3,2.5) circle (1.5pt);
   
   \fill[red] (4.3,2.5) circle (1.5pt);
   \fill[red] (4.05,2.5) circle (1.5pt);
   \fill[red] (3.8,2.5) circle (1.5pt);

   \draw[thick] (1) -- (2);
    \draw[thick] (1) -- (3);
    \draw[thick] (1) -- (4);
    \draw[thick] (1) -- (5);
    
    \draw[thick] (2) -- (6);
    \draw[thick] (2) -- (7);
     \draw[thick] (3) -- (6);
    \draw[thick] (3) -- (7);
    \draw[thick] (4) -- (7);
    \draw[thick] (5) -- (7); 
    
    \draw[thick] (6) -- (8);
  \draw[thick] (7) -- (9);

    \draw[thick] (10) -- (12);
   \draw[thick] (11) -- (13);
    \draw[thick] (12) -- (14);
   \draw[thick] (13) -- (15);
   
   \draw[thick] (16) -- (18);
   \draw[thick] (17) -- (19);

\draw[thick] (18) -- (20);
    \draw[thick] (18) -- (21);
    \draw[thick] (19) -- (20);
    \draw[thick] (19) -- (21);

\draw[thick] (20) -- (24);
    \draw[thick] (21) -- (24);
    \draw[thick] (16) -- (24);
    \draw[thick] (25) -- (24);
\draw[thick] (17) -- (24);
    \draw[thick] (25) -- (18);
    \draw[thick] (25) -- (19);

    \draw[thick] (26) -- (12);
\draw[thick] (26) -- (27);
    \draw[thick] (26) -- (13);
 
 \draw[thick] (27) -- (14);
\draw[thick] (15) -- (27);
    \draw[thick] (28) -- (13); 
    \draw[thick] (28) -- (12);
\draw[thick] (10) -- (27);
    \draw[thick] (11) -- (27);
    
     \draw[thick] (29) -- (6);
\draw[thick] (29) -- (7);
    \draw[thick] (29) -- (30); 
    \draw[thick] (8) -- (30);
    
     \draw[thick] (9) -- (30);
\draw[thick] (2) -- (30);
    \draw[thick] (4) -- (30);
    \draw[thick] (5) -- (30);
 \draw[thick] (27) -- (28);

    \draw[thick] (31) -- (20);
    \draw[thick] (31) -- (21);

    \draw[thick] (31) -- (16);
    \draw[thick] (31) -- (32);
    \draw[thick] (31) -- (17);

    \draw[thick] (32) -- (18);
    \draw[thick] (32) -- (19);

    \draw[thick] (33) -- (12);
    \draw[thick] (33) -- (34);
    \draw[thick] (33) -- (13);

    \draw[thick] (34) -- (14);
    \draw[thick] (34) -- (15);
    \draw[thick] (34) -- (10);
    \draw[thick] (34) -- (35);
    \draw[thick] (34) -- (11);

    \draw[thick] (35) -- (12);
    \draw[thick] (35) -- (13);

    \draw[thick] (36) -- (6);
\draw[thick] (36) -- (37);
    \draw[thick] (36) -- (7);
     
     \draw[thick] (37) -- (8);
    \draw[thick] (37) -- (9);
      
      \draw[thick] (37) -- (5);
    \draw[thick] (37) -- (4);
    \draw[thick] (37) -- (3);

   \draw[thick] (55) -- (30);
   \draw[thick] (55) -- (37);
   \draw[thick] (55) -- (7);
   \draw[thick] (55) -- (1);

\end{tikzpicture}
\caption{The frame $\mathbf{C}_{u,n}$}
\label{fig:china}
\end{figure}

\begin{lemma}
	$\mathbf{C}_{2^{s-1}+1,2^s}$ is a $\mathsf{Cheq}$-frame for each $s\geq 2$.
\label{lem:cc}	
\end{lemma}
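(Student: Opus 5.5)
We will exhibit, for $u=2^{s-1}+1$ and $n=2^s$, an explicit p-morphism $f$ from $\mathbf{V}_N$ onto $\mathbf{C}_{u,n}$ for a suitable $N$; since p-morphic images preserve validity, this shows $\mathbf{C}_{u,n}$ is a $\mathsf{Cheq}$-frame. The natural choice is $N=s+\ell$ with $\ell$ large enough to realize the chain of length $u-2$. The construction follows the coloring hint given before the definition of $\mathbf{C}_{u,v}$. We work in $\mathbf{A}_N$ via Lemma~\ref{AcongV}, so a point is a word $x\in\{w_0,w_1,w_2\}^N$.

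\textbf{Upper part.} The upper part $(u-2,0){\uparrow}\cup(u-2,3){\uparrow}$ consists of the layers $(i,j)$ with $0\le i\le u-2$ and $j\in\{0,1,2,3\}$, topped by the bowtie-like pair $(0,0),(0,1)$. We map points with few $w_0$'s to it using two invariants:
\begin{itemize}
\item the number $k$ of $w_0$ coordinates, which determines the level $i$ (capped at $u-2$);
\item the pair of parities of the number of $w_1$'s to the left of the leftmost $w_0$ and to the right of the rightmost $w_0$, which encodes $j\in\{0,1,2,3\}$ as a two-bit label.
\end{itemize}
Filling one $w_0$ changes at most one of these parities. The condition that $(i,j)$ ``cannot see $(i-1,3-j)$'' should then match exactly the impossibility of flipping both parity bits in one step. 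Only the leftmost or rightmost $w_0$ changes a parity, so interior fillings keep $j$. One must check that this respects the linking pattern among the central nodes of Figure~\ref{fig:china}, possibly after relabelling which pairs are $0/3$ and $1/2$. At the top, the last fillings land on the two maximal points $(0,0),(0,1)$ by the parity of the total number of $w_1$'s, in the spirit of the Bowtie Lemma~\ref{lem:bow-tie}.

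\textbf{Lower part.} Points with more than $u-2$ zeros, i.e.\ of large depth, are sent to the root, to $x$, $y$, or to one of the $z^i$. The root is the all-$w_0$ word, and the immediate successors of the root correspond to atoms, i.e.\ words with exactly one non-$w_0$ entry. The count $n=2^s$ of red points, together with the relation $u=2^{s-1}+1$, suggests the following assignment. Reserve $s$ coordinates as an address block, so that the $2^s$ patterns in $\{w_1,w_2\}^s$ index $z^1,\dots,z^n$. Let $x$ and $y$ correspond to the two ways of starting with a $w_0$ in the address block. The forth condition then reduces to checking that, from a point mapped to $z^i$, the successors reach $(u-2,1),(u-2,2),(u-2,3)$ but never $(u-2,0)$. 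This should be enforced by fixing one parity bit once an address is determined. The back condition for $x$ and $y$ similarly requires that from a point over $x$ one can reach $(u-2,0)$, $(u-2,1)$ and $(u-2,3)$ but not $(u-2,2)$.

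\textbf{Main obstacle.} The hard step is making the depth bookkeeping consistent. Points over $z^i$ must have depth exactly $u$, which ties the number of free $w_0$ coordinates to $2^{s-1}$, so that the address block and the chain length interact precisely as in $u=2^{s-1}+1$. At the same time the parity labels must be monotone and satisfy the back condition at every level. We would first verify the case $s=2$ ($u=3$, $n=4$) by hand on $\mathbf{V}_3$ or $\mathbf{V}_4$ to calibrate the map. We would then prove the general case by induction on $s$, using a product decomposition $\mathbf{V}_{N+1}=\mathbf{V}_N\times\mathbf{V}$ to double the number of red points while adding $2^{s-2}$ levels. If the direct parity map fails the back condition at the central nodes, we would instead compose with the Bowtie Lemma~\ref{lem:bow-tie} applied to the rooted part below the bowtie.
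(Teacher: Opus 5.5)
\textbf{There is a genuine gap.} Your upper part is essentially the paper's construction. The level is the number $\#_0$ of $w_0$'s. The label $j$ is the pair of parities of $\#_1$ (the $w_1$'s before the leftmost $w_0$) and $\#_2$ (the $w_1$'s after the rightmost $w_0$), with $3-j$ the complementary pair. The maximal points are split by the total parity of $w_1$'s, and filling one $w_0$ flips at most one bit.

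The gap is the bottom two levels (root, $x$, $y$, $z^i$). You never define the map there, and the mechanism you sketch cannot work:
\begin{enumerate}
\item By your own description the immediate successors of the root are atoms, i.e.\ words with a single non-$w_0$ letter. So they cannot carry an address from $\{w_1,w_2\}^s$ with $s\ge 2$.
\item ``Fixing one parity bit'' excludes two of the four classes. But $z^i$ must reach three of them, $(u-2,1),(u-2,2),(u-2,3)$, i.e.\ every class except (odd, odd). Likewise $x$ must reach every class except (even, odd). So that device breaks the back condition.
\item Neither fallback rescues this. One extra factor $\mathbf{V}$ raises the depth by exactly one, not by $2^{s-2}$. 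And $\mathbf{C}_{u,n}$ has four points of depth two, while any $\mathbf{F}\oplus\mathbf{W}$ has exactly two, so Lemma~\ref{lem:bow-tie} does not apply. If it did, it would apply equally to $\mathbf{C}_{s,2\times 3^s}$, contradicting Corollary~\ref{notcheq}.
\end{enumerate}

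\textbf{The missing idea} is that $2^s$ is not a count of binary addresses but $2u-2$. Take $N=u=2^{s-1}+1$, as the paper does. Then $\mathbf{V}_u$ has the same depth $u+1$ as $\mathbf{C}_{u,n}$ and exactly $2u=2^s+2$ atoms. In a map sending words with at most $u-2$ zeros to levels $\le u-2$, the points $x,y,z^1,\ldots,z^n$ above the all-$w_0$ word can only be hit by these atoms. Hence $n+2\le 2u$, which is why the lemma pairs $u=2^{s-1}+1$ with $n=2^s$.

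The bottom of the map is then:
\begin{itemize}
\item the all-$w_0$ word goes to the root;
\item $(w_1,w_0,\ldots,w_0)$ goes to $x$ and $(w_0,\ldots,w_0,w_1)$ goes to $y$;
\item the remaining $2^s$ atoms go bijectively onto the red points;
\item everything else is mapped by your parity rule.
\end{itemize}

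The exclusion of exactly one class is then automatic. A word with two non-$w_0$ letters is in class (odd, odd) only if one $w_1$ lies before the first $w_0$ and another after the last $w_0$. That forces the word $(w_1,w_0,\ldots,w_0,w_1)$, which lies above the preimages of $x$ and $y$ only. The same reasoning shows that $x$ misses (even, odd) and $y$ misses (odd, even). A short case check, choosing the letter placed next to the new outermost $w_0$, shows that every other atom reaches the other three classes. The forth and back conditions at the higher levels are then your one-bit-flip argument.
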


\begin{proof}
For every $s\geq 2$, we aim to prove that $\mathbf{C}_{2^{s-1}+1,2^s}$ is a \emph{p}-morphic image of $\mathbf{V}_{2^{s-1}+1}$.
	
For any $a\in \mathbf{V}_{2^{s-1}+1}$, let $\#_0(a)$ be the number of $w_0$'s in $a$. If $\#_0(a) \geq 1$, let $\#_1(a)$ be the number of $w_1$'s to the left of the leftmost $w_0$ in $a$, $\#_2(a)$ be the number of $w_1$'s to the right of the rightmost $w_0$ in $a$. 
If $a\in O:=\{a: \#_0(a)=2^{s-1}\} \setminus \{(w_1, w_0, \ldots, w_0), (w_0,\ldots, w_0, w_1)\}$, then choose a one-to-one map $o(a):  O\to \{1,\ldots, 2^s\}$.
Define a map $f: \mathbf{V}_{2^{s-1}+1} \to  \mathbf{C}_{2^{s-1}+1,2^s}$ by
$$f(a)=\begin{cases}
	(0,0), &\text{if } \#_0(a)=0 \text{ and the number of } w_1 \text{ is even},\\	
	(0,1), &\text{if } \#_0(a)=0 \text{ and the number of } w_1 \text{ is odd},\\		
	(\#_0(a),0), &\text{if } 1\leq\#_0(a)\leq 2^{s-1}-1 \text{ and } 2\nmid \#_1(a) \text{ , }2\nmid \#_2(a),\\	
	(\#_0(a),1), &\text{if } 1\leq\#_0(a)\leq 2^{s-1}-1 \text{ and } 2\nmid \#_1(a)
	    \text{ , }2\mid \#_2(a),\\
	(\#_0(a),2), &\text{if } 1\leq\#_0(a)\leq 2^{s-1}-1 \text{ and } 2\mid \#_1(a)\text{ , }2\nmid \#_2(a),\\
	(\#_0(a),3), &\text{if } 1\leq\#_0(a)\leq 2^{s-1}-1 \text{ and } 2\mid \#_1(a)\text{ , }2\mid \#_2(a),\\
	
	x, &\text{if } x=(w_1, w_0, w_0, \ldots, w_0),\\	
	y, &\text{if } x=(w_0, w_0, \ldots, w_0, w_1),\\	
	z^i, &\text{if } o(a)=i,\\
	(2^{s-1}+1,0), &\text{if } \#_0(a)=2^{s-1}+1.
\end{cases}$$ 
It is clear that $f$ is onto. Assume that $a < b$ with $ \#_0(a)=\#_0(b)+1$ in $\mathbf{V}_{2^{s-1}+1}$.
If $\#_0(a)=1$, then $\#_0(b)=0$ and thus $f(b)=(0,0)$ or $(0,1)$, in each case we have that $f(a)<f(b)$ .
If $2\leq \#_0(a) \leq 2^{s-1}$, then it is impossible to change both $\#_1$ and $\#_2$ from $a$ to $b$, so $\#_i(a)=\#_i(b)$ for at least one $i\in \{1, 2\}$ and $f(a) < f(b)$. 
If $\#_0(a)=2^{s-1}+1$, then $f(a)$ is the root.
In general case of $a<b$ in $\mathbf{V}_{2^{s-1}+1}$, it is easy to have a chain $a^0<a^1<\ldots <a^p$ such that $a^0=a$, $a^p=b$, and $\#_0(a^q)=\#_0(a^{q+1})+1$ for $0\leq q \leq p-1$, hence $f(a)=f(a^0)<f(a^1)<\ldots <f(a^p)=f(b)$.

Assume that $f(a)< t$ with $d\bigl(f(a)\bigr)=d(t)+1$ in $\mathbf{C}_{2^{s-1}+1,2^s}$. 
If $f(a)$ is the root of $\mathbf{C}_{2^{s-1}+1,2^s}$, then $t\in \{x,y,z^1,z^2,\ldots,z^n\}$, thus $b=f^{-1}(t) >a$ and $f(b)=t$.

If $f(a)=(m,0)$ and $t=(m-1,0)$ for $2\leq m \leq 2^{s-1}$, let \(n_1\) and \(n_2\) be the first and second coordinates of \(a\) equal to \(w_0\), respectively, from left to right. Define $b=(b_0,b_1,\ldots,b_{2^{s-1}+1})$ as follows:
$$b_n=\begin{cases}
	w_1, &\text{if } n=n_1 \text{ and the number of } w_1 \text{ to the left of the }a_{n_2} \text{ in } a \text{ is even},\\	
   w_2, &\text{if } n=n_1 \text{ and the number of } w_1 \text{ to the left of the }a_{n_2} \text{ in } a \text{ is odd},\\
   a_n, &\text{otherwise}.
\end{cases}$$ 
Then $a < b$ and $f(b)=(m-1, 0)=t$. By the same argument, the remaining cases for $2\leq m \leq 2^{s-1}-1$ follow analogously.

If $f(a)=(1,0)$ and $t=(0,0)$, then $\#_1(a)+\#_2(a)$ is even. Define $b=(b_0,b_1,\ldots,b_{2^{s-1}+1})$ as follows:
$$b_n=\begin{cases}
	w_2, &\text{if } a_n=0,\\	
    a_n, &\text{otherwise}.
\end{cases}$$ 
Then $a < b$ and $f(b)=(1, 0)=t$. By the same argument, the remaining cases following immediately. 

In general case of $f(a)<t$ in $\mathbf{C}_{2^{s-1}+1,2^s}$, it is easy to have a chain $t^0<t^1<\ldots <t^p$ such that $t^0=f(a)$, $t^p=t$, and $d(t^q)=d(t^{q+1})+1$, hence we can find $b$ such that $a < b$ and $f(b)=t$. 	

As a conclusion, $f$ is a \emph{p}-morphism from $\mathbf{V}_{2^{s-1}+1}$ to $\mathbf{C}_{2^{s-1}+1,2^s}$.
\end{proof}

The following lemma is due to \cite[Claim~16]{Fon07}.

\begin{lemma}[Fontaine]
Let $\mathbf{F}$ be a finite frame and it is a p-morphic image of some  $\mathbf{V}_n$. Suppose that every point in $\mathbf{F}$ of depth two has branching degree two and that every point in $\mathbf{F}$ of depth greater than two has branching degree greater than or equal to three. Then for every $x\in \mathbf{F}$, we have that $b(x) \leq 2 \times 3^{d(x)-1}$.
\label{Fon}
\end{lemma}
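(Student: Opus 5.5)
The approach is to fix an onto p-morphism $f\colon\mathbf{V}_n\to\mathbf{F}$, work in the coordinates of Lemma~\ref{AcongV}, and reduce the bound on $b(x)$ to a bound on the number of $w_0$ coordinates of a suitably chosen preimage of $x$. The reduction is straightforward; the counting afterwards is where the branching hypotheses enter and where I expect the difficulty.

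\emph{Reduction.} Fix $x\in\mathbf{F}$ and choose $a\in f^{-1}(x)$ that is maximal in $f^{-1}(x)$, i.e.\ no $a'>a$ has $f(a')=x$. Let $t$ be an immediate successor of $x$. By the back condition there is $b>a$ with $f(b)=t$; pick an immediate successor $c$ of $a$ with $a<c\le b$. Then $x\le f(c)\le t$, and $f(c)\neq x$ by maximality of $a$, so $f(c)=t$ because $t$ is an immediate successor of $x$. Hence every immediate successor of $x$ is the image of an immediate successor of $a$. In $\mathbf{V}_n$, $a$ has exactly $2\#_0(a)$ immediate successors: replace one $w_0$ coordinate by $w_1$ or by $w_2$. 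Therefore
\[
b(x)\le 2\#_0(a),
\]
and it suffices to prove $\#_0(a)\le 3^{d(x)-1}$ for some such maximal preimage $a$. Depth one is immediate: every successor of $a$ maps to a point $\ge x$, hence to $x$, so maximality forces $a$ to be maximal and $b(x)=0$. Depth two is given by hypothesis, since $b(x)=2=2\cdot 3^{0}$.

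\emph{Induction.} I would argue by induction on $d(x)\ge 3$, proving the sharper statement $\#_0(a)\le 3^{d(x)-1}$. Let $Z$ be the set of $w_0$ coordinates of $a$. For $i\in Z$ let $a^{i,1},a^{i,2}$ be the two immediate successors of $a$ obtained by changing coordinate $i$. By the reduction step each $f(a^{i,\epsilon})$ is an immediate successor of $x$, so has depth at most $d(x)-1$.

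The plan is a pigeonhole argument on the coordinates in $Z$. Attach to each $i\in Z$ a ``type'' recording the images under $f$ of the points of $a{\uparrow}$ that differ from $a$ in coordinate $i$ and in at most one further coordinate, read inside the upsets of the $f(a^{i,\epsilon})$. Using the induction hypothesis on those upsets, which have depth $<d(x)$, I would show the number of possible types is at most $3^{d(x)-1}$. I would then show that two distinct coordinates $i\neq j\in Z$ of the same type lead to a contradiction:
\begin{itemize}
\item either they produce a point above $a$ that still maps to $x$, contradicting maximality of $a$;
\item or they force a point of depth $\ge 3$ in $\mathbf{F}$ to have at most two immediate successors. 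The point $a$ with coordinates $i,j$ changed gives a $\mathbf{V}_2$-square whose image must collapse, and this is exactly where the hypothesis ``branching $\ge 3$ above depth two'' is used.
\end{itemize}
The factor $3$ should come from the three values $w_0,w_1,w_2$ a single coordinate can take on top of a point of the next lower level.

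\emph{Main obstacle.} I expect the hard step to be choosing the right notion of type so that it is simultaneously (i) counted by $3^{d(x)-1}$ via induction and (ii) strong enough that equal types contradict maximality or the branching hypothesis. My first attempt would define the type of $i$ as the function $\epsilon\mapsto f(a^{i,\epsilon})$ together with the behaviour of $f$ on the $2$-dimensional faces $\{a^{i,\epsilon}, a^{j,\delta}, a^{\{i,j\},(\epsilon,\delta)}\}$. If that fails, I would refine the type inductively along a chain realising $d(x)$.
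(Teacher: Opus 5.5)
You correctly show that if $a$ is maximal in $f^{-1}(x)$, every immediate successor of $x$ is the image of an immediate successor of $a$, so $b(x)\le 2\#_0(a)$. The converse you then use, that each $f(a^{i,\epsilon})$ is an immediate successor of $x$, is false even under the lemma's hypotheses. Take the p-morphism $f_2\colon\mathbf{A}_2\to\mathbf{P}_0(3)$ from the proof of Theorem~\ref{thm:fpfatba}. There $(\emptyset,\emptyset)$ is the unique preimage of the root $[3]$, and $\mathbf{P}_0(3)$ satisfies the branching conditions. Yet the immediate successor $(\emptyset,\{1\})$ is mapped to the maximal point $\{1\}$, which lies above $\{1,2\}$. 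Only $d(f(a^{i,\epsilon}))\le d(x)-1$ survives.

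The genuine gap is that everything after the reduction is a plan rather than a proof. No notion of type is fixed, the bound $3^{d(x)-1}$ on the number of types is not derived, and the dichotomy for two coordinates of the same type is only asserted. That is the whole content of the lemma; the paper does not prove it but quotes it from Fontaine (Claim~16), so there is no in-paper proof to follow. Your proposed type cannot carry this as stated:
\begin{itemize}
\item Every immediate successor of $x$ occurs among the values $f(a^{i,\epsilon})$. So a type recording these values takes at least $b(x)/2$ distinct values on $Z$. Showing that at most $3^{d(x)-1}$ types occur therefore already gives $b(x)\le 2\cdot 3^{d(x)-1}$: it is the lemma itself, not a step towards it.
\item If you record the values only up to isomorphism, two coordinates of equal type need not share any image point. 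It is then unclear how either branch of your dichotomy would be triggered.
\item Your induction hypothesis concerns maximal preimages, but $a^{i,\epsilon}$ is in general not maximal in its fibre. Passing to a maximal preimage above it changes further coordinates of $Z$ in an uncontrolled way.
\end{itemize}

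Finally, the condition $b\ge 3$ must do essential work at $x$ itself. For $k\ge 2$, define a map on $\mathbf{V}_k$ as follows: send the all-$w_0$ tuple to $r$, every other tuple containing a $w_0$ to $y$, and each $w_0$-free tuple to $m_1$ or $m_2$ according to the parity of its number of $w_1$'s. This is a p-morphism onto the frame with root $r$, unique immediate successor $y$, and maximal points $m_1,m_2$ above $y$. The only preimage of $r$ has $\#_0=k$, which is arbitrarily large, and the only hypothesis that fails is $b(r)\ge 3$. Your sketch gives no mechanism by which $b(x)\ge 3$ forces $\#_0(a)$ down, so the inductive step, and with it the lemma, remains unproved.
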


\begin{corollary}
$\mathbf{C}_{s, 2 \times 3^s}$ is not a $\mathsf{Cheq}$-frame.	
\label{notcheq}
\end{corollary}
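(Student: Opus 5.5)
The plan is to argue by contradiction, with Lemma~\ref{Fon} doing the counting. Suppose $\mathbf{C}_{s,2\times 3^s}$ were a $\mathsf{Cheq}$-frame. The first step is to reduce to a p-morphic image of a single $\mathbf{V}_n$. Since $\mathsf{Cheq}=\mathsf{Log}(\{\mathbf{V}_n:n\in\omega\})$ and $\mathbf{C}_{s,2\times 3^s}$ is a finite rooted frame, the standard Jankov--de Jongh argument applies: the Jankov formula of the frame is refuted on it, so it is refuted on some $\mathbf{V}_n$. Hence $\mathbf{C}_{s,2\times 3^s}$ is a p-morphic image of a generated subframe of $\mathbf{V}_n$. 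A generated subframe $a{\uparrow}$ of $\mathbf{V}_n$ is isomorphic to $\mathbf{V}_k$, where $k$ is the number of $w_0$-coordinates of $a$, so we obtain a p-morphism from some $\mathbf{V}_k$ onto $\mathbf{C}_{s,2\times 3^s}$.

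The second step is to check the branching hypotheses of Lemma~\ref{Fon} from the description of $\mathbf{C}_{u,n}$, taking $u=s$. The depth-one points are $(0,0)$ and $(0,1)$. Each point $(1,j)$ of depth two sees exactly these two, so its branching degree is two. A point $(i,j)$ with $2\le i\le u-2$ is forbidden only from seeing $(i-1,3-j)$, so it has three immediate successors among the four points of the level below. The points $x$, $y$ and each $z^i$ have exactly three listed immediate successors, and the root has $2+n\ge 3$. Thus every point of depth two has branching degree two and every deeper point has branching degree at least three. Lemma~\ref{Fon} then gives $b(r)\le 2\times 3^{d(r)-1}$ for the root $r$.

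The final step is the count. The root has depth $u=s$, and its immediate successors are $x$, $y$ and the $n=2\times 3^s$ red points, so $b(r)=2\times 3^s+2$. This is strictly larger than $2\times 3^{s-1}$, contradicting Lemma~\ref{Fon}. (Even if the root's depth is counted as $s+1$ under the paper's chain-length convention, we would get $2\times 3^s<2\times 3^s+2$, which is still a contradiction.) Therefore $\mathbf{C}_{s,2\times 3^s}$ is not a $\mathsf{Cheq}$-frame.

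The main obstacle is the bookkeeping in the second step. One must fix the depth convention, decide whether $d$ counts points or edges, and read off immediate successors correctly near the red points and the level $u-2$ from the figure's conventions. Once these are settled, the inequality has plenty of room, so I would first pin down the depth of the root and of the points $(1,j)$ precisely and then simply compare the counts.
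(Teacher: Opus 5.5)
Your proof is correct and follows the paper's argument: Jankov--de Jongh, together with closure of $\{\mathbf{V}_n : n\in\omega\}$ under rooted generated subframes, yields a p-morphism from some $\mathbf{V}_k$ onto the frame, and the root's branching degree $2\times 3^s+2$ then contradicts Lemma~\ref{Fon}. One correction to your depth count: under the convention you use when checking the hypotheses (maximal points have depth one, the $(1,j)$ have depth two), the root has depth $s+1$, as the paper says, so your parenthetical comparison $2\times 3^s<2\times 3^s+2$ is the one that does the work, and the edge-counting reading is not available because under it the points of depth two have branching degree three, so the lemma's hypothesis fails.
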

\begin{proof}
Because $\{V_n: n\geq 1\}$ is a class of frames closed under rooted
generated subframes, then for every finite rooted frame, it is a $\mathsf{Cheq}$-frame iff  it is a \emph{p}-morphic image of some $\mathbf{V}_n$ by Jankov-de Jongh Theorem. The root of $\mathbf{C}_{s, 2 \times 3^s}$ has depth of $s+1$, its branching degree is $2 \times 3^s+2$, so $\mathbf{C}_{s, 2 \times 3^s}$ is not a \emph{p}-morphic image of any $\mathbf{V}_n$ by Lemma~\ref{Fon} and so is not a $\mathsf{Cheq}$-frame.
\end{proof}

\begin{theorem}
$\mathsf{Cheq}$ is not finitely axiomatizable.	
\label{thm:cheq}
\end{theorem}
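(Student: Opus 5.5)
We follow the classical Maksimova--Skvortsov--Shehtman scheme. Suppose, towards a contradiction, that $\mathsf{Cheq}=\mathsf{IPC}+\chi$ for a single formula $\chi$, and let $k$ be the number of propositional variables occurring in $\chi$ (or the implicational depth of $\chi$; either bound serves). The plan is to exhibit, for suitably large $s$, a frame that refutes some theorem of $\mathsf{Cheq}$ yet validates $\chi$. By Corollary~\ref{notcheq}, a frame of the shape $\mathbf{C}_{s,2\times 3^s}$, whose root has too many red successors, is not a $\mathsf{Cheq}$-frame. On the other hand, Lemma~\ref{lem:cc} shows that $\mathbf{C}_{2^{s-1}+1,2^s}$ is a $\mathsf{Cheq}$-frame. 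Fix $u=2^{s-1}+1$ and compare $\mathbf{C}_{u,N}$ for $N=2\times 3^{u}$, which is a non-$\mathsf{Cheq}$ frame of the same depth, with the $\mathsf{Cheq}$-frame $\mathbf{C}_{u,2^s}$. Here $2^s<2\times 3^u$, so the exponential bound of Lemma~\ref{Fon} separates the two.

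The key step is to show that $\chi$ cannot distinguish $\mathbf{C}_{u,N}$ from frames it validates. Suppose $\chi$ is refuted on $\mathbf{C}_{u,N}$ under a valuation $V$. All red points $z^i$ have the same set of immediate successors, so their generated upsets are isomorphic. Under $V$, each $z^i$ is determined, up to $p$-morphic equivalence of models, by the truth values at $z^i$ of the $k$ variables; the part of the frame strictly above $z^i$ is shared. Hence at most $2^k$ distinct types of red points occur. By collapsing red points of the same type we obtain a model $p$-morphism onto a submodel carried by a frame $\mathbf{C}_{u,n'}$ with $n'\le 2^k$. The map sends $z^i$ to a representative of its type and is the identity elsewhere. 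The back condition holds at the root because every type is represented, and it holds at the $z^i$ because their upsets coincide. This collapse is legitimate because $\mathbf{C}_{u,n}$ differs from $\mathbf{C}_{u,n'}$ only in the number of interchangeable red points. So $\chi$ is refuted on $\mathbf{C}_{u,n'}$ for some $1\le n'\le 2^k$. Choose $s$ so large that $2^k\le 2^s$. It then remains to see that $\mathbf{C}_{u,n'}$ is a $\mathsf{Cheq}$-frame for every $1\le n'\le 2^s$, which contradicts $\chi\in\mathsf{Cheq}$.

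The main obstacle is this monotonicity: passing from $2^s$ red points down to fewer. I would first try to obtain $\mathbf{C}_{u,n'}$ directly as a $p$-morphic image of $\mathbf{C}_{u,2^s}$ by merging red points. That map is a $p$-morphism for the same reason as the collapse above: identical upsets, and the root keeps at least one red successor. Since $\mathsf{Cheq}$-frames are closed under $p$-morphic images, this settles the case $n'\ge 1$. If some valuation leaves no red types at all, which cannot happen since $n\ge1$, nothing further is needed. I would also double-check that the depth parameter in Corollary~\ref{notcheq} matches $u$, i.e.\ that the root of $\mathbf{C}_{u,N}$ has depth $u+1$ and branching $N+2>2\times 3^{u}$. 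With both facts in place, $\mathsf{IPC}+\chi$ has a non-$\mathsf{Cheq}$ frame $\mathbf{C}_{u,N}$ on which $\chi$ is valid, so $\mathsf{IPC}+\chi\neq\mathsf{Cheq}$, and $\mathsf{Cheq}$ is not finitely axiomatizable. By the folklore lemma on modal companions, neither $\mathsf{L}_\infty$ nor $\mathsf{S4.FPFA}$ is finitely axiomatizable either.
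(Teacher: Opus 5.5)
Your proposal is correct and is essentially the paper's proof. As in the paper, you take $u=2^{s-1}+1$, play the non-$\mathsf{Cheq}$ frame $\mathbf{C}_{u,2\times 3^{u}}$ (Corollary~\ref{notcheq}) against the $\mathsf{Cheq}$-frame $\mathbf{C}_{u,2^s}$ (Lemma~\ref{lem:cc}), and collapse, by pigeonhole, red points that agree on the variables of the putative axiom. The only difference is bookkeeping: you collapse all same-type red points at once and then recover $\mathbf{C}_{u,n'}$ as a $p$-morphic image of $\mathbf{C}_{u,2^s}$, whereas the paper deletes duplicates one at a time and stops at exactly $2^s$. Your parenthetical claim that implicational depth would serve equally well is not right, because the pigeonhole step needs a bound on the number of variables.
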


\begin{proof}
Suppose $\varphi(p_1, p_2, \ldots, p_s)$ is a formula with $s$ variables axiomatizing $\mathsf{Cheq}$  for $s\geq 2$. Let $s_1= 2^{s-1}+1$, then $\mathbf{C}_{s_1,2 \times 3^{s_1}}$ is not a $\mathsf{Cheq}$ -frame.
Let $V_0$ be the valuation such that $(\mathbf{C}_{s_1,2 \times 3^{s_1}}, V_0)$  refutes $\varphi$. 
By Dirichlet's drawer principle, there exist $1\leq n<m \leq 2 \times 3^{s_1}$, such that $z^n$ and $z^m$ agree on all propositional variables $p_i$ for $1 \leq i \leq s$. 
Without less of generality, we may assume that $n=2 \times 3^{s_1}-1$ and $m=2 \times 3^{s_1}$. 
Define a valuation $V_1$ on $\mathbf{C}_{s_1,2 \times 3^{s_1}-1}$ by $V_1(p_i) = V_0(p_i)\setminus \{z^{2 \times 3^{s_1}}\}$. 
It is easy to construct a \emph{p}-morphism that identifies $z^{2 \times 3^{s_1}-1}$ and $z^{2 \times 3^{s_1}}$, this map is a \emph{p}-morphism from $(\mathbf{C}_{s_1,2 \times 3^{s_1}},V_0)$ to $(\mathbf{C}_{s_1,2 \times 3^{s_1}-1},V_1)$ and so $\mathbf{C}_{s_1,2 \times 3^{s_1}-1} \nvDash \varphi$. We can keep reducing the number of $z^j$ till only $2^s$ remain. As a conclusion, $\mathbf{C}_{s_1,2^s} \nvDash \varphi$, but $\mathbf{C}_{2^{s-1}+1,2^s}$ is a $\mathsf{Cheq}$-frame by Lemma~\ref{lem:cc}. Contradiction!

Therefore, $\mathsf{Cheq}$ is not axiomatizable with $s$ variables for any $s\geq 2$ and so $\mathsf{Cheq}$ is not finitely axiomatizable.
\end{proof}

\begin{corollary}
	$\mathsf{L}_{\infty}$ is not finitely axiomatizable.
\end{corollary}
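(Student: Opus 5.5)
The plan is to transfer the non-finite axiomatizability of $\mathsf{Cheq}$ from Theorem~\ref{thm:cheq} to its greatest modal companion using the Blok--Esakia machinery already recalled in Section~\ref{sec:def}. We have $\mathsf{L}_{\infty}=\sigma(\mathsf{Cheq})$. This follows from Esakia's theorem, since $\mathsf{L}_{\infty}=\mathsf{ML}(\{\mathbf{A}_n\})$ and, by Lemma~\ref{AcongV}, $\mathsf{Log}(\{\mathbf{A}_n\})=\mathsf{Cheq}$. It therefore suffices to invoke the Folklore Lemma (Maksimova; \cite[Corollary 8]{She90}): an intermediate logic $\mathsf{L}$ is finitely axiomatizable iff $\sigma(\mathsf{L})$ is. Applied with $\mathsf{L}=\mathsf{Cheq}$, Theorem~\ref{thm:cheq} immediately gives that $\mathsf{L}_{\infty}$ is not finitely axiomatizable.

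For robustness I would also sketch the direction actually used, namely that if $\sigma(\mathsf{L})$ is finitely axiomatizable then so is $\mathsf{L}$. Suppose $\sigma(\mathsf{Cheq})=\mathsf{Grz}+\chi$ for a single modal formula $\chi$. By compactness, $\chi$ is derivable from $\mathsf{Grz}$ together with finitely many translations $\mathrm{T}(\varphi_1),\ldots,\mathrm{T}(\varphi_k)$ with $\varphi_i\in\mathsf{Cheq}$. Hence
\[
\sigma(\mathsf{Cheq})=\mathsf{Grz}+\{\mathrm{T}(\varphi_1),\ldots,\mathrm{T}(\varphi_k)\}=\sigma(\mathsf{IPC}+\varphi_1\wedge\cdots\wedge\varphi_k).
\]
Since $\sigma$ is injective, being a lattice isomorphism by the Blok--Esakia theorem, we get $\mathsf{Cheq}=\mathsf{IPC}+\varphi_1\wedge\cdots\wedge\varphi_k$. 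This contradicts Theorem~\ref{thm:cheq}.

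The only genuine point needing care is the identification $\mathsf{L}_{\infty}=\sigma(\mathsf{Cheq})$, in particular that the modal logic of the finite posets $\mathbf{V}_n$ is exactly $\mathsf{Grz}$ plus translations. That identification is supplied by Esakia's theorem as stated, so I expect no real obstacle. The same reasoning applied to $\tau$ shows that $\mathsf{S4.FPFA}$ is not finitely axiomatizable either: if $\tau(\mathsf{Cheq})$ were finitely axiomatizable, then $\sigma(\mathsf{Cheq})=\mathsf{Grz}+\tau(\mathsf{Cheq})$ would be too.
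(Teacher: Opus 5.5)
Your proposal is correct and follows the paper's route: the paper's one-line proof combines Theorem~\ref{thm:cheq} with the identification $\mathsf{L}_{\infty}=\sigma(\mathsf{Cheq})$ and Maksimova's lemma that $\mathsf{L}$ is finitely axiomatizable iff $\sigma(\mathsf{L})$ is. Your compactness-plus-injectivity unpacking of the needed direction is sound but optional; it tacitly uses the standard fact $\sigma(\mathsf{IPC}+\psi)=\mathsf{Grz}+\mathrm{T}(\psi)$.
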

\begin{proof}
By Theorem~\ref{thm:cheq}, $\mathsf{L}_{\infty}=\sigma(\mathsf{Cheq})$ is not finitely axiomatizable.
\end{proof}

\section{Decidability of $\mathsf{Cheq}$}
The non-finite axiomatizability result now has a different proof via the undecidability. If $\mathsf{Cheq}$ is undecidable, then $\mathsf{Cheq}$ is not finitely axiomatizable since it has finite model property. We provide a reduction from the decidability problem for $\mathbf{Med}$ to $\mathbf{Cheq}$.  

\subsection{Reduction}
For any formula $\varphi$, let $p, q$ be two distinct propositional variables that are not occurred in $\varphi$. Define $\psi_0(p)$ and $\psi_1(p,q)$ as follows:
\[\psi_0 (p):= p \vee \neg p \quad \psi_1 (p,q):=\neg p \vee \neg q.	
\]
Then we define a formula $C(\varphi): = (\varphi^{\psi_0}\to \psi_1)\to  \psi_1$.

The formula $\psi_0$ and $\psi_1$ admit intuitive geometric interpretations. Following our previous discussion, we regard $\mathbf{V}_n$ as a $n$-hypercube, its subface also generates a sub-hypercube. If $\psi_1$ fails on some hypercube, this means that there are two vertices $v_0$ and $v_1$ witnessing $p$ and $q$, respectively.  They need not be a diagonal of the whole hypercube, but $v_0$ and $v_1$ will generate, by serving as a diagonal, a subhypercube. In this subhypercube, only the vertex $v_0$ witness $p$. In a hypercube, when only the vertex $v$ witness $p$, the faces on which $\psi_0$ fails are precisely those faces containing $v$, they form a simplex and the Medvedev frame appears here.

Our construction of the countermodel consists in carving out a simplex in a subhypercube inside a hypercube, or in using a simplex to reconstruct a hypercube. The role of Friedman translation here is to make $\varphi$ and $\varphi^{\psi_0}$ correspond to two different geometric objects. Since $\mathsf{Med}$ corresponds to a simplex, which is in fact the simplest geometric object, the logics with a geometric meaning, by this I mean the logics that can be characterized by geometric objects, admit what seems to be a feasible procedure of carving out a simplex or other geometric objects. This suggests that many logics can be given translations by the same method, thereby construct a reduction from their undecidability to that of logics of known geometric objects, such as $\mathsf{Med}$ and $\mathsf{Cheq}$.

\begin{theorem}
For any formula $\varphi$, $\varphi \in \mathsf{Med}$ iff $C(\varphi)\in \mathsf{Cheq}$.	
\label{thm:cheqmed}
\end{theorem}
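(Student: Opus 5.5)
The plan is to prove the two directions separately. In both, the bridge is the First Pruning Lemma~\ref{lem:pruning} applied with $\rho=\psi_0=p\vee\neg p$, together with the cube picture of Theorem~\ref{thm:dualfaceposetofcube}. The key combinatorial fact is this: in a $k$-cube in which exactly one vertex $v_0$ forces $p$, the points of $\mathbf{V}_k$ that refute $\psi_0$ are exactly the faces of positive dimension containing $v_0$. These faces are determined by their nonempty sets of free coordinates, ordered by reverse inclusion, so they form a copy of $\mathbf{P}_0(k)$.

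\emph{($\Leftarrow$: if $C(\varphi)\notin\mathsf{Cheq}$ then $\varphi\notin\mathsf{Med}$.)}
\begin{enumerate}
\item Suppose a point $a$ of a model on some $\mathbf{V}_n$ refutes $C(\varphi)$. Then $a\vDash\varphi^{\psi_0}\to\psi_1$ and $a\nvDash\neg p\vee\neg q$. So there are maximal points (vertices) above $a$ forcing $p$, and maximal points above $a$ forcing $q$.
\item Fix such a vertex $v_1\vDash q$. Among the $p$-vertices above $a$, choose $v_0$ at minimal Hamming distance from $v_1$.
\item Let $x$ be the meet of $v_0,v_1$, i.e.\ the subcube they span; we have $x\geq a$ and $x\nvDash\psi_1$.
\item If $v_0=v_1$, then $x=v_0$ forces $\psi_0$ and hence $\varphi^{\psi_0}$, but not $\psi_1$. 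This already contradicts $a\vDash\varphi^{\psi_0}\to\psi_1$.
\item Otherwise the subcube $x{\uparrow}$ has dimension $k\geq1$. By minimality, $v_0$ is its only $p$-vertex, since any other vertex of the subcube is strictly closer to $v_1$. Persistence then shows that a face in $x{\uparrow}$ forces $\neg p$ iff it avoids $v_0$.
\item Hence the points of $x{\uparrow}$ refuting $\psi_0$ form a copy of $\mathbf{P}_0(k)$ with root $x$.
\item Since $x\nvDash\psi_1$, we get $x\nvDash\varphi^{\psi_0}$. Lemma~\ref{lem:pruning} then yields a refutation of $\varphi$ at the root of $\mathbf{P}_0(k)$, so $\varphi\notin\mathsf{Med}$.
\end{enumerate}

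\emph{($\Rightarrow$: if $\varphi\notin\mathsf{Med}$ then $C(\varphi)\notin\mathsf{Cheq}$.)}
\begin{enumerate}
\item Since rooted generated subframes of $\mathbf{P}_0(k)$ are again Medvedev frames, we may assume $\varphi$ is refuted at the root of some $\mathbf{P}_0(k)$ under a valuation $V$, with $k\geq1$.
\item In $\mathbf{V}_k$, let $v_0=(w_1,\dots,w_1)$ and $v_1=(w_2,\dots,w_2)$. Let $S$ be the set of non-maximal points all of whose coordinates lie in $\{w_0,w_1\}$, i.e.\ the positive-dimensional faces containing $v_0$; then $S\cong\mathbf{P}_0(k)$.
\item The set $S$ is a downset. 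Indeed, if $c\leq b\in S$ then $c$ still contains $v_0$ and is not maximal.
\item Define the valuation: $p$ holds only at $v_0$ and $q$ holds only at $v_1$. Each variable of $\varphi$ is copied from $V$ on $S$ and made true everywhere off $S$. This is upward closed because $S$ is a downset.
\item By construction, the points refuting $\psi_0$ are exactly $S$. So Lemma~\ref{lem:pruning} gives that the root refutes $\varphi^{\psi_0}$, and it also refutes $\psi_1$ since it sees both $v_0$ and $v_1$.
\item It remains to check that every point $y$ satisfies $\varphi^{\psi_0}\to\psi_1$ at $y$.
\begin{itemize}
\item A point off $S$ is either $v_0$, where $\neg q$ holds, or a face avoiding $v_0$, where $\neg p$ holds.
\item A non-root point of $S$ is a proper face containing $v_0$. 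It therefore does not contain $v_1$, so $\neg q$ holds there.
\item The root refutes the antecedent $\varphi^{\psi_0}$.
\end{itemize}
\item Hence the root refutes $C(\varphi)$, and $C(\varphi)\notin\mathsf{Cheq}$.
\end{enumerate}

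The main obstacle is the $(\Leftarrow)$ direction. There the valuation of $p$ is arbitrary, and a $\mathbf{P}_0(k)$-shaped pruned region has to be extracted from it. I expect the closest-vertex choice of $v_0$, together with restricting to the subcube spanned by $v_0$ and $v_1$, to resolve this. What needs careful checking there is that the ordering on the $\psi_0$-refuting faces is exactly reverse inclusion of free coordinate sets, and that the pruning lemma applies to the generated submodel $x{\uparrow}$.
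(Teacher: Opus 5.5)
Your proof is correct and follows the paper's argument: both directions use the Pruning Lemma in the same way. In the direction producing a $\mathsf{Cheq}$-countermodel you use the same valuation, with $p$ true only at $(w_1,\ldots,w_1)$ and $q$ only at $(w_2,\ldots,w_2)$, and in the other direction you extract a subcube with a unique $p$-vertex. The only variation is how that subcube is found: the paper takes a maximal point refuting $\psi_1$ and shows its $p$-vertex is unique, since every $p$-vertex and $q$-vertex above it meet at that point, whereas your closest-vertex choice makes uniqueness immediate. Note also that your direction labels are swapped relative to the statement, which is harmless.
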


\begin{proof}
($\Rightarrow$) If $C(\varphi) \notin \mathsf{Cheq}$, then there is a model $\mathcal{M}$ of $\mathbf{V}_n$ and $x\in \mathbf{V}_n$ such that $x\vDash_{\mathcal{M}} \varphi^{\psi_0} \to  \psi_1$ but $x\nvDash_{\mathcal{M}} \psi_1$. There exists $y\geq x$ such that, $y \nvDash_{\mathcal{M}} \psi_1$ and no $y'> y$ with  $y' \nvDash_{\mathcal{M}}\psi_1$. 
Because $x\vDash_{\mathcal{M}} \varphi^{\psi_0} \to  \psi_1$ and $x\leq y$, we obtain that $y\vDash_{\mathcal{M}} \varphi^{\psi_0} \to  \psi_1$. If $y \vDash_{\mathcal{M}} \varphi^{\psi_0}$, then $y\vDash_{\mathcal{M}} \psi_1$, a contradiction. So $y\nvDash_{\mathcal{M}} \varphi^{\psi_0}$. 
Let $\mathbf{P}_0:=(\{x\in y\uparrow : x\nvDash_{\mathcal{M}} \psi_0\},\leq)$. If $y$ is maximal in $\mathbf{V}_n$, then $y\vDash_{\mathcal{M}} p$ since $y\nvDash_{\mathcal{M}} \psi_1$. Hence $y\vDash_{\mathcal{M}} \psi_0$ and so $y\vDash_{\mathcal{M}} \varphi^{\psi_0}$ because of the definition of Friedman translation, a contradiction. Then $y$ is not a maximal point and so $\mathbf{P}_0 \neq \emptyset$. By Pruning Lemma~\ref{lem:pruning}, $y\vDash_{\mathcal{M}} \varphi^{\psi_0}$ implies $y \nvDash_{\mathcal{M}_0} \varphi$ in the submodel ${\mathcal{M}_0}$ restrict on $\mathbf{P}_0$. Now consider the frame $\mathbf{P}_0$, its root $y\nvDash_{\mathcal{M}_0} \psi_1$. Because $\psi_1=\neg p \vee \neg q$, then there are maximal points $x'$ and $x''$ above $y$ in $\mathbf{V}_n$, such that $x' \vDash_{\mathcal{M}_0} p$ and $x'' \vDash_{\mathcal{M}_0} q$. Define $y'=(y'_1, \ldots, y'_n)$ as follows: 
$$y'_i=\begin{cases}
	x'_i, &\text{if } x'_i=x''_i,\\	
    w_0, &\text{otherwise}.
\end{cases}$$     
Then $y'_i=w_0$ implies that $\{x'_i, x''_i\}=\{w_1,w_2\}$, and so $y_i=w_0$. If $y'_i\neq w_0$, then $y'_i=x'_i\geq y_i$. Thus $y\leq y'$. Since $y'$ can see both $x'$ and $x''$, then $y'\nvDash_{\mathcal{M}} \psi_1$. But no $y' > y$ with $y' \nvDash_{\mathcal{M}} \psi_1$, so $y'=y$. If there is another maximal point $z\neq x'$ above $y$ with $z\vDash_{\mathcal{M}_0} p$, then we can define $y''$ as follows:
$$y''_i=\begin{cases}
	x''_i, &\text{if } x''_i=z_i,\\	
    w_0, &\text{otherwise}.
\end{cases}$$  
By the same argument, we obtain that $y''=y=y'$ and so $z=x'$, a contradiction. Thus only one maximal point $x' \vDash_{\mathcal{M}_0} p$ in $y\uparrow$. So $\mathbf{P}_0=\{x: y\leq x < x'\}$. For any $x\in \mathbf{P}_0$, if $y_i=x'_i$, then $x_i=x'_i$. If $y_i \neq x'_i$, then $x_i =w_0$ or $x_i= x'_i$. Thus $\mathbf{P}_0$ is isomorphism to some Medvedev frame.
  Therefore, $\varphi \notin \mathsf{Med}$ since $y\nvDash_{\mathcal{M}_0} \varphi$ on a model of $\mathbf{P}_0$.

($\Leftarrow$) If $\varphi \notin \mathsf{Med}$, then there is a model $\mathcal{M}$ of Medvedev frame $\mathbf{P}_0(n)$ such that $[n] \nvDash_{\mathcal{M}} \varphi$. 
In the frame $\mathbf{V}_n$, define a model $\mathcal{M}_1:=(\mathbf{V}_n, V)$, where $V$ is a valuation such that $V(p)=\{(w_1,\ldots, w_1)\}$ and $V(q)=\{(w_2,\ldots, w_2)\}$. 
Let $\mathbf{P}_1:=\{x\in \mathbf{V}_n : x  \nvDash_{\mathcal{M}_1} \psi_0\}$, then $\mathbf{P}_1=(w_1,\ldots,w_1)\downarrow \setminus (w_1,\ldots,w_1)$ in $\mathbf{V}_n$, which is obtained from the downset of $(w_1,\ldots,w_1)$ by removing the top. It is obvious that $\mathbf{P}_1 \cong \mathbf{P}_0(n)$, and so $(w_0,\ldots, w_0) \nvDash_{\mathcal{M}} \varphi$. 
By Pruning Lemma, $(w_0,\ldots, w_0) \nvDash_{\mathcal{M}} \varphi$ implies $(w_0,\ldots, w_0) \nvDash_{\mathcal{M}'_1} \varphi^{\psi_0}$, 
where $\mathcal{M}'_1 $ is a model on $\mathbf{V}_n$ and agrees with $\mathcal{M}$ and $\mathcal{M}_1$ on their respective valuation. 
For any $x\in \mathbf{V}_n$,  $x \nvDash_{\mathcal{M}'_1} \psi_1$ iff $x= (w_0,\ldots, w_0)$. So $(w_0,\ldots, w_0) \vDash_{\mathcal{M}'_1} \varphi^{\psi_0}\to  \psi_1$. Together with $(w_0,\ldots, w_0) \nvDash_{\mathcal{M}'_1} \psi_1$, we obtain that $ (w_0,\ldots, w_0) \nvDash_{\mathcal{M}'_1} (\varphi^{\psi_0}\to \psi_1)\to \psi_1$. As a conclusion, $C(\varphi)\notin \mathsf{Cheq}$.	
\end{proof}

\begin{corollary}
	If $\mathsf{Med}$ is undecidable, then $\mathsf{Cheq}$ is undecidable either and, consequently, is not finitely axiomatizable.
\end{corollary}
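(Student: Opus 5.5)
The corollary follows directly from Theorem~\ref{thm:cheqmed}, which provides a computable many-one reduction, together with the standard fact that a finitely axiomatizable logic with the finite model property is decidable.

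\emph{Step 1: decidability transfers backwards along $C$.} The map $\varphi \mapsto C(\varphi) = (\varphi^{\psi_0}\to\psi_1)\to\psi_1$ is computable. One picks two fresh variables $p,q$ not occurring in $\varphi$ (say the first two in a fixed enumeration). The Friedman translation is a straightforward recursion on $\varphi$. By Theorem~\ref{thm:cheqmed}, $\varphi\in\mathsf{Med}$ iff $C(\varphi)\in\mathsf{Cheq}$. Hence a decision procedure for $\mathsf{Cheq}$ composed with $C$ would decide $\mathsf{Med}$. Contrapositively, if $\mathsf{Med}$ is undecidable then $\mathsf{Cheq}$ is undecidable.

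\emph{Step 2: undecidability implies non-finite axiomatizability.} Suppose $\mathsf{Cheq}=\mathsf{IPC}+\chi$ for a single formula $\chi$. Then $\mathsf{Cheq}$ is recursively enumerable, since we can enumerate derivations. Its complement is also recursively enumerable. By definition $\mathsf{Cheq}=\mathsf{Log}(\{\mathbf{V}_n\})$, so $\varphi\notin\mathsf{Cheq}$ iff some finite $\mathbf{V}_n$ with some valuation refutes $\varphi$. We can enumerate all $n$ and all valuations on the variables of $\varphi$ and check refutation by finite model checking; the frames $\mathbf{V}_n$ are explicitly computable. Because both $\mathsf{Cheq}$ and its complement are r.e., Post's theorem makes $\mathsf{Cheq}$ decidable, a contradiction.

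Note that the complement of $\mathsf{Cheq}$ is r.e. even without finite axiomatizability; finite axiomatizability is needed only to make $\mathsf{Cheq}$ itself r.e. The only point needing care is that the class $\{\mathbf{V}_n\}$ is uniformly computable, which is immediate from the definition.

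The main obstacle is not in this corollary, which is routine once Theorem~\ref{thm:cheqmed} is granted. The real content lies in that theorem and in the externally supplied hypothesis that $\mathsf{Med}$ is undecidable.
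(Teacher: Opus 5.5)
Your proof is correct and follows the paper's route: the computable reduction $\varphi\mapsto C(\varphi)$ from Theorem~\ref{thm:cheqmed} turns a decision procedure for $\mathsf{Cheq}$ into one for $\mathsf{Med}$, and a Harrop-style argument then converts undecidability into non-finite axiomatizability. Your second step gets co-r.e.-ness directly from the explicit, uniformly computable family $\{\mathbf{V}_n\}$ rather than from the bare finite model property the paper cites, so it is slightly sharper: it shows that an undecidable $\mathsf{Cheq}$ could not even be recursively axiomatizable.
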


\begin{proof}
By the Theorem~\ref{thm:cheqmed}, the decidability of $\mathsf{Cheq}$ implies the decidability of $\mathsf{Med}$. Since $\mathsf{Cheq}$ has the finite model property, then the undecidability of it implies the non-axiomatizability. 
\end{proof}

Recently, since the undecidability of $\mathsf{Med}$ has been proved by AI, see \cite{Ak26, Paw26}, then
\begin{theorem}
	$\mathsf{Cheq}$ is undecidable.
\label{cheqund}
\end{theorem}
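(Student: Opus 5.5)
The plan is to obtain the undecidability of $\mathsf{Cheq}$ as a direct consequence of Theorem~\ref{thm:cheqmed} and the cited undecidability of $\mathsf{Med}$ \cite{Ak26, Paw26}, by a many-one reduction. The first step is to observe that the map $\varphi \mapsto C(\varphi) = (\varphi^{\psi_0}\to\psi_1)\to\psi_1$ is computable. Given $\varphi$, we choose two fresh variables $p,q$ not occurring in $\varphi$ (say the first two in a fixed enumeration of $\mathbf{Prop}$ that are absent from $\varphi$). We then form the Friedman translation $\varphi^{\psi_0}$ by the purely syntactic recursion in its definition, replacing each atom $r$ by $r\vee(p\vee\neg p)$; note that $\bot$ is also atomic and becomes $\bot\vee\psi_0$. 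Finally we assemble $C(\varphi)$. All of this is a primitive recursive operation on formulas.

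The second step is the reduction argument itself. Suppose, towards a contradiction, that $\mathsf{Cheq}$ is decidable. To decide whether $\varphi\in\mathsf{Med}$, we compute $C(\varphi)$ and run the decision procedure for $\mathsf{Cheq}$ on it. By Theorem~\ref{thm:cheqmed}, $\varphi\in\mathsf{Med}$ if and only if $C(\varphi)\in\mathsf{Cheq}$, so this procedure would decide $\mathsf{Med}$. That contradicts \cite{Ak26, Paw26}, and hence $\mathsf{Cheq}$ is undecidable. This is exactly the Corollary of Theorem~\ref{thm:cheqmed}, instantiated with the now-established hypothesis.

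The only step requiring real care is making sure Theorem~\ref{thm:cheqmed} holds in both directions uniformly for every $\varphi$, since the reduction needs the full biconditional and not just one implication. I would double-check two points in its proof.
\begin{itemize}
\item In the ($\Rightarrow$) direction, the argument should really establish $y\nvDash_{\mathcal{M}}\varphi^{\psi_0}$ and then apply the Pruning Lemma~\ref{lem:pruning} to get $y\nvDash_{\mathcal{M}_0}\varphi$ on the pruned frame.
\item The pruned frame must be isomorphic to a Medvedev frame $\mathbf{P}_0(k)$, meaning the proper-face poset of the simplex cut out below the unique $p$-maximal point $x'$. It must also be nonempty, which uses that $y$ is not maximal.
\end{itemize}
If either point needed repair, I would fall back on the fact that $\mathsf{Med}$ is closed under the relevant p-morphic images (Lemma~\ref{lem:topmedframe}) to absorb any degenerate cases.

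For the notion of undecidability, I assume the standard one used in \cite{Ak26, Paw26}: membership of formulas in the logic is not a decidable set. Many-one reductions preserve this, so no further work is needed. As a remark, combined with the finite model property of $\mathsf{Cheq}$, this reproves Theorem~\ref{thm:cheq}, since a finitely axiomatizable logic with the finite model property is decidable.
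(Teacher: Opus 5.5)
Your proposal is correct and follows the paper's route exactly: the paper obtains the theorem as the Corollary of Theorem~\ref{thm:cheqmed} (the computable many-one reduction $\varphi\mapsto C(\varphi)$) combined with the undecidability of $\mathsf{Med}$ cited from \cite{Ak26, Paw26}. Both of your points of caution check out. The first is the typo $y\vDash_{\mathcal{M}}\varphi^{\psi_0}$ for $y\nvDash_{\mathcal{M}}\varphi^{\psi_0}$. The second is that the pruned frame is the interval $[y,x')\cong\mathbf{P}_0(k)$ with $k\geq 1$; its nonemptiness also uses $y\nvDash p$, which follows from $y\nvDash_{\mathcal{M}}\varphi^{\psi_0}$. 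So the fallback via Lemma~\ref{lem:topmedframe} is unnecessary, and that lemma would not apply anyway, since it concerns frames with a top rather than closure under p-morphic images.
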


\bigskip
\noindent\textbf{Acknowledgements.}
I thank Benedikt L\"owe for his support of my work on $\mathsf{Cheq}$ and for his expert guidance, Nick Bezhanishvili for invaluable advice, and Ga\"elle Fontaine for helpful suggestions. I would also thank Johan van Benthem, one of those who posed the open problem considered here, for his continued interest and discussions.

\newpage
\nocite{*} 

\bibliography{wpref.bib}

\end{document}